\documentclass[aps,
prx,
twocolumn,
superscriptaddress,
longbibliography,
nofootinbib]{revtex4-2}

\usepackage{graphicx}
\usepackage{bm}
\usepackage{physics}
\usepackage{xcolor}
\usepackage{enumitem}
\usepackage{amsmath,amsthm, amssymb}
\usepackage{mathrsfs,dsfont}
\usepackage[normalem]{ulem}
\usepackage{natbib}
\usepackage{bbm}
\usepackage{comment}
\usepackage{algorithm}
\usepackage{algpseudocode}
\usepackage[colorlinks=true]{hyperref}
\hypersetup{citecolor = blue}
\usepackage{tikz}
\usepackage{framed}

\def\equationautorefname~#1\null{Eq.~(#1)\null}

\newcommand{\appref}[1]{\hyperref[#1]{App.~\ref*{#1}}}

\newtheorem{prototheorem}{Theorem}[section]
\newtheorem{protodef}[prototheorem]{Definition}

\newtheorem{protolemma}[prototheorem]{Lemma}

\newtheorem{proposition}[prototheorem]{Proposition}

\newtheorem{fact}[prototheorem]{Fact}
\newtheorem{claim}[prototheorem]{Claim}

\newtheorem{protoconjecture}[prototheorem]{Conjecture}

\newtheorem{property}[prototheorem]{Property}

\newenvironment{theorem}{\colorlet{shadecolor}{gray!15}\begin{shaded}\begin{prototheorem}}
{\end{prototheorem}\end{shaded}}

\newenvironment{lemma}{\colorlet{shadecolor}{gray!15}\begin{shaded}\begin{protolemma}}
{\end{protolemma}\end{shaded}}

\newcommand{\N}{\mathbb{N}}

\newcommand{\id}{\mathds{1}}
\newcommand{\tmix}{t_{\rm mix}}
\newcommand{\trel}{t_{\rm rel}}
\newcommand{\transset}{\mathscr{T}}
\newcommand{\transgraph}{\mathscr{G}_{\rm T}}

\newcommand{\susk}

\begin{document}

\title{Optimal Decoding with the Worm}

\author{Zac Tobias}
\email{ztobias@stanford.edu}
\affiliation{Department of Physics, Stanford University, Stanford, California 94305, USA}
\author{Nikolas P.\ Breuckmann}
\email{niko.breuckmann@bristol.ac.uk}
\affiliation{School of Mathematics, University of Bristol, Bristol BS8 1UG, United Kingdom}
\author{Benedikt Placke}
\email{benedikt.placke@physics.ox.ac.uk}
\affiliation{Rudolf Peierls Centre for Theoretical Physics, University of Oxford, Oxford OX1 3PU, United Kingdom}

\begin{abstract}
    We propose a new decoder for ``matchable'' qLDPC codes that uses a Markov Chain Monte Carlo algorithm---called the \emph{worm algorithm}---to approximately compute the probabilities of logical error classes given a syndrome. The algorithm hence performs (approximate) \emph{optimal} decoding, and we expect it to be computationally efficient in certain settings.
    The algorithm is applicable to decoding random errors for the surface code, the honeycomb Floquet code, and hyperbolic surface codes with constant rate, in all cases with and without measurement errors.
    The efficiency of the decoder hinges on the mixing time of the underlying Markov chain. We give a rigorous mixing time guarantee in terms of a quantity that we call the \emph{defect susceptibility}. We connect this quantity to the notion of disorder operators in statistical mechanics and use this to argue (non-rigorously) that the algorithm is efficient for \emph{typical} errors in the entire decodable phase.
    We also demonstrate the effectiveness of the worm decoder numerically by applying it to the surface code with measurement errors as well as a family of hyperbolic surface codes.
    For most codes, the matchability condition restricts direct application of our decoder to noise models with independent bit-flip, phase-flip, and measurement errors. However, our decoder returns \emph{soft information} which makes it useful also in heuristic ``correlated decoding'' schemes which work beyond this simple setting. We demonstrate this by simulating decoding of the surface code under depolarizing noise, and we find that the threshold for ``correlated worm decoding'' is substantially higher than for both minimum-weight perfect matching and for correlated matching.
\end{abstract}

\maketitle

\section{Introduction}

Quantum error correction is widely believed to be indispensable for achieving practical, large-scale quantum computation~\cite{terhal2015quantum}. 
By encoding logical qubits into many physical qubits, quantum error correction enables us to suppress logical error rates exponentially as more physical qubits are added, provided that the physical error rate lies below a critical threshold~\cite{aharonov1997fault,kitaev1997quantum,kitaev1997quantum2}. 
Among the various quantum error-correcting codes proposed, the surface code stands out as one of the most promising candidates for near-term fault-tolerant architectures, owing to its high threshold, local stabilizer measurements on a two-dimensional lattice, and experimental feasibility~\cite{bravyi1998quantum,freedman2001projective,Dennis_2002,google2025below_threshold}.

A crucial component of any quantum error correction scheme is the \emph{decoder}, which is a classical algorithm that processes syndrome information extracted from stabilizer measurements and determines an appropriate recovery operation to restore the encoded quantum state.
In general, the decoding problem is computationally intractable~\cite{iyer2015hardness}.

However, for the surface code under certain noise models, efficient decoding algorithms exist.
The most widely used approach is \emph{minimum-weight perfect matching} (MWPM), which finds the most likely error consistent with the observed syndrome by mapping the problem to a weighted graph matching problem~\cite{Dennis_2002}.
While computationally efficient and practically successful \cite{Wang_2003}, MWPM is sub-optimal: it does not account for the degeneracy of the code, whereby multiple distinct error configurations may produce the same syndrome and lead to equivalent logical outcomes.

Recent work has pursued improvements to MWPM-based decoding by incorporating more realistic error models; notable examples include the Tesseract decoder~\cite{beni2025tesseract}, which uses $A^*$ search to find the most-likely error for general quantum LDPC codes, and the Hyperion decoder~\cite{wu2025minimum}, which generalizes the blossom algorithm to hypergraphs. For a comprehensive survey of decoding algorithms for surface codes, we refer the reader to~\cite{deMarti_iOlius_2024}.

What is left towards optimality in these cases is to incorporate the degeneracy. An \emph{optimal decoder}, also known as a \emph{maximum-likelihood decoder}, returns the recovery operation that maximizes the probability of successfully restoring the encoded state under the assumed noise model.
Such a decoder accounts for all errors consistent with the syndrome and selects the most probable logical equivalence class.

Full maximum-likelihood decoding of arbitrary error models can, in principle, be achieved using tensor network methods, which however are efficient only in certain restricted settings.
For the two-dimensional surface code with noiseless syndrome extraction, Bravyi, Suchara, and Vargo demonstrated efficient implementations using both matchgate simulation and matrix product states~\cite{bravyi2014efficient}, and the special case of decoding over the erasure channel in this case is possible even in linear time \cite{delfosse2020erasure_ml}.
Subsequent work by Chubb extended tensor network decoding to general 2D Pauli codes~\cite{chubb2021generaltensornetworkdecoding}, and recently showed how to generalize these techniques beyond 2D, enabling application to 3D codes and 2D codes with circuit-level noise, though the tensor contraction becomes significantly more challenging in higher dimensions~\cite{PRXQuantum.5.040303}. 
Despite these advances, maximum-likelihood decoding (MLD) of the surface code in the more realistic setting with imperfect measurements, remains an open problem.

\begin{figure}
  \centering
  \includegraphics[width=1.0\linewidth]{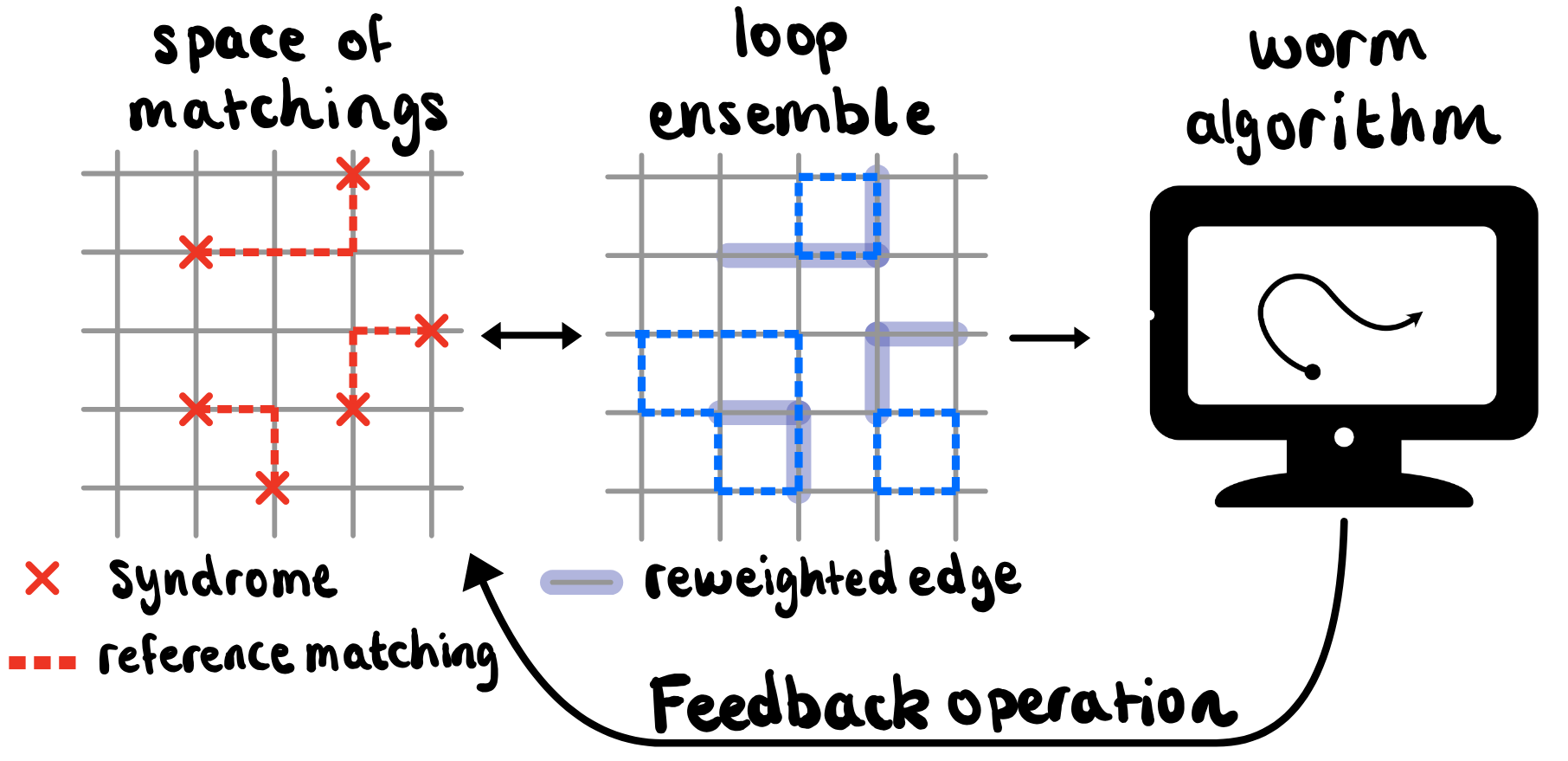}
  \caption{Schematic illustration of optimal decoding using the worm algorithm. A reference matching consistent with the observed syndrome is first identified and used to reweight the decoding graph. The worm algorithm then samples closed-loop configurations on the reweighted graph, thereby generating error chains consistent with the syndrome. These samples can be used to determine the most likely logical class of errors, enabling optimal decoding.}
  \label{fig:setup_schematic}
\end{figure}

In this work, we introduce the \emph{worm decoder} which performs (approximate) optimal decoding for any \emph{matchable} decoding problem.
By matchable, we mean a decoding problem whose detector error model~\cite{gidney2021stim} can be represented as a weighted graph in which each error mechanism triggers at most two detectors.
In such problems, detection events correspond to the boundaries of error chains, and standard graph-based decoders such as MWPM can be directly applied.
The surface code under many standard noise models, including imperfect measurements, falls within this class, but also the honeycomb Floquet code \cite{Hastings2021dynamically}, and hyperbolic surface codes with constant rate \cite{breuckmann2016constructions}.
The improvement over MWPM facilitated by the worm decoder is therefore distinct to that offered by e.g. Tesseract and Hyperion. While the codes and noise models to which it is applicable remain restricted (but include examples of interest), it effectively incorporates the degeneracy of the code.

The basic setup of the worm decoder is illustrated in \autoref{fig:setup_schematic}. At its core, it exploits a well-known Monte Carlo technique: the \emph{worm algorithm}~\cite{Broder1986worm,PROKOFEV1998253}.
The worm algorithm defines a Markov process whose stationary distribution is the Gibbs distribution over loop configurations\footnote{More precisely, it is a distribution over loops and almost-loops, as we discuss in detail below.} on a weighted graph.
We show that this property enables direct sampling from the distribution of error configurations conditioned on a given syndrome.
Sampling from this conditional distribution automatically yields a decoder by selecting a recovery operation consistent with the logical sector that arises most frequently among the samples. 
As a general result, we discuss exactly how such sampling based decoders approximate maximum likelihood decoding, showing in particular that under mild assumptions they have asymptotically the same error suppression even at finite bias and sampling error.

The key insight underlying the worm decoder is that the distribution of errors consistent with a given syndrome can be naturally reformulated as a distribution over loop configurations on an appropriately weighted graph.
Starting from the noise model encoded in the detector error model and an (arbitrary) initial reference matching consistent with the syndrome, one constructs a loop graph whose configurations correspond precisely to possible error chains.
Error chains differing by a closed loop are equivalent from the syndrome's perspective, but may belong to different logical equivalence classes if the loop is homologically non-trivial.

A critical feature of the worm algorithm is that it operates in an enlarged configuration space.
To efficiently sample closed loops, the state space is extended to include configurations with two open boundaries, i.e., the head and tail of the ``worm'' (see \autoref{fig:worm_process_schematic} for a schematic).\footnote{Although it is hard to tell which end is the head and which is the tail.}
These temporary open configurations enable non-local updates which can avoid bottlenecks and change the homology class of the loop configuration, allowing the Markov chain to freely explore all homological sectors.

The efficiency of Markov Chain Monte Carlo (MCMC) algorithms is generally limited by their mixing time, which roughly speaking sets the time the chain must be run to converge to the stationary distribution, as well as to obtain independent samples. 
It is well-appreciated that in many settings, the effectively non-local nature of the worm updates avoids ``bottlenecks'' leading to slow mixing of other MCMC algorithms such as standard Metropolis sampling with local updates. 
It underlies several provably efficient sampling schemes including the celebrated scheme by Jerrum, Sinclair, and Vigoda for approximating the permanent \cite{jerrum_sinclair_1989_permanent,jerrum2003book, jerrum_sinclair_vigoda_2004_permanent} as well as schemes for sampling the partition function of the Ising model \cite{Wang_2005,Collevecchio_2016}.
We generalize the latter results to our setting to obtain a rigorous mixing time guarantee in terms of a syndrome-dependent quantity which we call the `defect susceptibility'.
We connect this quantity to the physics of disorder operators from statistical mechanics \cite{fradkin2017disorder_ops} and use this to (non-rigorously) argue that the mixing time is indeed bounded by a polynomial in system size, for all but a vanishing fraction of errors, throughout the entire decodable phase. This means that the worm decoder has a vanishing error rate even with polynomial runtime, at any error rate below the threshold of the optimal decoder.

Beyond `just' returning a recovery operation, ability to sample from the conditional error distribution provides \emph{soft information} that is useful for further processing, such as effectively incorporating correlations \cite{fowler2013optimalcomplexitycorrectioncorrelated,pattison2021improved,bravyi2014efficient,wootton2021high,old2023generalized} or enabling post-selection-based schemes \cite{chen2025scalableaccuracygainspostselection,lee2025efficient}.

The remainder of this paper is organized as follows. 
In \autoref{sec:decoding}, we give a general introduction to decoding quantum codes, review the detector error model formalism, as well as the formal definition of the optimal decoding problem. We also argue, using a large deviation principle, that the performance of an approximate optimal decoder asymptotically approaches that of the exact optimal decoder even at finite resolution.
In \autoref{sec:worm_algorithm}, we introduce the worm algorithm and decoder, show that it performs optimal decoding for matchable codes, and comment on several ways to use the soft information provided by the worm. 
In \autoref{sec:fast_mixing}, we present our rigorous mixing-time guarantee and discuss its connection to disorder operators and their statistics.
In \autoref{sec:numerical_results}, we present numerical simulations of using the worm decoder to decode the surface code with measurement errors, and a family of hyperbolic surface codes with constant rate.
Finally, in \autoref{sec:correl_decoding} we show how to use the worm to improve decoding performance beyond just matchable codes. To this end, we introduce an iterative decoding scheme inspired by correlated matching, and benchmark it on a surface code under depolarizing noise.
We conclude in \autoref{sec:conclusion}.

\section{Decoding Quantum Codes\label{sec:decoding}}

In this section, we provide a brief introduction to the problem of decoding quantum codes applicable to a broad range of settings. This serves as an overview of our general setup and notation, and, for readers unfamiliar with this problem, also a general introduction to the topic.

In \autoref{sec:DEM}, we introduce the \emph{detector error model} (DEM) formalism, which provides a general framework for describing a wide class of error models for stabilizer codes, and is also the natural language used by simulation tools such as STIM~\cite{gidney2021stim}. 
Using this formalism, we discuss, in \autoref{sec:optimal_decoding} the process of \emph{optimal decoding}, which, for a given error model, corresponds to identifying the most likely logical equivalence class of errors consistent with an observed syndrome (or detection event).
In \autoref{sec:approximate_optimal}, we show that optimal decoding can be approximated by (approximately) sampling from the distribution of errors conditioned on the observed syndrome. In particular, we argue that even at finite bias and sample size, a sampling-based approximate optimal decoder has the same leading asymptotic error suppression as the perfect optimal decoder, when assuming the validity of a certain large deviation principle.

We then specialize, in \autoref{sec:matchable_decoding}, to \emph{matchable} decoding problems. Finally, in \autoref{sec:optimal_matchable}, we show that optimal decoding for matchable DEMs can be reformulated in terms of properties of an ensemble of \emph{cycles} on a weighted graph. This reformulation naturally suggests the use of the worm algorithm for decoding, which we introduce in \autoref{sec:worm_algorithm}.

\subsection{Detector error models}\label{sec:DEM}

The \emph{Detector Error Model} (DEM) \cite{gidney2021stim,Piveteau_2024} formalism provides an effective way to describe general noise models consisting of \emph{independent} error mechanisms, for codes that have measurement outcomes which are parity checks. Examples of such codes include stabilizer codes~\cite{gottesman1997stabilizercodesquantumerror}, Floquet codes~\cite{Hastings2021dynamically} and subsystem codes~\cite{Poulin_2005,kribs2006operatorquantumerrorcorrection}, however much of our discussion is focused around stabilizer codes.

The effectiveness of this formalism comes from focusing on the relationship between \emph{error mechanisms} and \emph{detectors}, rather than tracking faults on individual qubits or gates. Specifically, the DEM tracks how those faults affect the binary outcomes of stabilizer measurements, which can then be formulated in terms of \emph{detectors}. 

For the case of stabilizer codes without measurement errors, typically only a single round of measurements is required prior to decoding. In this case, the detector outcomes simply correspond to the stabilizer measurements. Further, if we consider a CSS code with i.i.d.\ bit-flip errors and no measurement errors, then the detectors in the DEM are simply the $Z$-stabilizer checks, and the error mechanisms correspond to $X$ errors on different qubits.

More generally, detectors describe products of measurement outcomes in a circuit that are deterministic in the absence of errors. A simple example is the same setup as before but in the presence of stabilizer measurement errors. In this case, one usually performs repeated rounds of stabilizer measurements~\cite{Dennis_2002}. This yields a space--time history of stabilizer outcomes (syndromes). In the DEM framework, a detector is said to be triggered when a stabilizer measurement differs between consecutive rounds. 

In general, the DEM defines a linear map between a space--time syndrome history and detection events, given by the collection of detector outcomes. To this end, we identify sets of error mechanisms (detectors) with their binary indicator vectors in $\mathbb{F}_2^{n}$ ($\mathbb{F}_2^{m}$).
The full DEM is then specified by a triple 

\[
(H,\, p,\, L),
\]
where
\begin{itemize}
  \item $H \in \mathbb{F}_2^{m \times n}$ is the \emph{detector matrix},
  specifying which of the $n$ error mechanisms flip which of the $m$ detectors;
  entry $H_{ij} = 1$ indicates that mechanism $j$ flips detector $i$.
  \item $p = (p_1,\dots,p_n)$ with $p_j \in [0,1]$ is the vector of
  \emph{error probabilities} associated with each of the independent error mechanisms.
  \item $L \in \mathbb{F}_2^{k \times n}$ is the \emph{logical map}, where column $L_{\ast j}$ indicates which of the $k$ logical representatives are flipped by mechanism $j$. We will denote each column of $L$ as $L_i \in \mathbb{F}_2^n$ for $i =  1,..,k$.
\end{itemize}

Given a detector outcome $s \in \mathbb{F}_2^m$, the joint probability of
observing $s$ and a logical vector $\gamma \in \mathbb{F}_2^k$ is
\begin{equation}\label{eq:prob_sector}
    \mathbb{P}(s,\gamma)
  = \sum_{\substack{x \in \mathbb{F}_2^n \\ Hx = s,\, Lx = \gamma}}
    \mathbb{P}(x)
\end{equation}
where $x_j = 1$ corresponds to the $j^\text{th}$ error mechanism occurring, and $x_j = 0$ otherwise. Due to the independence of error mechanisms
\begin{equation}\label{eq:error_prob}
    \mathbb{P}(x)
  = \prod_{j=1}^n (1-p_j)^{1-x_j} p_j^{x_j} \propto \prod_{j=1}^n  \Bigl(\frac{p_j}{1-p_j}\Bigr)^{x_j}.
\end{equation}

\subsection{Optimal decoding}\label{sec:optimal_decoding}

The process of \emph{optimal decoding} corresponds to determining the logical $\gamma^*$ which arises with the highest probability over all logicals $\gamma \in \mathbb{F}_2^k$, given an error model described by the DEM $(H,p,L)$, and a detector outcome $s \in \mathbb{F}_2^m$. Specifically this corresponds to finding
\begin{equation}\label{eq:ml_decoding}
    \gamma^*(s)
  = \arg\max_{\gamma \in \mathbb{F}_2^k} \mathbb{P}(s,\gamma),
\end{equation}
that is, the most likely logical configuration given the measured detector pattern. Once this logical has been identified, the appropriate recovery operation can be applied to return the system back to the code space.

In contrast, \emph{most-likely error} (MLE) decoding focuses on identifying the single physical error consistent with the observed detector outcome that occurs with highest probability. Rather than optimizing directly over logicals, this approach first selects the most probable error configuration and then infers its logical effect.

Specifically, MLE decoding determines
\begin{equation}\label{eq:mle_error}
    x^{(\mathrm{MLE})}(s)
  = \arg\max_{\substack{x \in \mathbb{F}_2^n \\ Hx = s}} \mathbb{P}(x),
\end{equation}
and assigns the corresponding logical
\begin{equation}\label{eq:mle_decoding}
    \gamma^{(\mathrm{MLE})}(s)
  = L \cdot x^{(\mathrm{MLE})}(s).
\end{equation}
After which the appropriate recovery operation is performed.

\subsection{Approximate optimal decoding}\label{sec:approximate_optimal}

Exact optimal decoding corresponds to computing the total probability of all physical error configurations consistent with a given detector outcome and logical vector $\gamma$, as in \autoref{eq:prob_sector}. This is generally intractable, even if generation and computation of the probability of any single such configuration is efficient.
Intuitively, this is because the number of such errors typically grows exponentially with system size.

This difficulty can be circumvented by performing \emph{approximate} optimal decoding, that is one computes the posterior probability $\mathbb{P}(s, \gamma)$ for all $\gamma$ approximately with some bounded error and makes a decoding decision based on these estimates. 

To be explicit, we consider a setup where one can approximately sample from this distribution, that is we take independent samples from a distribution $\mathbb{P}_{\epsilon}(x)$, that is within total variation distance $\epsilon$ from the distribution $\mathbb{P}(x)$. 
Then, we can obtain an estimator $\widehat{P}_{\epsilon, M}(s, \gamma)$ using $M$ samples such that with probability $1-\alpha$
\begin{equation}\label{eq:estimator_bound}
    \abs{\widehat{P}_{\epsilon, M}(s, \gamma) - \mathbb{P}(s, \gamma)} \leq 
        \epsilon  + \sqrt{\frac{\log(2/\alpha)}{2 M}}.
\end{equation}
This follows directly from the definition of the total variation distance, and Hoeffding's inequality.

In this paper, we will use a Markov chain to construct the estimator $\widehat{P}_{\epsilon, M}(s, \gamma)$. In this case, we can guarantee a small bias $\epsilon$ if we initially run the chain for its \emph{mixing time} $\tmix(\epsilon)$ before starting to take the samples. Then, because of correlation between successive steps of the chain, if we take samples at $T$ successive steps, a version of \autoref{eq:estimator_bound} remains true, but with the number of samples $M$ being replaced by an \emph{effective number} $\Tilde M = T / (\trel + 1/2)$, where $\trel$ is the \emph{relaxation time} of the chain (see e.g. Theorem 1 in Ref. \cite{leon2004hoeffding}). For the formal definition of $\tmix$, $\trel$ and the relation between the two, we refer the reader to \appref{app:mixing_time_proof}.

This means, in particular, to estimate logical class probabilities within $\delta$ of the true value with probability $1-\alpha$, we need $\epsilon < \delta$ and a number of steps $T$ of the chain such that
\begin{equation}\label{eq:relaxation_time}
    T \geq (\trel+\tfrac{1}{2}) \frac{1}{2(\delta - \epsilon)^2} \log(\frac{2}{\alpha}).
\end{equation}
Importantly, even if we want $\alpha$ to be exponentially small in the distance $\alpha = \exp(-\Omega(d))$, the above only grows linearly with $d$. This is important, since otherwise the finite-sample-size error could dominate the logical error rate. 

The parameter $\delta$ controls how well we are approximating the optimal decoder, which is exactly realized in the limit $\delta\to0$. Any finite $\delta$ sets the resolution at which the likelihood of different classes can be distinguished. A natural question is how a finite resolution $\delta > 0$ influences the performance of the decoder. Remarkably, we expect the error suppression for $\delta < 1/2$ to asymptotically be \emph{the same} as that of the optimal decoder, as we argue in the following.

Consider for simplicity the case of a single logical qubit $k=1$. In this case, there are two classes of errors, which we denote by $\gamma_{0/1}$ and we can write 
\begin{align}
\mathbb{P}(s,\gamma_0) &= \frac{Z(s, \gamma_0)}{\sum_{\gamma} Z(s,\gamma)} \\
    &= \frac{1}{1 + e^{-\beta f_{\rm dw}}}
\end{align}
where $Z(s, \gamma)\propto \mathbb{P}(s, \gamma)$ is the partition function of errors conditioned on the syndrome $s$ in the logical class $\gamma$. In the second line we have defined the \emph{domain wall free energy} $f_{\rm dw}(s) := -\tfrac{1}{\beta}\log[Z(s, \gamma_1) / Z(s, \gamma_0)]$. Without loss of generality, we can assume that $\gamma_0$ is always the class of the physical error, which means that decoding fails iff $f_{\rm dw}(s) \leq 0$. The domain wall free energy depends on the syndrome $s$, and hence is a random variable with distribution $P(f_{\rm dw})$. We can write the logical failure probability as a function of this distribution as
\begin{equation}
    \mathbb{P}_{\rm fail}^{\rm (opt)}  = \int_{-\infty}^0 P(f_{\rm dw}) \dd f_{\rm dw}.
\end{equation}

The decodable phase of the code corresponds to the ordered phase in a corresponding statistical mechanics model \cite{Dennis_2002}. In this case, we expect the domain wall free energy to scale with the distance of the code $f_{\rm dw} = \sigma d$. In particular, the \emph{domain wall tension} $\sigma$ is an intensive random variable, and one can argue that $f_{\rm dw}$ obeys a large-deviation principle\footnote{As pointed out in Ref. \onlinecite{chen2025scalableaccuracygainspostselection}, more formally one would view $f_{\rm dw}$ as a sum of weakly correlated random variables (and hence as a sequence of measures obeying an LDP) since for the domain wall free energy to be anomalously low in an ordered phase it has to pass through a sequence of `rare regions' of disorder, that is regions with an atypically small domain-wall tension.} (LDP) for $\sigma$ below its typical value $\bar \sigma$ \cite{chen2025scalableaccuracygainspostselection}. By this, we mean that for $\sigma < \bar \sigma$
\begin{equation}\label{eq:ldp}
    P(f_{\rm dw} = \sigma d) = e^{-I(\sigma) \cdot d + o(d)}
\end{equation}
where
\begin{equation}
    I(\sigma) = \lim_{d\to\infty} -\frac{1}{d}\log P(f_{\rm dw} = \sigma d)
\end{equation}
is called \emph{rate function}, and $o(d)$ denotes the presence of subleading contributions in the exponent. One also says that the left and right hand side of \autoref{eq:ldp} are logarithmically equivalent \cite{varadhan2008ldp,burenev2025ldp_intro}.
Then, by the Laplace principle for $\rho < \bar\sigma$
\begin{align}
    \int_{-\infty}^{\rho d} P(f_{\rm dw}) \dd f_{\rm dw} 
    &= \exp(-\left(\inf_{\sigma\leq \rho} I(\sigma)\right) \cdot d + o(d))\\
    &= e^{-I(\rho)\cdot d + o(d)}
\end{align}
where in the second line, we have assumed that the rate function is convex and monotonically decreasing (for $\sigma \leq \bar \sigma$), which is standard in statistical physics \cite{DemboZeitouni1998LargeDeviations}.
Setting $\rho = 0$ in the above, this immediately yields the result that for large $d$, the error rate of the optimal decoder is dominated by syndromes with \emph{zero} domain-wall tension.

Returning to our \emph{approximate} optimal decoder with resolution $\delta$, it will fail if the difference between class probabilities is less than $\delta$. This means that 
\begin{align}
    \mathbb{P}_{\rm fail}^{(\delta)} 
    &= \int_{\infty}^{2{\rm arctanh}(2\delta)} P(f_{\rm dw}) \dd f_{\rm dw} \\
    &= \exp(-I\left(\frac{2 {\rm artanh}(2 \delta)}{d}\right) \cdot d + o(d)) \\
    &= e^{-I(0^+) \cdot d + o(d)}.
\end{align}
Assuming that $I(0^+) = I(0)$, the above means that the approximate optimal decoder has asymptotically the same error suppression as the optimal decoder, up to sub-leading corrections.

We expect the argument to generalize with minor modifications to the case of finitely many logical sectors, $k=O(1)$, but the case where $k$ grows with system size may be more subtle. However, we note that for hyperbolic surface codes with constant rate, we observe in our numerics a strong concentration of the probability on very few logical classes (see \autoref{sec:numerical_results}). This may be interpreted as evidence that even in this case a version of the above argument applies.

We have introduced the DEM formalism, and discussed the notion of both exact and approximate optimal decoding and their relation, all in some generality. In what follows, we restrict our attention to \emph{matchable} decoding problems.

\subsection{Matchable decoding problems}\label{sec:matchable_decoding}

Using the DEM formalism introduced in \autoref{sec:DEM}, we define what is meant by a \emph{matchable} decoding problem. We show that for such decoding problems we can map the DEM onto a decoding graph, which naturally allows graph based decoders, such as minimum weight perfect matching (MWPM), to be used. Finally, we introduce a prescription for adding a virtual boundary node to a decoding graph whenever there are ``open’’ edges. This is a necessary step for certain graph based decoding algorithms, including the worm decoder, which will be introduced in \autoref{sec:worm_algorithm}.

A DEM is said to be \emph{matchable} when every error mechanism triggers at most two detectors. Equivalently, every column of the detector matrix \(H\) has support size at most two. For each column \(j\), define
\begin{equation}
    \operatorname{supp}(H_{\ast j})
  = \{\, i \in [m] \;|\; H_{ij} = 1 \,\}.
\end{equation}
The model is matchable iff
\begin{equation}
    |\operatorname{supp}(H_{\ast j})| \in \{1,2\},
    \qquad \forall j \in [n].
\end{equation}

Under this condition, we can map \((H,p)\) onto a weighted, undirected graph \(G = (V,E,w)\), referred to as a \emph{decoding graph}. The vertices of the graph represent detectors, while the edges represent error mechanisms. It is important to note, however, that as defined here \(G\) may contain self loops\footnote{the self loops correspond exactly to errors which trigger exactly one detector} (i.e., it is not necessarily a simple graph), which requires special treatment in most graph-based decoding algorithms. We discuss how this is handled in
\autoref{sec:boundary_node}. 

Each edge of the decoding graph is weighted by the
likelihood ratio of the corresponding error mechanism,
\begin{equation}\label{eq:graph_weights}
    w_e = \frac{p_e}{1 - p_e},
    \qquad e \in E.
\end{equation}

A subset of activated error mechanisms in the DEM picture corresponds to a subset
of edges, \(A \subseteq E\), of the decoding graph. In this context, the subset of edges is referred to as an \emph{error chain}. If a vertex in the decoding graph corresponds to a triggered detector, then we say a \emph{defect} is located at the vertex.

The logical maps of the DEM, given by $\{ L_i \in \mathbb{F}_2^n \}$, are naturally mapped to
subsets of edges on the graph, $L_i \mapsto \tilde{L}_i \subseteq E$.

Given an error chain, $A \subseteq E$, the value of the $i^\text{th}$ logical outcome is
\begin{equation}\label{eq:graph_logical}
    \tilde{L}_i[A] := |A \oplus \tilde{L}_i| \bmod 2 \in \{0,1\},
\end{equation}
where $\oplus$ corresponds to the symmetric difference (XOR). This quantity hence corresponds to counting whether the number of shared edges between $A$ and $L$ is even or odd. We denote the full logical vector by $\tilde{L}[A] = \{\tilde{L}_i[A]\}_{i=1}^k$.

When dealing with a \emph{simple} decoding graph (a decoding graph with no self loops), the defect configuration, \(S\) generated by an error chain $A$, is precisely its graph boundary:
\begin{equation}\label{eq:boundary_operator}
    S = \partial A := \{\, v \in V : d_A(v) \text{ is odd} \},
\end{equation}
where $\partial$ is referred to as the \emph{boundary operator}, and
\begin{equation}
    d_A(v) := |N_A(v)|, 
    \quad 
    N_A(v) := \{\, u \in V : uv \in A \,\}.
\end{equation}
That is, the defects in the decoding graph are the set of vertices in $V$ that
are connected to an odd number of edges in the error chain $A$.

This property naturally leads to matching-based decoders, which consider the space of error chains consistent with a given defect configuration. A canonical example is minimum-weight perfect matching (MWPM), which implements MLE decoding as introduced in \autoref{eq:mle_decoding}. In this setting, decoding amounts to identifying the most likely error chain consistent with the observed defects,
\begin{equation}\label{eq:mwpm_decoding}
    A^{(\mathrm{MWPM})}(S)
  = \arg\max_{\partial A = S} \mathbb{P}(A).
\end{equation}
While the problem of decoding via MLE is intractable in general \cite{iyer2015hardness}, for matchable decoding problems it can be done efficiently, since it corresponds to finding the minimum-weight path between a given set of vertices on a graph. To this end, one first uses Dijkstra's algorithm to construct the shortest paths between all pairs of triggered detectors, and then Edmond's Blossom algorithm \cite{Edmonds_1965} to identify which combination of pairs minimizes total weight. 
MWPM often provides good performance \cite{Wang_2003}, but is sub-optimal because it does not account for multiplicity of equivalent errors within the same logical class [cf. \autoref{eq:ml_decoding}].

\subsection{Removing self-loops by adding boundary nodes}\label{sec:boundary_node}

The assumption that every error mechanism triggers \emph{exactly} two detectors does not hold for all matchable decoding problems. In particular, it does not hold even for independent phase-- and bit--flip noise on topological codes with open boundary conditions---for example surface codes.     

In this case it is not guaranteed that there's an even number of defects, as they no longer need to be created in pairs. Technically speaking, nodes at the boundary have self-loops, and this leads to complications when implementing certain matching based decoders. To get around this, we can add a \emph{boundary node} to the decoding graph, creating a new graph without self-loops.\footnote{there are multiple variants of doing this, see \cite{deMarti_iOlius_2024} for another example that is slightly different from our version.}

Whenever a matchable DEM contains one or more columns of support size one, i.e.
\begin{equation}
    |\operatorname{supp}(H_{\ast j})| = 1
    \quad\text{for some } j,
\end{equation}
the decoding graph must be augmented with an additional \emph{boundary node}. 

The construction is straightforward. Let $A_B \subset E$, be the subset of edges in decoding graph connected only to a single vertex. We choose to connect all of the edges to a new boundary node, $v_B$, while keeping edge weights the same. This process forms a simple decoding graph, with one more additional node than before. 

In order to determine whether a defect is present at the boundary vertex $v_B$, the total parity of the detectors triggered has to be measured. If the detector outcome \(s \in \mathbb{F}_2^m\), satisfies \(|s| \bmod 2 = 1\) where $|s|$ corresponds to the Hamming weight, then there's a defect present at the boundary detector. This ensures that the total number of defects in the augmented decoding graph is always even, guaranteeing a global matching exists.

In \autoref{fig:boundary_detector} we show how this procedure is performed for the surface code with i.i.d.\ bit-flip noise and perfect measurements.

\begin{figure}
  \centering
    \includegraphics[width=0.9\linewidth]{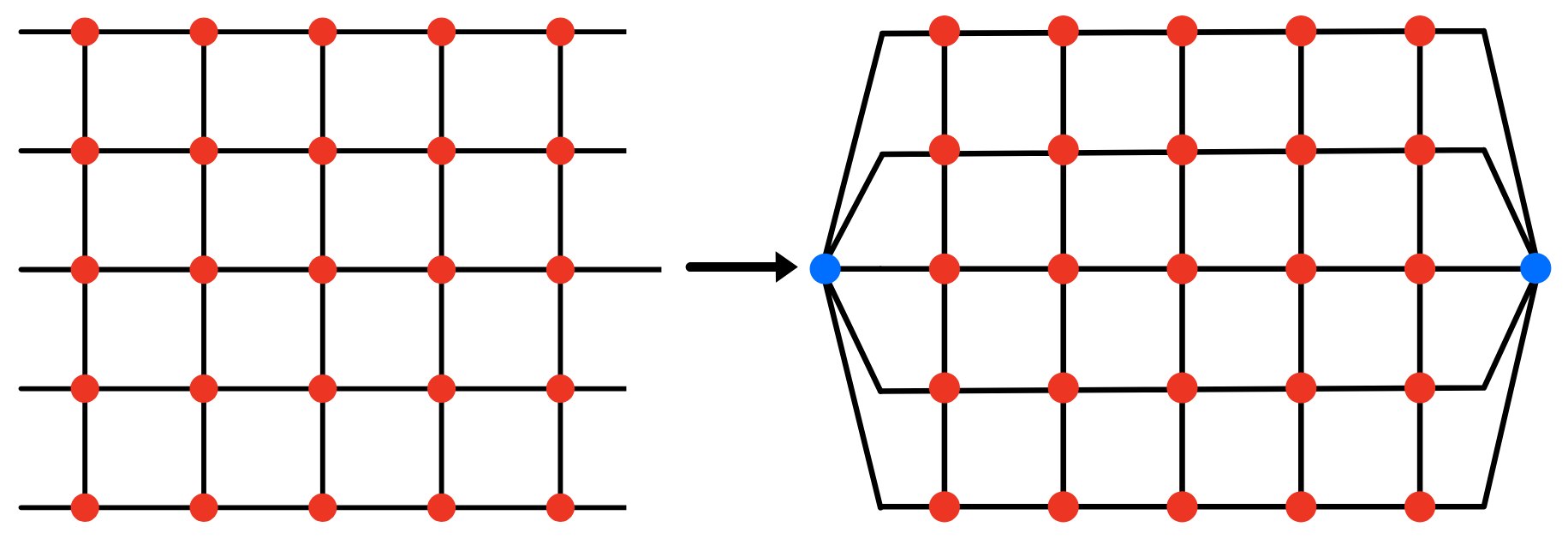}
   \caption{Illustration of the addition of an effective boundary detector for the DEM of the surface code with i.i.d.\ bit-flip noise. Red circles correspond to detectors (stabilizers), edges correspond to error mechanisms (qubits) and the blue circle correspond to the additional boundary node added. Note that both the left and right blue circles correspond to the \emph{same} boundary node.}
    \label{fig:boundary_detector}
\end{figure}

In all subsequent analysis, whenever necessary, it is assumed that this procedure has been performed, and hence we are always dealing with a simple decoding graph. As a result, \autoref{eq:boundary_operator} always holds.

Having introduced the decoding-graph formalism for matchable DEMs, we now turn to the problem of optimal decoding in this setting.

\subsection{Optimal decoding for matchable decoding problems}\label{sec:optimal_matchable}

In this section, we consider the problem of optimal decoding for matchable DEMs.
For such models, once a detection event is observed, the remaining uncertainty
in the error can be described in terms of the addition of cycles, that is,
configurations of edges with no boundary. Using this observation, we show that
optimal decoding can be reformulated in terms of properties of an ensemble of
cycles on an appropriately weighted graph. 

Consider a decoding graph \(G = (V,E,w)\), where each edge \(e \in E\) is assigned
a positive weight \(w_e \in \mathbb{R}_{>0}\), given by
\autoref{eq:graph_weights}. Since all error mechanisms in the DEM are assumed to
be independent, \autoref{eq:error_prob} translates directly to
\begin{equation}\label{eq:loop_measure}
    \mathbb{P}(A) \propto \prod_{e \in A} w_e .
\end{equation}

As discussed in \autoref{sec:matchable_decoding}, an error chain \(A\) has defects
located at its boundary \(S = \partial A\). The joint probability of observing a 
logical configuration \(\Gamma\) together with a detector outcome \(S\) is
therefore
\begin{equation}
    \mathbb{P}(S,\Gamma)
      = \sum_{\substack{A \subseteq E \\ \partial A = S,\, \tilde{L}[A] = \Gamma}}
        \mathbb{P}(A).
\end{equation}

For a fixed detector outcome \(S\), we can define a one-to-one correspondence
between the set of error chains on the decoding graph consistent with \(S\) and
the set of cycles
\begin{equation}
    \mathcal{C}_0 := \{\, A \subseteq E : \partial A = \emptyset \,\}.
    \label{eq:cycle_defintion}
\end{equation}
This correspondence is obtained by fixing a \emph{reference matching}
\(M \subseteq E\) satisfying \(\partial M = S\). Any error chain \(A\) with
boundary \(S\) can then be written uniquely as
\begin{equation}\label{eq:reference_cycle}
    A = C \oplus M, \qquad C \in \mathcal{C}_0.
\end{equation}

We can now define a probability distribution over the space of cycles,
denoted by \(\mathbb{P}_M(\cdot)\), which is directly related to the
distribution of error chains conditioned on the observed detection event.
Specifically,
\begin{equation}\label{eq:cycle_distribution}
   \mathbb{P}_M(C)
     := \mathbb{P}(C \oplus M)
     \propto \prod_{e \in C \oplus M} w_e
     \propto \prod_{e \in C} \tilde{w}_e ,
\end{equation}
where
\begin{equation}\label{eq:reweight_edges}
    \tilde{w}_e =
    \begin{cases}
        w_e, & \text{if } e \notin M, \\[4pt]
        w_e^{-1}, & \text{if } e \in M .
    \end{cases}
\end{equation}

This allows us to consider an ensemble of cycles on a reweighted graph
\(\tilde{G} = (V,E,\tilde{w})\), with edge weights \(\{\tilde{w}_e\}\) given by
\autoref{eq:reweight_edges}. In this representation, the problem of optimal
decoding can be formulated entirely in terms of properties of this ensemble of
cycles.

The probability of observing a logical configuration \(\Gamma\) given a detection event \(S\) is 
\begin{equation}\label{eq:p_lambda_s}
    \mathbb{P}(\Gamma \mid S)
     := \mathbb{P}_M(\Gamma_M)
      = \sum_{\substack{C \subseteq \mathcal{C}_0 \\ \tilde{L}[C] = \Gamma_M}}
        \mathbb{P}_M(C),
\end{equation}
where \(\Gamma = \Gamma_M \oplus \tilde{L}[M]\) and the reference matching satisfies \(\partial M = S\). 

In this formulation, optimal decoding corresponds to finding
\begin{equation}\label{eq:lambda_argmax}
    \Gamma^*_M = \arg\max_{\Gamma_M} \mathbb{P}_M(\Gamma_M),
\end{equation}
which is related to the most likely logical configuration on the original
decoding graph, \(\Gamma^*\), via
\begin{equation}
    \Gamma^* = \Gamma^*_M \oplus \tilde{L}[M].
\end{equation}
Once \(\Gamma^*\) has been identified, the corresponding recovery operation can be applied to perform decoding.

To summarize, we have shown that, for all matchable decoding problems, the ensemble of error chains conditioned on a detection event is equivalent to an ensemble of cycles on a weighted graph. 
In particular, optimal decoding can be reformulated as the task of maximizing the conditional probability $\mathbb{P}(\Gamma \vert S)$ over classes of cycles $\Gamma$.

This reformulation naturally leads to the worm process, a Markov chain that samples exactly the distribution in \autoref{eq:loop_measure}. More precisely, it acts on the combined space of cycles and near-cycles on weighted graphs, and its stationary distribution, when marginalized over near-cycles, coincides with \autoref{eq:loop_measure}.
Assuming fast mixing of the process, and sufficient weight on cycles, this will enable efficient approximate optimal decoding for matchable codes.

Before introducing the worm decoder in full detail in \autoref{sec:worm_algorithm}, we briefly relate the construction presented above to some of the well-known mappings between decoding problems and statistical mechanics models.

\subsection{Relation to statistical mechanics mappings}

This reformulation of the ensemble of errors in terms of an ensemble of cycles naturally connects to well-known statistical mechanics mappings for matchable decoding problems. Two canonical examples are the toric code with independent bit-flip noise, with and without measurement errors, which map respectively to the random-bond Ising model (RBIM) and the three-dimensional $\mathbb{Z}_2$ random plaquette gauge theory \cite{Dennis_2002, Wang_2003, Chubb_2021}.

In both of these disordered systems, the elementary excitations take the form of closed loops (cycles): domain walls in the RBIM and flux loops in the three-dimensional gauge theory. This loop structure is a general feature of statistical mechanics mappings for matchable decoding problems, and corresponds directly to the cycle condition given in \autoref{eq:reference_cycle}.

Moreover, the quenched disorder present in these statistical mechanics models corresponds directly to the reweighting procedure defined in \autoref{eq:reweight_edges}. In particular, the inversion of edge weights corresponds to the presence of negative couplings in the statistical mechanics formulation.

While the statistical mechanics mapping is more general \cite{Chubb_2021, aitchison2025spacetime_spins}, efficient sampling from the Gibbs distribution of the corresponding disordered spin model is only possible in special cases. We note that there are distinct ways to use the worm algorithm for sampling of partition functions. For the Ising model, if all couplings are ferromagnetic, one can use the high-temperature expansion to write the partition function in terms of a loop ensemble with positive weights \cite{Prokof_ev_2001}. This works on \emph{any} graph but does not work in the case where some couplings are antiferromagnetic (which is the case relevant for decoding). Another possibility, which is equivalent to the mapping presented in this work, is to use the low-temperature expansion to write the partition function as a sum over excitations on top of the ground state. In this case, the weights are always positive but one only obtains a loop ensemble (and can then use the worm) if the excitations are loop--like. It is easy to see that in general, the loop-like nature of excitations in the statistical mechanics model is equivalent to the corresponding decoding problem being matchable.

\section{The Worm Decoder\label{sec:worm_algorithm}}

In this section, we introduce the \emph{worm decoder}, a decoding algorithm that performs \emph{optimal} decoding for \emph{all} matchable decoding problems. 

Before introducing the algorithm itself, it is useful to briefly summarize what has been discussed so far. In \autoref{sec:DEM}, we introduced the detector error model (DEM) formalism for describing a broad class of noise models across a wide range of codes.

We then specialized, in \autoref{sec:matchable_decoding}, to \emph{matchable} decoding problems, in which each independent error mechanism triggers at most two detectors. For such problems, the DEM can be mapped onto a weighted graph \(G = (V,E,w)\), where vertices correspond to detectors, edges correspond to error mechanisms, and edge weights encode their likelihood ratios.

In \autoref{sec:optimal_matchable}, we showed that for matchable decoding problems, the ensemble of error chains consistent with an observed detection event can be mapped onto an ensemble of cycles on a reweighted graph \(\tilde{G} = (V,E,\tilde{w})\). As a consequence, the task of optimal decoding can be reformulated entirely in terms of the properties of this ensemble of cycles.

Optimal decoding can therefore be performed by sampling from this ensemble of cycles, which in turn corresponds to sampling over error chains consistent with the observed detection event. This is precisely the role played by the worm algorithm: a Markov chain Monte Carlo method that effectively samples from ensembles of cycles on weighted graphs.

In principle, many different algorithms could be used to sample from such ensembles of cycles. A simple approach is to employ local updates \cite{wootton2021high, hutter2014efficient, wichette2025partitionfunctionframeworkestimating, aitchison2025spacetime_spins}, such as flipping elementary plaquettes, which is equivalent to Glauber--type dynamics in the corresponding statistical mechanics models. However, such local-update schemes are typically severely limited in their applicability for disordered spin systems, as they suffer from severe slowdown in the presence of an unfavorable energy landscape. This problem is even more severe when one has to sample over distinct homology classes, which are only connected to each other via large non-local moves with low acceptance rates in this setting. 
This latter point is less problematic in the case where $k$ is small, and one can do sampling within each of the $2^k$ logical sectors separately. However, estimating the (relative) logical class probabilities then requires more sophisticated schemes involving e.g. thermodynamic integration \cite{hutter2014efficient,aitchison2025spacetime_spins} or Wang-Landau sampling \cite{wichette2025partitionfunctionframeworkestimating}. Efficiency of these schemes to the best of our knowledge has not been rigorously analyzed, but they involve significant overheads and would most likely be prohibitively expensive for high-rate codes. 

The key idea underlying the worm algorithm is that sampling efficiency can be dramatically improved by enlarging the configuration space beyond closed cycles alone, and then considering the marginal distribution restricted to cycles. This enlargement enables non-local loop updates that are inaccessible to purely local dynamics, allowing the Markov chain to rapidly explore different loop configurations. It has been shown that the worm algorithm can effectively circumvent bottlenecks present in local dynamics \cite{Wang_2005,Collevecchio_2016}, and it also naturally enables sampling across distinct logical sectors. For the present case, we will discuss the mixing time of the worm, and the resulting time complexity of the worm decoder in \autoref{sec:fast_mixing}.

Many variants of the worm algorithm exist in the statistical physics and computer science literature. In \autoref{sec:worm_algo_subsec}, we focus on the \emph{symmetric} worm algorithm, which is conceptually simple, most straightforward to analyze, and serves as a clear reference point. In \appref{App:worm_variants}, we discuss two additional variants that can also be used within the worm decoder framework. To obtain our numerical results on decoding performance, we use a slight modification of the symmetric worm algorithm, referred to as the \emph{directed} worm algorithm, which is described in \appref{app:directed_worm}.

\subsection{The Worm Algorithm}\label{sec:worm_algo_subsec}

We now introduce the (symmetric) worm algorithm~\cite{Prokof_ev_2001}, which forms the basis of the worm decoder introduced in \autoref{sec:worm_decoder}.

The worm algorithm is a Markov Chain Monte Carlo (MCMC) method acting on edge configurations of a weighted graph
\(\tilde{G} = (V,E,\tilde{w})\). 
The main idea of the algorithm is that it operates on the enlarged configuration space 

\begin{equation}\label{eq:chain_defintion}
    W := \mathcal{C}_0 \cup \mathcal{C}_2,
\end{equation}
where \(\mathcal{C}_0\) denotes the space of cycles, defined in \autoref{eq:cycle_defintion}, and
\begin{equation}
    \mathcal{C}_2 = \{ A \subseteq E : |\partial A| = 2 \},
\end{equation}
is the space of chains with exactly two boundary vertices, referred to as \emph{virtual defects}.

The basic update proceeds as follows. A vertex is selected at random, creating a pair of virtual defects. The two virtual defects then perform a biased random walk on the weighted graph until they encounter one-another, at which point the pair annihilates, resulting in a closed-loop update. By temporarily allowing configurations with open boundaries, the algorithm enables non-local loop updates that are
inaccessible to purely local dynamics. This allows the Markov chain to efficiently explore the space of cycle configurations, including transitions between distinct logical sectors. A schematic illustration of this process is shown in \autoref{fig:worm_process_schematic}.

\begin{figure}
  \centering
  \includegraphics[width=0.9\linewidth]{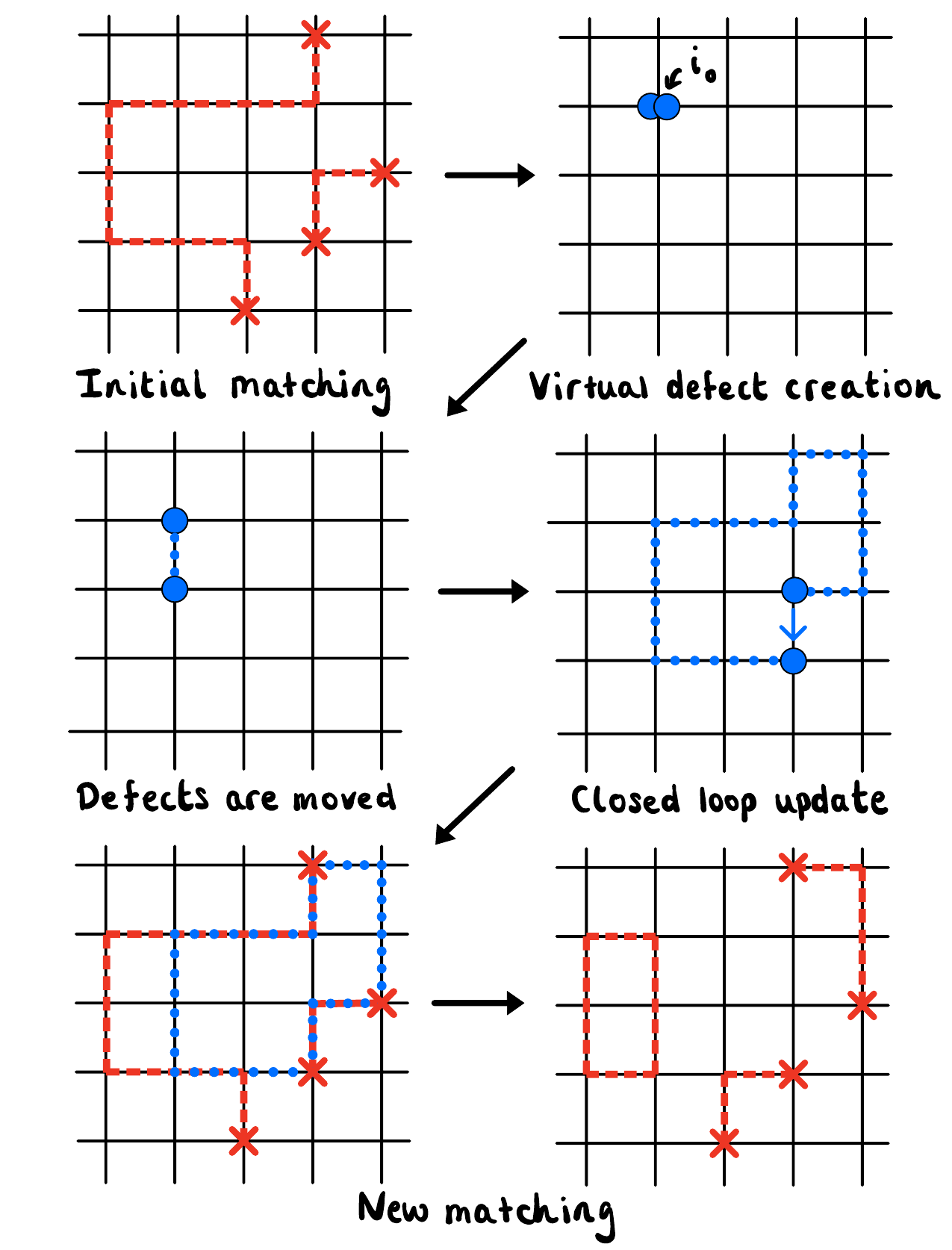}
  \caption{Schematic representation of the worm process. A pair of virtual defects is
  created at a random vertex. One performs a random walk on the graph until it
  meets the other, forming a loop update. Red crosses denote the observed syndrome.}
  \label{fig:worm_process_schematic}
\end{figure}

A formal description of this update procedure (including post-selection onto the space of cycles $\mathcal C_0$) is given in \autoref{alg:wormprocess}.

By construction, the stationary distribution of the worm process is given by the Prokof'ev--Svistunov (PS) measure~\cite{Prokof_ev_2001}. For \(A \in \mathcal{W}\),
\begin{subequations}\label{eq:PS_measure}
\begin{equation}
    \pi(A)
      = \frac{\Psi(A)\, f(A)}
             {\abs{V}\,f(\mathcal{C}_0) + 2\,f(\mathcal{C}_2)},
\end{equation}
where 
\begin{align}
f(A) &\equiv f_{\{w\},M}(A) = \prod_{e \in A} \tilde{w}_e, \\
f(\mathcal{S}) &= \sum_{A \in \mathcal{S}} f(A)
\end{align}
and
\begin{equation}
\Psi(A) :=
\begin{cases}
\abs{V}, & A \in \mathcal{C}_0, \\[3pt]
2, & A \in \mathcal{C}_2.
\end{cases}
\end{equation}
\end{subequations}

By restricting to the marginal distribution on \(\mathcal{C}_0\), the PS measure yields unbiased samples from the desired closed-loop ensemble \(\mathbb{P}(C) \propto \prod_{e \in C} \tilde{w}_e\). Therefore by starting with a DEM describing a matchable decoding problem, constructing the graph $G$, finding a reference matching $M$ to construct $\tilde{G}$, it is possible to use the worm algorithm to faithfully sample over error chains consistent with a detection event.

\begin{algorithm}[H]
\caption{The symmetric worm algorithm}
\label{alg:wormprocess}
\begin{algorithmic}[1]
\State \textbf{Input:} Weighted graph \(\tilde{G} = (V,E,\tilde{w})\); current loop configuration \(\Sigma\).
\State \textbf{Output:} Updated loop configuration \(\Sigma'\).

\State Choose \(i_0 \in V\) uniformly at random.
\State Set \(i_1 \gets i_0\), \(i_2 \gets i_0\), and \(\Sigma' \gets \Sigma\).

\Repeat
    \State Choose \(k \in \{1,2\}\) uniformly at random and set \(i \gets i_k\).
    \State Choose \(j \in N(i)\) uniformly at random.
    \State Propose $\Sigma' \to \Sigma'\oplus (i,j)$ and accept with Metropolis probability w.r.t. the PS measure.
    \If{accepted}
        \State $i_k \gets j$.
    \EndIf
\Until{\(i_1=i_2\)}

\State \Return \(\Sigma'\)
\end{algorithmic}
\end{algorithm}

\subsection{The Worm Decoder\label{sec:worm_decoder}}

So far, we have introduced a Markov process that can sample over the ensemble
of errors for \emph{any} matchable decoding problem. We now show how this
process can be implemented as a practical and efficient decoding algorithm
capable of performing \emph{optimal decoding}.

\begin{figure}
  \centering
  \includegraphics[width=0.9\linewidth]{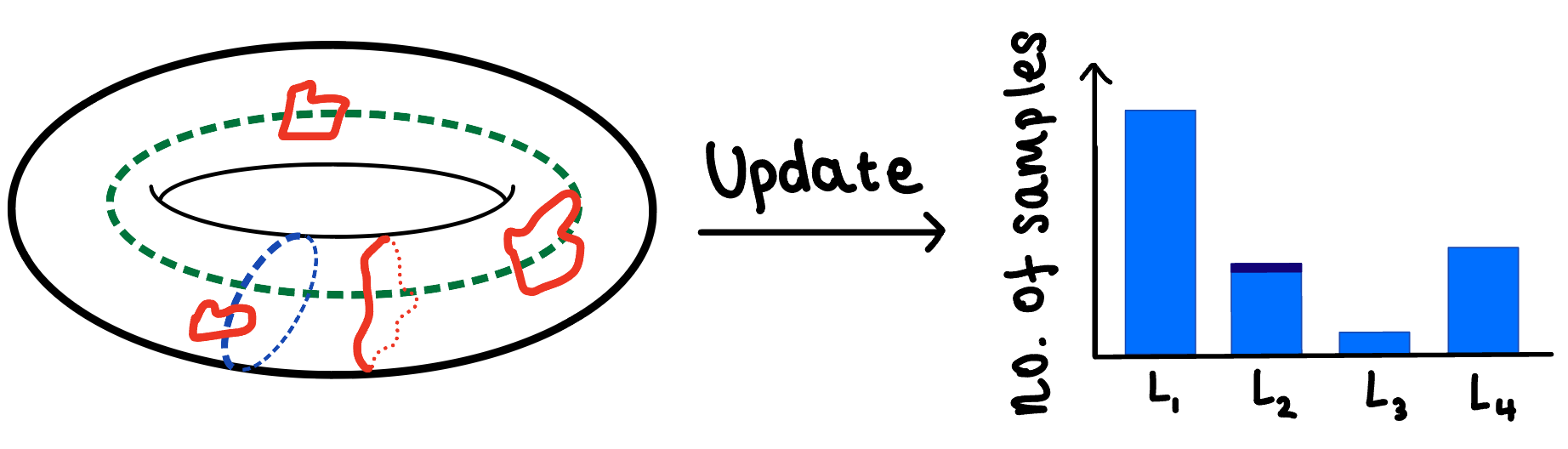}
  \caption{Schematic of the worm-decoding procedure. The logical sector of
  each sampled loop configuration is measured on the decoding graph of the
  toric code (here without measurement errors), and the corresponding logical tally is updated.}
  \label{fig:worm_decoder_schematic}
\end{figure}

The idea is straightforward. The worm process is used to generate samples of
error configurations consistent with a measured detection event. Since the
worm process has a stationary distribution given by the
Prokof'ev--Svistunov (PS) measure, the probability of observing a given closed
loop configuration is exactly the same as the probability that the
corresponding error caused the observed syndrome. Repeatedly running the worm process therefore produces independent error samples (given a sufficiently large number of steps between successive samples; see \autoref{eq:relaxation_time} for the precise requirement) drawn from the distribution defined by the DEM, conditioned on the observed detection event. Optimal decoding is then performed
by identifying the logical sector that appears most frequently among these
samples - provided sufficiently many such samples are taken. A schematic
illustrating this procedure for the toric code with bit-flip noise is shown in
\autoref{fig:worm_decoder_schematic}, and the full algorithm is presented in \autoref{alg:worm_decoder}.

Given the reference matching $M_0$ and the selected logical class $\Gamma^*$, a recovery operation is applied that returns the system to the codespace while implementing the   desired logical action. The procedure for constructing this recovery is identical to that used in MWPM decoding.

Although there are $2^k$ logical sectors in principle, the worm decoder does not need to construct an explicit histogram over all sectors. Instead, for each sampled loop configuration, the corresponding logical sector is determined by a matrix multiplication over $\mathbb{F}_2$, as shown in \autoref{eq:graph_logical}, yielding a $k$-bit string that labels the values of the logical representatives for that configuration. These logical sectors are stored in a sparse dictionary that records only those sectors actually observed during sampling. This representation remains efficient even when $k$ grows with the number of qubits, since the number of sectors with nonzero counts is bounded by the number of samples taken. In the sub-threshold regime, the posterior distribution over logical sectors is strongly peaked, so only a small number of sectors acquire non-zero counts in practice.

In practice, when using the worm algorithm as a decoder, it is often useful
to consider a slight variant of the above procedure, in which the number of
samples \(N_\text{samples}\) is not fixed. Instead, sampling may be stopped once,
within specified statistical uncertainties, one can reliably identify the
logical sector that solves \autoref{eq:lambda_argmax}, i.e., the most probable
sector consistent with the observed detection event. However, as discussed in
\autoref{sec:PS_and_estimators}, there remain important contexts in which it is
beneficial to use a fixed and sufficiently large value of \(N_\text{samples}\),
e.g. when leveraging the soft information generated by the worm process.

We have introduced a novel worm decoder that utilizes the worm process to
faithfully sample error chains of matchable decoding problems, conditioned on a
detection event. We now briefly discuss ways in which the worm decoder can be
adapted to exploit soft information during the decoding procedure, before
turning to analyzing the efficiency of our decoder in \autoref{sec:fast_mixing}.

\begin{algorithm}[H]
\caption{The Worm Decoder}
\label{alg:worm_decoder}
\begin{algorithmic}[1]
\State \textbf{Input:} Weighted graph \(G = (V, E, w)\); detection event \(S\);
logical representatives \(\mathcal{L} = \{\tilde{L}_i\}\); number of samples
\(N_{\text{samples}}\); worm autocorrelation time \(t_{\text{auto}}\).
\State \textbf{Output:} Recovery operation \(R\).
\State $M_0 \gets \Call{ReferenceMatching}{G,S}$.

\State Reweight $w_e \mapsto \tilde{w}_e$, to obtain $\tilde{G} = (V,E,\tilde{w})$.
\State Set \(\Sigma \gets \{\}\).
\State Initialize \texttt{logicals\_tally} as an empty dictionary.
\For{$n = 1$ to $N_{\text{samples}}$}
    \For{$t = 1$ to $t_{\text{auto}}$} \Comment{Perform decorrelation updates}
        \State \(\Sigma \gets\) \Call{WormUpdate}{$\tilde{G}, \Sigma$}.
    \EndFor
    \State \(\Gamma \gets\) \Call{MeasureLogical}{$\Sigma, \{\tilde{L}_i\}$}.
    \If{\(\Gamma \notin\) \texttt{logicals\_tally}}
    \State \texttt{logicals\_tally}[\(\Gamma\)] $\gets 1$
\Else
    \State \texttt{logicals\_tally}[\(\Gamma\)] $\gets$ \texttt{logicals\_tally}[\(\Gamma\)] $+ 1$
\EndIf

\EndFor

\State  $\Gamma^*_{M_0} \gets \text{argmax}_{\Gamma} (  \texttt{logicals\_tally}[\Gamma])$

\State $\Gamma^* \gets \Gamma^*_{M_0} \oplus \Call{MeasureLogical}{M_0, \{\tilde{L}_i\}}$ 

\State Determine recovery operation \(R\) from
\(\Gamma^*\) and \(M_0\).
\State \Return \(R\).
\end{algorithmic}
\end{algorithm}

\subsection{Soft Information for matchable decoding problems}\label{sec:PS_and_estimators}

Fundamentally, the worm decoder produces samples from the distribution of errors conditioned on a given syndrome, and thus produces data beyond a valid recovery operation. 
The utility of such `soft information' is well-appreciated for simulation purposes \cite{wichette2025partitionfunctionframeworkestimating} as well as practical decoding-related tasks such as confidence-based post-selection
\cite{lee2025efficient,chen2025scalableaccuracygainspostselection}, extracting noise model parameters \cite{Wagner_2021,Wagner2022paulichannelscanbe,wang2024dgrtacklingdriftedcorrelated}, and overall improved decoding performance~\cite{pattison2021improved}.

In this paper, we discuss three concrete examples.
Below, we first review how logical class probabilities can be used to estimate logical error rates of optimal decoders in simulations \cite{wichette2025partitionfunctionframeworkestimating}, and also a simple criterion for post-selection first proposed in Ref. \onlinecite{chen2025scalableaccuracygainspostselection}.
Later, in \autoref{sec:correl_decoding}, we further discuss how soft information generated by the worm can be used to include \emph{correlations} that arise when using it for \emph{non-matchable} DEMs. There, we introduce a correlated decoding scheme based on the worm algorithm that improves on existing schemes augmenting MWPM \cite{fowler2013optimalcomplexitycorrectioncorrelated,Bombin_2012}.

\subsubsection{Optimal decoding success estimators}

Using the worm decoder, we are able to generate optimal decoding success estimators for \emph{any} matchable decoding problem.

A naive way to estimate the logical error rate might be to draw an error $E$, perform decoding, and count which fraction of decoding attempts the optimal class $\Gamma_M^*$ coincides with the relative logical class of $E$ and $M$ (that is $E\oplus M$ should be in the class $\Gamma_M^*$).

As demonstrated in \cite{wichette2025partitionfunctionframeworkestimating}, one can do much better given access to the success-rate of decoding \emph{given a syndrome $S$}. Note that from \autoref{eq:p_lambda_s} and \autoref{eq:lambda_argmax}, this conditional probability is exactly given by
\begin{equation}
    \mathbb{P}_\text{succ}(S) = \mathbb{P}_M(\Gamma_M^*),
\end{equation}
where $\Gamma_M^*$ is given by \autoref{eq:lambda_argmax}. This quantity can be empirically estimated by the worm algorithm via
\begin{equation}\label{eq:decoding_succ_est}
    \texttt{logicals\_tally}[\Gamma_M^*]/N_\text{samples}
    \;\simeq\; \mathbb{P}_\text{succ}(S),
\end{equation}
where \texttt{logicals\_tally} is defined in \autoref{alg:worm_decoder}, and  $\tilde{\Gamma}_M^*$ denotes the logical sector that maximizes the observed value of \texttt{logicals\_tally}. 

For a given code and error model, the optimal decoding success rate is then
\begin{equation}
    \mathbb{P}_\text{succ} = \sum_S \mathbb{P}(S)\,\mathbb{P}_\text{succ}(S),
\end{equation}
and using \autoref{eq:decoding_succ_est}, this quantity can be estimated as
\begin{equation}
    \mathbb{P}_\text{succ} \simeq 
    \Bigg\langle 
        \frac{\texttt{logicals\_tally}[\Gamma_M^*]}{N_{\text{samples}}}
    \Bigg\rangle_S,
\end{equation}
where $\langle \cdot \rangle_S$ denotes averaging over detection events $S$. This estimator offers significant advantages when benchmarking the worm decoder: the statistical fluctuations are fundamentally smaller than those obtained from direct sampling of decoding outcomes. Each error realization returns a value in the interval $[0,1]$, whereas decoding success is a binary quantity. Consequently, the estimator exhibits markedly improved self-averaging. We present numerical results on this quantity in \autoref{sec:numerical_results} when investigating the decoding success rates of the worm decoder.

\subsubsection{Post-selection with the worm}

Soft information obtained in the decoding procedure can also be used to implement a post-selection criterion. For a given detection event $S$, using the result in \autoref{eq:decoding_succ_est}, we can obtain an estimate of the decoding success probability $\mathbb{P}_\text{succ}(S)$.

For example, one can define a procedure that post-selects on trials whose estimated probability of decoding error lies below a chosen threshold; that is, for a detection event $S=\partial M$, we retain only trials satisfying
\begin{equation}
    1 - \frac{\texttt{logicals\_tally}[\Gamma_M^*]}{N_{\text{samples}}}
    < \epsilon_\text{threshold}.
\end{equation}
This criterion was first introduced in~\cite{chen2025scalableaccuracygainspostselection}, where the focus was on the 2D toric code. It was proposed that such a post-selection rule can yield a logical error rate suppression from $p_f$ to $p_f^b$ with $b \geq 2$, while discarding only an exponentially small fraction of trials, thereby providing a scalable post-selection criterion. The underlying arguments presented in this work apply to general topological codes. The worm algorithm provides a natural platform for investigating scalable accuracy gains from post-selection for matchable DEMs, representing an interesting avenue for future work.

\section{Mixing time of the worm process and efficiency of the decoder}\label{sec:fast_mixing}

In this section, we discuss the time complexity of the worm decoder. 
The challenge to analyzing this is twofold. 
First, it depends on the \emph{mixing time} of the worm process, that is roughly speaking the time it takes to reach the steady-state distribution [\autoref{eq:PS_measure}] starting from an arbitrary initial state. 
Once the steady-state distribution is reached, the mixing also upper-bounds the time the worm process must be simulated between taking samples to guarantee they are (sufficiently) independent.
Second, the worm process is defined on the enlarged configuration space $\mathcal W = \mathcal C_0 \cup \mathcal C_2$ but we are ultimately interested in sampling from a marginal on $\mathcal C_0$. Therefore, we have to post-select on states in $\mathcal C_0$. This post-selection process is efficient if $f(\mathcal C_2) / f(\mathcal C_0) = {\rm poly}(\abs{V})$, that is if the worm process does not spend `too much' time being a worm.

Below, we will show that the latter condition is related to the first, and in fact assuming a slight generalization, which we call \emph{bounded defect susceptibility}, is sufficient to guarantee both fast mixing and efficient post-selection onto $\mathcal C_0$. The condition is a property of a given syndrome. We do not expect the run-time of the worm decoder to be polynomial for worst-case syndrome instances (see also \appref{app:matching_reduction}).
However, we will argue that \emph{typical} syndromes in the decodable phase can be decoded in polynomial time. 

We begin the rest of the section with a brief review of prior work on fast mixing results for the worm process. We then provide a formal definition of bounded defect susceptibility, and state our mixing time guarantee. 
Finally, we connect the notion of bounded defect susceptibility to the behavior of \emph{disorder operators} from statistical physics \cite{fradkin2017disorder_ops}. We use this to (i) argue (non-rigorously) that all but a vanishing fraction of syndromes are decodable in polynomial time, throughout the entire decodable phase, and (ii) discuss the average-case runtime of the worm decoder and connect the question of whether it is polynomial (in system size) to the existence of a previously unstudied kind of `Griffiths phase' in the corresponding statistical mechanics model.

\subsection{A brief history of mixing-time guarantees for the worm}

In physics, the worm algorithm (also called the ``directed loop update''  \cite{syljuaasen2002qmc}, or ``Broder's chain''\cite{Broder1986worm}) has been appreciated as a means of empirically efficient sampling in a wide range of settings. This includes simulation of the Ising model \cite{Prokof_ev_2001},
use in updates for quantum Monte Carlo schemes \cite{PROKOFEV1998253,syljuaasen2002qmc}, studying dimer models \cite{sandvik2003deconfined,Sandvik_2006}, and very recently also studying decoherence-induced transitions \cite{temkin2025charge_informed_qec}. In the last application (Ref. \onlinecite{temkin2025charge_informed_qec}) the worm is effectively used as a decoder for a $U(1)$ symmetry-enriched toric code under charge-conserving noise, although neither its efficiency nor the effect of finite-sample size and bias were studied systematically in this context. 
Remarkably, the same Markov chain has received much attention in computer science, where rigorous guarantees have been derived on its efficiency for sampling from a number of distributions.

Perhaps most famously, the worm underlies Jerrum and Sinclair's FPRAS for the permanent of (most) matrices with binary entries \cite{jerrum_sinclair_1989_permanent}, as well as the extension to all matrices with positive entries~\cite{jerrum_sinclair_vigoda_2004_permanent}. 
In physics language, the latter corresponds to approximate computation of the partition function of a dimer model on any bipartite graph and with arbitrary positive weights.

Another setting in which efficiency has been shown is sampling the partition function of the ferromagnetic Ising model on finite-degree graphs via its high-temperature series expansion (HTSE) \cite{Collevecchio_2016}. 
The latter case corresponds to the same loop ensemble as we consider in the decoding of matchable codes---usually called an `even subgraph model' in the CS literature---but relative to the even-subgraph measure considered in Ref.~\onlinecite{Collevecchio_2016} [$\lambda_w(A) = w^{\abs{A}}$ for $A \subseteq E$ and some $w\in(0, 1)$], our measure [\autoref{eq:loop_measure}] is `shifted' by the reference matching $M$ ($A \to A \oplus M$).
Equivalently, one could invert the weights on the edges in $M$. 
We emphasize that this does \emph{not} correspond to an Ising model with anti-ferromagnetic couplings, where the edge weights (under the standard HTSE mapping) would become negative, and which is known to be hard in general~\cite{Barahona1982complexity_spin_glass}.
However, we also do not expect the worm process to be fast mixing for general reference configurations $M$ and on general graphs $G$. We detail some physical intuition for this below, but also show in \appref{app:matching_reduction} that an FPRAS for the partition function of even subgraphs of the form of \autoref{eq:loop_measure} would yield a FPRAS for approximate counting of perfect matchings on general graphs, and the existence of the latter is a major open problem \cite{jerrum2003book}.

To the best of our knowledge, no rigorous results exist on using the worm algorithm to sample loop models with a measure of the form we define in \autoref{eq:cycle_distribution}. 
Fast mixing \emph{is} known for general positive weights on any graph, for the so-called Jerrum-Sinclair chain, which can be seen as a generalization of the worm process that can create an arbitrary number of defects \cite{sinclair1992improved,jerrum2003book,chen2025faster}. However, it is known that the same guarantee does not hold for the worm, where at most two defects are present at any given time \cite{stefankovi2018torpid}. 
For the worm process that approximately counts the number of perfect matchings in a graph, it was already shown by Jerrum and Sinclair \cite{jerrum_sinclair_1989_permanent} that the process mixes fast on general graphs, if one assumes that the ratio of near-perfect matchings and perfect matchings is at most polynomially large.

Below, we will derive a similar statement, but for the worm process defined for the even subgraph model that we consider, cf.\ \autoref{eq:cycle_distribution}. 
We can show fast mixing on general graphs, conditioned on the fact that the ratio of mass on the near-even subgraphs ($\mathcal C_2$) and even subgraphs ($\mathcal C_0$) is at most polynomially large. We make this precise in the following.

\subsection{Fast mixing from bounded defect susceptibility}

For simplicity, in the following we restrict the discussion to the setting where the priors are uniform in space ($w_e = w$ for all $e\in E$).
In this setting, the measure on subgraphs of $G = (V, E)$ takes the form 
\begin{equation}\label{eq:lambda_measure}
    f_{w, M}(A) = \prod_{e\in A} \tilde{w}_e \propto w^{|A \oplus M|} =: \lambda_{w,M}(A),
\end{equation}
where $A\in \mathcal{C}_0$, $w \in (0, 1)$, $M \subseteq E$ is an arbitrary `reference' subgraph and
\begin{equation}
    \tilde{w}_e = \begin{cases} w & \text{if } e \notin M,  \\ w^{-1} & \text{if } e \in M,\end{cases}
\end{equation}
consistent with \autoref{eq:cycle_distribution}. 
In the uniform-weight case, it is most convenient to consider the form of the measure~$\lambda_{w, M}$ above, which we will do in the following. Note that since $\lambda$ differs from $f$ only by a ($M$-dependent) constant, the PS measure can be expressed in terms of $\lambda$ simply by replacing $f$ by $\lambda$ everywhere in \autoref{eq:PS_measure}.

In the decoding setting, the boundary of $M$ is given by the syndrome $\partial M = S \subseteq V$ (remember that the measure~$\lambda$ only depends on the syndrome, up to relabeling of configurations $A \in \mathcal C_0$).

Before stating our main result, we briefly recall some relevant definitions. For a Markov chain on state space $\Omega$ with transition matrix $P$ and unique stationary distribution $\pi$, the mixing time is defined as
\begin{equation}\label{eq:tmix_def}
    \tmix(x_0, \epsilon) := \min\{t \in \mathbb{N} : 
    \|P^t(x_0) - \pi\|_{\rm TV} \leq \epsilon\},
\end{equation}
where $x_0 \in \Omega$ is the initial state and~$\epsilon$ is a target accuracy. While the mixing time depends in general on both $x_0$ and~$\epsilon$, this dependence can be separated from the intrinsic timescale of the chain via the bound~\cite[Theorem 12.4]{levin2017markov}
\begin{equation}\label{eq:t_rel_def}
    \tmix(x_0, \epsilon) \leq 
    \left[ \log\pi(x_0)^{-1} + \log\epsilon^{-1} \right] \trel,
\end{equation}
where $\trel$ is the \emph{relaxation time}, which is independent of both the initial state and the target accuracy.  
Informally, this is the time scale on which the chain contracts towards its stationary distribution.

In our setting, we consider the \emph{lazy} worm process, in which at each step, with probability $1/2$ no transition is proposed.
Laziness ensures that the chain is aperiodic in addition to being reversible, and hence converges to a unique stationary distribution from any initial state.
Furthermore, since all configurations $A \in \mathcal{W} = \mathcal{C}_0 \cup \mathcal{C}_2$ have weight $\lambda_{w,M}(A) \geq w^{|E|}$, the stationary probability [\autoref{eq:PS_measure}] of any state satisfies $\log \pi(x_0)^{-1} = O(\abs{E})$. Consequently, for the worm process, fast relaxation implies fast mixing.

We will show that the lazy worm process is fast mixing, conditioned  on the following property of the measure~$\lambda_{w, M}$.

\begin{property}[Bounded defect susceptibility]\label{prop:bounded_sus}
For an even subgraph model defined on a graph $G = (V, E)$ with measure $\lambda_{w, M}$ as defined in \autoref{eq:lambda_measure}, we say it has $(\chi_2, \chi_4)$-bounded defect susceptibility if
\begin{subequations}\label{eq:bounded_sus_ratios}
\begin{align}
    \frac{\lambda_{w, M}(\mathcal{C}_2)}{\lambda_{w, M}(\mathcal{C}_0)} 
    &\leq \chi_2, \\
    \frac{\lambda_{w, M}(\mathcal{C}_4)}{\lambda_{w, M}(\mathcal{C}_0)} 
    &\leq \chi_4,
\end{align}
\end{subequations}
where $\mathcal{C}_k$ denotes the set of subgraphs of $G$ with exactly~$k$ odd-degree vertices.
\end{property}
The name of the property is motivated by the fact that the ratios of partition functions in \autoref{eq:bounded_sus_ratios} can be interpreted as a sum over all two-point and four-point disorder-operators, respectively. We will detail this below.

Remarkably, one can obtain a mixing time guarantee for the worm process in terms of bounded defect susceptibility alone.

\begin{theorem}[short version]\label{thm:fast_mixing}
    Consider the lazy worm process for an even subgraph model on $G=(V, E)$ with $(\chi_2, \chi_4)$-bounded defect susceptibility (Property \autoref{prop:bounded_sus}).
    The relaxation time of this process is bounded by 
    \begin{equation}
        \trel \leq 8 \Delta w^{-1} \abs{E} \left[(1 + \chi_2)(2 + \abs{V}) + 2\chi_4 \right].
    \end{equation}
\end{theorem}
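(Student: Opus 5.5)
We will bound $\trel$ by the canonical paths (multicommodity flow) method, following the Jerrum--Sinclair analysis of the monomer-dimer/permanent chain and its adaptation to the even-subgraph worm in Collevecchio et al. Let $P$ denote the lazy worm transition matrix on $\mathcal W=\mathcal C_0\cup\mathcal C_2$ with stationary measure $\pi$ given by the PS measure written in terms of $\lambda=\lambda_{w,M}$. A single transition flips one edge incident to a virtual defect. The first step is to record a lower bound on the transition probability $Q(e)=\pi(A)P(A,A')$ of every transition $e=(A,A')$. Choosing the defect end costs $1/2$, choosing the neighbour costs at least $1/\Delta$, laziness costs $1/2$, and the Metropolis acceptance costs at least $w$ (possibly times the $\Psi$ ratio $2/|V|$, which is absorbed into using $\min(\pi(A),\pi(A'))$). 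This gives $Q(e)\gtrsim \min\{\pi(A),\pi(A')\}/(4\Delta w^{-1})$, which produces the prefactor $\Delta w^{-1}$.

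The second step is to construct canonical paths between every pair $I,F\in\mathcal W$. Consider $I\oplus F$. If both endpoints are in $\mathcal C_0$, this set is an even subgraph; if one or both endpoints are in $\mathcal C_2$, it has at most four odd vertices. Fix an ordering of edges and a canonical decomposition of $I\oplus F$ into closed trails, plus at most two open trails joining the odd vertices. The path then processes the trails in order. For each closed trail, it opens a worm at a fixed vertex, walks the head along the trail flipping edges, and closes it. Open trails are treated as worms that already have a defect, or that create one. Each path has length at most $|E|+O(1)$, which supplies the factor $|E|$ in the bound.

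The third step is the congestion bound via an encoding. For a transition $T=(A,A')$ on the path from $I$ to $F$, define $\eta_T(I,F)=I\oplus F\oplus A$, together with a bounded amount of extra information. The map $(I,F)\mapsto(\eta,\text{extra})$ must be injective given $T$. Since $\lambda(I)\lambda(F)$ depends only on $|I\oplus M|+|F\oplus M|=|A\oplus M|+|\eta\oplus M|$ (every edge lies in exactly one of $A,\eta$ in the relevant positions), we obtain $\lambda(I)\lambda(F)\le w^{-O(1)}\lambda(A)\lambda(\eta)$. The encoding $\eta$ has odd vertices in the symmetric difference of those of $I$, $F$ and $A$, so $\eta\in\mathcal C_0\cup\mathcal C_2\cup\mathcal C_4$. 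Summing over paths through $T$ therefore gives congestion at most a constant times $\Delta w^{-1}|E|\,[\Psi\text{-weighted }\lambda(\mathcal C_0)+\lambda(\mathcal C_2)+\lambda(\mathcal C_4)]$ divided by the normalization $|V|\lambda(\mathcal C_0)+2\lambda(\mathcal C_2)$.

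Dividing through by $\lambda(\mathcal C_0)$ and using Property~\ref{prop:bounded_sus} then yields a bound of the form $(1+\chi_2)(2+|V|)+2\chi_4$. Tracking the $\Psi$ weights ($|V|$ versus $2$) in numerator and denominator, and bounding the denominator below by $|V|\lambda(\mathcal C_0)$, gives the stated expression with constant $8$. The relaxation time bound then follows from the standard inequality $\trel\le\rho\,\ell$, where $\rho$ is the congestion and $\ell$ the path length.

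The main obstacle is the injectivity and the state bookkeeping of the encoding when $I$ or $F$ lies in $\mathcal C_2$. In that case $I\oplus F$ can have four odd vertices, so $\eta$ may lie in $\mathcal C_4$, and one must show that $T$, $\eta$ and $O(1)$ bits (which trail is being processed, and the start vertex) recover $(I,F)$. The approach I would try first is to fix, as in Jerrum--Sinclair, a canonical start vertex for each trail (its smallest vertex). The current worm head and tail are then read off from $A$, and the processed and unprocessed parts of $I\oplus F$ are distinguished by the ordering. This is exactly where the extra $\chi_4$ term enters.
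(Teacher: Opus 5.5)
Your ingredients are the right ones: the lower bound on $\Psi(A)\lambda(A)P(A,A')$, minimal paths that flip each edge of $I\oplus F$ exactly once, and the encoding $\eta=I\oplus F\oplus A$ with $\lambda(I)\lambda(F)=\lambda(A)\lambda(\eta)$ coming from $I\cap F\subseteq A\subseteq I\cup F$. However, routing between \emph{all} pairs $I,F\in\mathcal W$ breaks the argument in two ways, and neither is the injectivity bookkeeping you flagged.

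First, the paths need not exist. From a state in $\mathcal C_2$ the worm can only flip an edge at a current defect. Along your path the state is $I\oplus P$, with defects $\partial I\oplus\partial P$, where $P\subseteq I\oplus F$ is the processed set. Take $\partial I=\{a,b\}$, $\partial F=\{c,d\}$, and let $I\oplus F$ be the vertex-disjoint union of an $a$--$c$ path, a $b$--$d$ path and a cycle $C$. Before any edge of $C$ is flipped, no defect lies on $C$. Passing through $\mathcal C_0$ would require $\partial P=\{a,b\}$, which is impossible because $\partial P$ meets the $a$--$c$ component in an even set. So $C$ can never be unwound without leaving $\mathcal W$ or flipping edges outside $I\oplus F$, and the latter destroys the weight identity.

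Second, even when a route exists, your claim $\eta\in\mathcal C_0\cup\mathcal C_2\cup\mathcal C_4$ fails. Suppose $I\oplus F$ is an $a$--$b$ path, a $c$--$d$ path and a disjoint cycle. You are forced to close $a$--$b$, then unwind the cycle, then open $c$--$d$. While the cycle is half-unwound with defects $x,y$, you get $\partial\eta=\partial I\oplus\partial F\oplus\partial A=\{a,b,c,d,x,y\}$. The congestion then contains $\lambda(\mathcal C_6)/\lambda(\mathcal C_0)$, which Property~\ref{prop:bounded_sus} does not control.

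The missing idea is to route only \emph{into} $\mathcal C_0$. This uses the canonical-path bound with target set $\mathscr S=\mathcal C_0$: Schweinsberg's $\trel\le 4\ell_{\max}\varphi(\Gamma)$, with the flow normalized by $1/\pi(\mathcal C_0)$. It amounts to routing every pair through an intermediate even subgraph, and it is what the paper does.

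For $F\in\mathcal C_0$ one has $\partial(I\oplus F)=\partial I$, so the two defects of $I$ lie in the same component of $I\oplus F$. You first unwind a shortest path $B_0$ between them, landing in $\mathcal C_0$. You then unwind the remaining cycles one at a time, each starting and ending in $\mathcal C_0$.

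Now $\partial\eta=\partial I\oplus\partial A$. The paper uses $\eta=I\oplus F\oplus(A\cup A')$ to match its transition bound $w^{\abs{(A\cup e)\oplus M}}/(2\Delta w^{-1})$, with the same parity structure. Hence $\eta\in\mathcal C_0\cup\mathcal C_2$ when $I\in\mathcal C_0$, and $\eta\in\mathcal C_0\cup\mathcal C_2\cup\mathcal C_4$ when $I\in\mathcal C_2$. The map is injective with no extra information. The normalization $\abs{V}\lambda(\mathcal C_0)+2\lambda(\mathcal C_2)$ cancels exactly against $\pi(\mathcal C_0)$. The weights $\Psi(I)\in\{\abs{V},2\}$ then give $\abs{V}(1+\chi_2)+2(1+\chi_2+\chi_4)$. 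Combined with the prefactor $2\Delta w^{-1}$, $\ell_{\max}\le\abs{E}$, and Schweinsberg's factor $4$, this is exactly the stated bound.
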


We provide a longer version of the statement and the proof of this result in \appref{app:mixing_time_proof}. The proof is a direct generalization of the existing guarantee for the worm process of the Ising model \cite{Collevecchio_2016} to our setting (and our result contains the case considered in Ref.\ \onlinecite{Collevecchio_2016} as $M = \emptyset$). In particular, we use the same set of canonical paths, and then bound the congestion using property \autoref{prop:bounded_sus}. In the Ising model case, where the worm process is obtained via the high-temperature series expansion, the same ratio of partition function corresponds to the actual Ising susceptibility, which is always polynomial in system size. 
In our case, this is not guaranteed in general, but for random error models (where $M$ is chosen from a distribution) we do expect property \autoref{prop:bounded_sus} to hold with polynomial $\chi_2$ and $\chi_4$ for \emph{typical} errors in the decodable phase. 
We discuss this in detail in the following.

\subsection{Bounded defect susceptibility: typical vs. average-case complexity}

The mixing time guarantee in theorem \autoref{thm:fast_mixing} is valid for all syndromes. As mentioned above, we do not expect the quantities $\chi_2$, $\chi_4$ to be bounded by a polynomial in system size in general. To understand this better, it is insightful to connect these ratios to the notion of a \emph{disorder operator} in statistical physics, which is usually defined as
\begin{equation}\label{eq:disorder_op}
    \expval{\mu(u) \mu(v)}_{\lambda} := \frac{\lambda_{w, M}(\mathcal C_{u, v})}{\lambda_{w, M}(\mathcal C_0)}
\end{equation}
where $\mathcal{C}_{u,v} = \{ A \subseteq E : \partial A = \{ u,v \} \}$ is the set of subgraphs with only vertices $u$ and $v$ having odd degree.
Intuitively, this quantity encodes the change in free energy when inserting two defects at site $u$ and $v$.

In a conventional ordered phase without disorder ($M = \emptyset$), this defect operator is expected to decay exponentially with the distance of $u, v$. In particular, defining the domain-wall free energy of a domain wall `pinned' at $u$ and $v$ as $\beta f_{\rm dw}(u, v) := -\log(\expval{\mu(u) \mu(v)}_\lambda)$, this will behave as $f_{\rm dw} (u, v) = \sigma_{\rm dw}\cdot {\rm dist}(u, v)$ for large distances between~$u$ and~$v$ \cite{fradkin2017disorder_ops}. 
Note that $\chi_2 = \sum_{u, v} \expval{\mu(u) \mu(v)}= O(\abs{V})$ in this case, and a similar argument applies to the ratio $\chi_4=O(\abs{V}^2)$.

The behavior of \autoref{eq:disorder_op} is more subtle, however, in a disordered system, even in ordered phases (which correspond to the decodable phase of the code under the usual statistical mechanics mapping \cite{Dennis_2002}). 
The reason is that its behavior in this case can be dominated by rare regions. This will lead, in general, to a stark difference between the behavior for typical disorder configurations, and the average. 

The left hand side of \autoref{eq:disorder_op} only depends on the positions of the defects $u$ and $v$.
In order to sample the syndromes we equivalently sample the errors directly from their distribution and take this to be our reference~$M$.
With this convention, successful decoding is equivalent to a positive domain-wall free energy for non-contractible domain walls (see also the discussion in \autoref{sec:approximate_optimal}). Note that for any configuration where the free-energy of non-contractible domains is positive, we also expect a positive limit ${\rm dist}(u, v) \to d$ for $f_{\rm dw}(u, v)$, since pinned domain walls have smaller or equal free energy.

\begin{figure}
\includegraphics[width=0.35\textwidth]{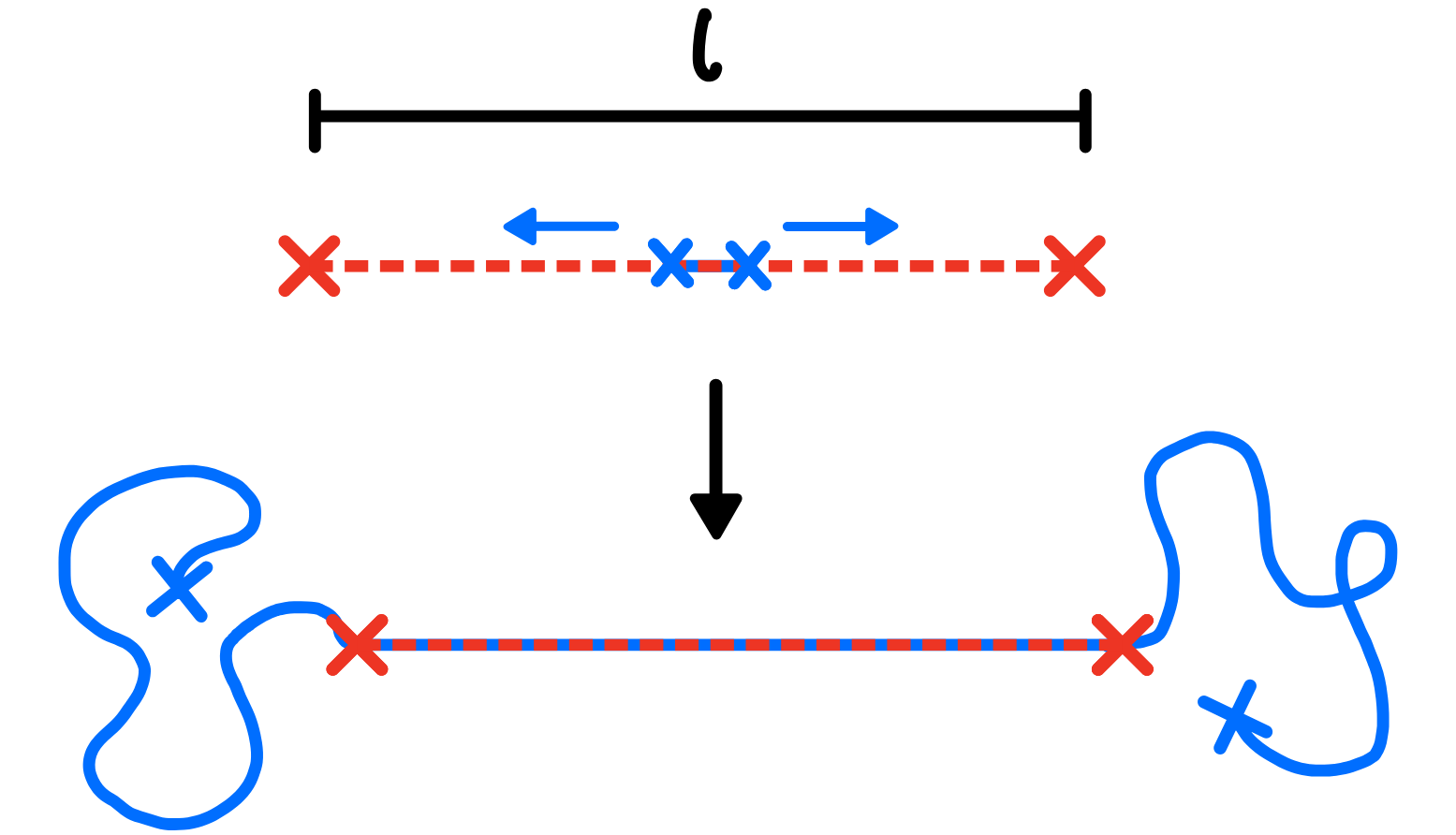}
\caption{Schematic illustrating how a pair of virtual defects (blue crosses) can form near a reference matching (red dashed line) of length $\ell$, and subsequently become trapped in the vicinity of the syndrome locations (red crosses). While such configurations of size $\ell$ contribute a mixing time that grows exponentially in $\ell$, they are also exponentially rare in $\ell$ for local random error models.}
\label{fig:worm_trap}
\end{figure}
\subsubsection{Typical errors are decodable in polynomial time}

As an example, consider a configuration of disorder $M$ as sketched in \autoref{fig:worm_trap}. In this configuration, two syndromes at distance $\ell$ are separated by a long string of edges in the reference matching (in the 2D RBIM corresponding to decoding of the surface code with perfect measurements, this would be a long dual string of `flipped' bonds connecting two far-away frustrated plaquettes). In the ordered phase, such a configuration leads to $\chi_2 \geq \expval{\mu(u)\mu(v)} = \exp(\beta \sigma \ell)$ for some positive constant $\sigma>0$. However, such configurations are also exponentially rare. Standard percolation theory bounds \cite{grimmett1999percolation} state that in a typical sample, the longest configuration of this kind would have size $\bar \ell = z\log \abs{V}$ for some $z > 0$, which, when plugged into our mixing time guarantee yields a polynomial upper bound in system size.

More generally, we expect in random local error models that syndrome configurations for which it is possible to gain a free energy on the order $\max_{u, v} \log \expval{\mu(u)\mu(v)}\propto \ell$ occur with a probability that is exponentially small in $\ell$, $p_{\ell} \propto e^{-r \ell}$, again with some positive constant $r$ and beyond, potentially, some finite-distance cutoff scale $\xi$. 
Note that this is exactly the intuition behind the exponential decay of logical error rate with distance: error configurations where this defect free energy can be negative on the scale of the distance are exactly the configurations where even optimal decoding fails, and such configurations are exponentially rare in the distance, see e.g. \cite{Dennis_2002}. 
Stated differently, assuming that configurations where free energy $\propto\ell$ can be gained are exponentially rare in~$\ell$ assumes that a neighborhood of diameter $\ell$ of the code behaves statistically similarly to a (smaller) code of size~$\ell$ (up to a change of length scales).
This is expected to hold for most spatially homogeneous codes and error models, beyond some finite-size scale, and the same idea is the basis for the notion of universality of phases in statistical mechanics. 
In these cases, we expect \emph{typical} errors (that is all but a vanishing fraction) to be decodable in polynomial time. 

If typical errors can be decoded in polynomial time, this means that we expect a vanishing error rate even when running the worm decoder with a polynomial time-out, as long as the power of the polynomial is greater than the parameter $z$ defined above. Then, even if we assume failure in all instances where the mixing time exceeds the polynomial cutoff, this will affect only a vanishing fraction of all errors. By the arguments above, we expect this to be true in \emph{entire} decodable phase, which means that the threshold of the worm decoder, with a polynomial runtime, is the same as that of the optimal decoder up to finite-sample errors. 

We note two caveats. First, the parameter $z$ may diverge as the threshold transition is approached, and therefore so does the power of the polynomial bounding the typical decoder run-time.
Second, for any polynomial runtime cutoff, we also expect (asymptotically) only a polynomial in distance suppression of logical errors. This is because instances where the mixing time exceeds a given polynomial are only polynomially rare. 
To be concrete, say we choose the maximum time to be $t_{\rm max}(\abs{V})$ as some finite-degree polynomial. Then an adversarial defect configuration as the one sketched in \autoref{fig:worm_trap} of scale $\ell = \log t_{\rm max}(\abs{V})$ will make the decoder fail. Assuming that errors happen i.i.d. on each edge with probability $p$, such a configuration has probability at least 
\begin{equation}
    p_{\rm fail} \gtrsim p^{\ell} = \exp(\log p \log t_{\rm max}) = \frac{1}{t_{\rm max}(\abs{V})^{- \log p}}.
\end{equation}
The first proportionality here is obtained by fixing only the edges between the two syndromes and marginalizing over all other edges. 
Local disorder fluctuations may lead to a normalization of distances by a constant factor to $\ell_{\rm eff} = \Theta(\ell)$, but the above will stay valid with this renormalized length scale.

As before, similar arguments apply for the four-point disorder operators and the corresponding susceptibility~$\chi_4$.

\subsubsection{The average mixing time is dominated by rare errors}

While so far, we have discussed the case of \emph{typical} errors and the runtime of the decoder, another interesting question is that of average runtime. This would be relevant for example, to the setting where one decodes many codes in parallel, with the possibility to distribute a polynomial amount of computing resources per code.

In a disordered system within the ordered phase one expects the average domain-wall free energy to scale with the length $[f_{\rm dw}(\ell)]_{M} = \ell\cdot [\sigma_{\rm dw}]_{M}$ for some positive average tension $[\sigma_{\rm dw}]_{M} > 0$ and sufficiently large $\ell$. Here, $[\dots]_{M}$ denotes the average over reference configurations $M$.
However, this does not necessarily imply an exponential decay of the disorder-average of its exponential
\begin{equation}
    \left[ \expval{\mu(u)\mu(v)}_\lambda \right]_M = \left[ e^{-f_{\rm dw}(\ell={\rm dist}(u, v))} \right]_{M}. \label{eq:disorder_op_av}
\end{equation}

Note that our mixing time guarantee implies
\begin{align}
    [ \trel ]_{M} \leq& 8 \Delta w^{-1} \abs{E} \left[ (1 + \chi_2)(2+\abs{V}) + 2\chi_4 \right]_M\\
        =& 8 \Delta w^{-1} \abs{E} \left( (1 + [\chi_2]_M)(2+\abs{V}) + 2[\chi_4]_M \right)
\end{align}
and 
\begin{align}
    [\chi_2]_M =& \left[ \sum_{u, v} \expval{\mu(u) \mu(v)}_\lambda \right]_M \\
        =& \sum_{u, v} \left[ \expval{\mu(u) \mu(v)}_\lambda \right]_M
\end{align}
where in the last line, the sum is now over the disorder-averaged defect operator. The same argument can be made for the four-point susceptibility $\chi_4$.

The average runtime is thus directly related to the question of whether the disorder-averaged disorder operator in \autoref{eq:disorder_op_av}, in the decodable phase, has the same behavior as in the disorder-free case. 
Following the discussion in the previous section, and assuming for simplicity that our error model is (on average) translational invariant and local in some finite euclidean dimension $D$, then the average will depend on a tradeoff between the free-energy gain and the unlikeliness of rare regions of diameter $\ell$. 
In other words, the distribution of $\chi_{2, 4}$ has a power-law tail, and the power of the tail will determine which moments of the distribution diverge exponentially with system size.
In particular, in a system of linear size $L$ we can write 
\begin{align}
    \sum_{u, v} \left[ \expval{\mu(u)\mu(v)}_\lambda \right]_M 
    &\propto L^D\sum_{\ell = 0}^{L} \ell^{D-1} \left[ e^{-\beta f_{\rm dw}(\ell)} \right]_{M} \\
    &= L^D \sum_{\ell = 0}^{L} \ell^{D-1} e^{-(r - \beta\sigma) \ell}
\end{align}
where $r$ is the constant that governs the exponential decay of probability of having disorder configurations gaining free energy $\beta\sigma \ell$. If $r \geq \beta\sigma$, then the above sum (and hence the average mixing time) is bounded by a polynomial function of $L$, while for $r < \beta\sigma$ it is exponentially large in $L$. 

The trade-off between exponentially rare configurations of disorder having exponentially large contributions to certain quantities is exactly what governs Griffiths effects in statistical mechanics \cite{griffiths1969effects}. The situation here is most reminiscent of one encountered in the fermionic representation of the random bond Ising model, where rare disorder regions of the same kind as sketched in \autoref{fig:worm_trap} contribute exponentially to the low-energy density of states \cite{gruzberg2001rbim,mildenberger2006griffiths,pandey2026low}. 
The constants~$r$ and~$\sigma$ are non-universal and depend in general on the details of the disorder distribution as well as the weights~$w$.

The constant $\sigma$ is just the domain wall tension (for a pinned domain wall pinned at two specific sites). This quantity vanishes at the transition into the paramagnet (assuming that the transition is continuous, as it is e.g. in the 2D RBIM), but stays finite in the zero-temperature limit $\beta \to \infty$. Fixing the error rate $p$ (which fixes~$r$), and varying the edge weight $w=\exp(-2\beta)$ (which varies also $\sigma$), the average will be polynomial-bounded ($r \geq \beta\sigma$) close to the transition into the paramagnet at large $w$, while it will diverge exponentially at sufficiently small $w$.
The question of whether decoding is efficient on average is then exactly the question of whether the system enters a `Griffiths phase' (at $r = \beta\sigma$) only below or on the Nishimori line ($w = p/(1-p)$) as the temperature is lowered. If the Nishimori line lies within the Griffiths phase ($r < \beta\sigma$), then decoding will be inefficient at least with the simple scheme presented here.

Studying the exact phase boundary of such Griffiths phases is often subtle and necessitates high-quality numerics. This is especially true here, since at least in principle a bound on both the two-point and four-point defect operators is necessary (although one does not expect the two to behave qualitatively different in practice). We hence leave settling the question of the average mixing time for future work. We also note that even if the average mixing time is bounded by a polynomial in system size for some realistic noise model, to obtain an efficient decoder from this one needs to define a `stopping condition' for the decoder at a given instance. This stopping condition must reliably identify syndromes where the mixing time is large, and one would have to show that the stopping condition reliably terminates the algorithm in polynomial time on average, while not degrading performance. 
Finally, we note that even if the defect susceptibilities are not bounded by any polynomial, this in some cases can be remedied  by introducing a tunable defect fugacity (that is a weight factor $w(u, v)$ for configurations in $\mathcal C_2$) \cite{jerrum_sinclair_vigoda_2004_permanent}. Even if such schemes do not suffice for worst-case instances \cite{stefankovi2018torpid}, they may suffice to make `bad' errors sufficiently rare to guarantee a polynomial average mixing time.

\section{Numerical Results}\label{sec:numerical_results}

We now present numerical results on the decoding performance of the worm algorithm for two classes of matchable decoding problems, both of which lie beyond the reach of existing methods such as tensor-network decoding \cite{chubb2021generaltensornetworkdecoding,Piveteau_2024}. All results are obtained using the \emph{directed} worm algorithm, discussed in \autoref{app:directed_worm}.

First, we present decoding success rates obtained using the worm algorithm for the rotated surface code with independent and identically distributed (i.i.d.) measurement errors together with i.i.d.\ bit-flip noise. 
Second, we present corresponding results for a family of hyperbolic surface codes generated from different compactifications of a fixed infinite tessellation of the hyperbolic plane, assuming perfect syndrome measurements but i.i.d. bit-flip noise.

For the hyperbolic codes in particular, we additionally compare the decoding performance of the worm algorithm to MWPM. The two decoders exhibit strikingly similar performance for these codes and noise models. We discuss how this feature may be related to some of the geometric properties of hyperbolic tilings.

\subsection{The surface code with measurement errors}
We consider the rotated surface code subjected to independent and identically distributed (i.i.d.) bit-flip noise and measurement errors, each occurring with probability \(p\).

To apply the worm decoder in this setting, an additional boundary node must be added to the decoding graph. This is done using the prescription of \autoref{sec:boundary_node}. The resulting decoding performance is shown in \autoref{fig:RSC_ME_data}.

\begin{figure}
\centering
\includegraphics[width=1.0\linewidth]{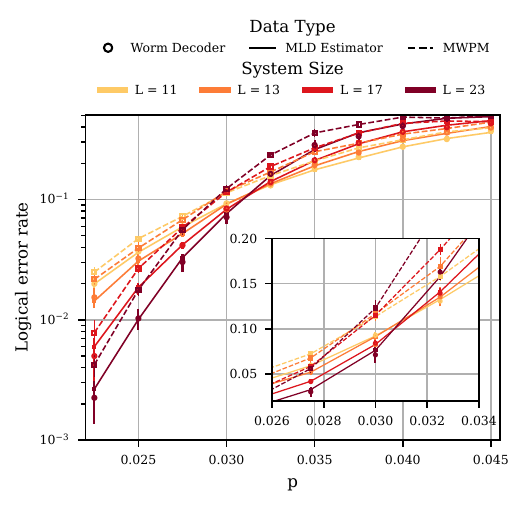}
\caption{Decoding performance of the worm algorithm for the rotated surface code with i.i.d.\ bit-flip noise and measurement errors.}
\label{fig:RSC_ME_data}
\end{figure}

From these data, we estimate a threshold of
\[
p_c = 0.0310 \pm 0.0005,
\]

    which differs slightly from the previously reported value \(p_c \approx 0.033\)~\cite{Ohno_2004}. We expect our estimate to be more reliable.

\subsection{Hyperbolic Surface Codes}

We investigate the performance of the worm decoder for a family of hyperbolic surface codes obtained from \(\{5,5\}\) tilings, under i.i.d. bit-flip noise occurring with probability \(p\), and in the absence of measurement errors. The resulting decoding performance is shown in \autoref{fig:hyp_worm}. From our numerical results, we estimate a threshold of
\[
p_c = 0.019 \pm 0.001.
\]
Moreover, we find empirically that the decoding performance of the worm algorithm closely matches that of minimum-weight perfect matching (MWPM). We discuss how this similarity may be connected to geometric properties of negatively curved spaces, in particular the notion of \(\delta\)-hyperbolicity.

\begin{figure}
\centering
\includegraphics[width=1.0\linewidth]{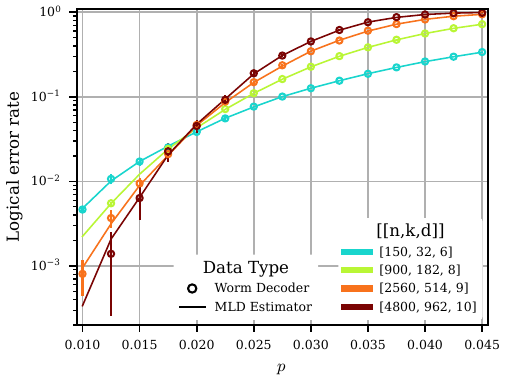}
    \caption{Decoding success rates under the worm algorithm for a family of \(\{5,5\}\) hyperbolic surface codes.}
    \label{fig:hyp_worm}
\end{figure}

Hyperbolic surface codes have the property that the number of encoded qubits \(k\) scales linearly with the number of physical qubits \(n\), and hence the total number of logical sectors grows exponentially with system size. As mentioned in \autoref{sec:worm_decoder}, this can be a problem when trying to perform the maximization over classes in \autoref{eq:ml_decoding}. However, the worm decoder does not attempt to enumerate or store the probability of all possible logical sectors. Instead, the worm automatically \emph{importance samples} the sectors, which are labeled using a $k$-bit string and recorded only when they are encountered during sampling.

The sparse dictionary, \texttt{logicals\_tally} used in line~6 of \autoref{alg:worm_decoder} grows dynamically and contains at most~\(N_{\text{samples}}\) entries. In practice, below threshold, the number of observed logical sectors is much smaller than this bound, reflecting the fact that the posterior distribution over logical sectors is strongly concentrated on a finite set (see also the discussion in \autoref{sec:approximate_optimal}). 

\begin{figure}
\centering
\includegraphics[width=1.0\linewidth]{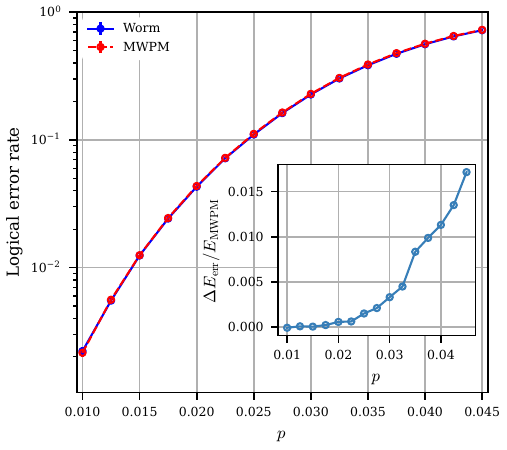}
    \caption{Comparison of decoding success rates for the $[[900, 182, 8]]$ hyperbolic code under MWPM and the worm decoder.}
    \label{fig:hypmwpm}
\end{figure}

Our numerical results reveal a somewhat surprising result regarding the decoding success rates of the worm algorithm for these hyperbolic codes under our  phenomenological noise model, compared to those of MWPM. Empirically, we find that they are \emph{extremely} close (cf.\ \autoref{fig:hypmwpm}).
This suggests that the most-likely-error is almost always in the most likely error class in this setting, in contrast to euclidean surface codes. 
We suspect that this is actually a general feature of the underlying hyperbolic geometry, and in \appref{app:hyperbolicity} we suggest a concrete connection to a property called  $\delta$-hyperbolicity. This property, roughly speaking, means that taking detours relative to a geodesic is expensive, and therefore there are very few equivalent paths that contribute to the weights in \autoref{eq:ml_decoding}. This means the minimum weight error dominates, and one would indeed expect MWPM to be a much better approximation to optimal decoding in the hyperbolic case, than it is in the euclidean setting.

\section{Beyond Matching: Correlated Decoding with the Worm}\label{sec:correl_decoding}
Up to this point, we have focused on \emph{matchable} decoding problems, in which each error mechanism triggers at most two detectors. In this setting, the DEM maps naturally onto a weighted graph, and the worm algorithm can be used to sample error chains consistent with an observed detection event, enabling optimal decoding.

Many physically realistic noise models, however, do not yield matchable decoding problems. Important examples include surface codes subject to circuit-level noise or to single-qubit depolarizing noise, with or without measurement errors. In such cases, individual error mechanisms may trigger more than two detectors, and the corresponding DEM defines a weighted \emph{hypergraph} rather than a graph. As a result, graph-based decoding methods---including the worm algorithm---cannot be applied directly.

A common strategy for addressing this difficulty is to approximate the
hypergraph decoding problem by a graph-based one where correlations between errors are incorporated through an iterative scheme with a suitable reweighting procedure for the edge weights \cite{fowler2013optimalcomplexitycorrectioncorrelated,Delfosse_2014,Paler_2023}. 
These so-called \emph{correlated matching} approaches aim to (heuristically) capture the effects of non-matchable error mechanisms while retaining the computational advantages of graph-based decoding.

In this section, we first introduce a correlated decoding scheme for depolarizing noise based on the worm algorithm. To this end, we make use of the fact that the worm naturally provides \emph{soft information} in the form of empirical single-edge error marginals obtained by sampling over error chains. These marginals can be used to reweight the edges of the effective decoding graph, thereby effectively incorporating correlations between errors. 

We then illustrate this approach using the rotated surface code subject to i.i.d.\ single-qubit depolarizing noise with perfect measurements. In this setting, $Y$ errors correspond to non-matchable error mechanisms, as they trigger four detectors (two $X$ and two $Z$ detectors). As a result, the presence of $Y$ errors induce correlations between the $X$- and $Z$-decoding problems, which we capture through an appropriate reweighting scheme. We show numerically that this method achieves a significantly higher threshold, and improved low-noise performance, compared to existing correlated MWPM techniques. 

At the end of the section, we discuss how the same strategy may be generalized to more complex noise models.

\subsection{Correlated decoding of surface codes with depolarizing noise}

We consider a (rotated) surface code subject to depolarizing noise of strength $p$. 
A single decoding experiment then consists of application of the single-qubit channel
\begin{equation}
    \Phi(\rho)
    =
    (1-p)\rho
    + \frac{p}{3}\bigl( X\rho X + Y\rho Y + Z\rho Z \bigr),
\end{equation}
followed by measurements of all $X$- and $Z$-type stabilizers. We refer to the outcomes of these measurements as the $X$-syndrome $S_X$ and the $Z$-syndrome $S_Z$. Without loss of generality, we assume that the goal of the experiment is to successfully decode the $X$-type error configuration.

The $X$- and $Z$-decoding problems are correlated, since $iY = XZ$, that is, up to a phase, a $Y$ error corresponds to a simultaneous bit- and phase-flip. In the DEM representation, this correlation appears as a four-body hyperedge: a single-qubit $Y$ error produces two $X$ detection events and two $Z$ detection events.

This four-body hyperedge can be decomposed into a pair of two-body edges—one belonging to the $X$-decoding graph and the other to the $Z$-decoding graph. This decomposition induces a natural one-to-one mapping between edges of the $Z$-decoding graph and corresponding edges of the $X$-decoding graph. For an edge $e$ in the $Z$-decoding graph, we denote the corresponding edge in the $X$-decoding graph by $\tilde{e}$.

In order to construct a reweighting scheme for this decoding problem, it is necessary to quantify the correlations between the $X$ and $Z$ decoding problems induced by this decomposition.

More generally, the procedure of decomposing non-matchable noise models into matchable ones follows the same principle~\cite{Paler_2023}. The matchable error mechanisms—those that trigger at most two detectors—define a natural edge basis for the decoding graph, while non-matchable mechanisms correspond to hyperedges that can be decomposed onto this basis. 

It is precisely this decomposition that induces correlations between edges, which can then be incorporated through an appropriate reweighting scheme. A special feature of the depolarizing noise model considered here is that the resulting approximate decoding graph separates into two disconnected components: the $X$ and $Z$ decoding graphs.

\subsubsection{Quantifying correlations}

Consider the conditional probabilities for a single qubit to experience an $X$-error given whether or not it has a $Z$-error. Writing $x_{\tilde{e}}, z_e \in \{0,1\}$ for the binary random variables indicating the presence of $X$ and $Z$ errors on the qubit associated with the pair $(e,\tilde{e})$, and let $\mathbb{P}(x_{\tilde{e}} = A \mid z_e = B)$ correspond to the probability that $x_e = A$ given $z_e = B$. A direct calculation from the depolarizing channel yields
\begin{eqnarray}
    \mathbb{P}(x_{\tilde{e}} = 1 \mid z_e = 1)
    &=& \mathbb{P}(x_{\tilde{e}} = 0 \mid z_e = 1) = \frac{1}{2}, \\
    \mathbb{P}(x_{\tilde{e}} = 1 \mid z_e = 0)
    &=& \frac{p/3}{1 - 2p/3}.
\end{eqnarray}
Suppose that for each qubit \(e\), we are told the probability that it has a $Z$ error is, 
$\mathbb{P}(z_e = 1) = \alpha_e \in [0,1]$. Given this knowledge, the single-qubit $X$-error probability becomes
\begin{eqnarray}
    \mathbb{P}(x_{\tilde{e}} = 1)\label{eq:quantify_correl}
    &=&
    \mathbb{P}(x_{\tilde{e}} = 1 \mid z_e = 1)\,\alpha_e
    \nonumber\\
    &&\quad
    + \mathbb{P}(x_{\tilde{e}} = 1 \mid z_e = 0)\,(1 - \alpha_e)
    \\
    &=&
    \frac{\alpha_e}{2}
    + (1 - \alpha_e)\,\frac{p/3}{1 - 2p/3}.
\end{eqnarray}

The idea behind our correlated decoding scheme is that using the worm we can estimate the quantity $\alpha_e$ for all edges of the $Z$-decoding graph. This can be used to then reweight each edge of the $X$ decoding graph individually.

\subsubsection{Correlated decoding algorithm}

We use the worm algorithm to estimate $\alpha_e$ for each edge of the $Z$-decoding problem, given no information about the $X$-decoding problem. These estimated marginals are then used to reweight the corresponding edges in the $X$-decoding graph.

In the absence of any $X$-information, each qubit has the same prior probability of experiencing a $Z$ error, namely $p_Z = 2p/3$. Consequently, all edges of the $Z$-decoding graph $G_Z$ are weighted accordingly.

Sampling over error chains defined by this weighted graph, together with the measured syndrome $S_Z$, yields estimates of the single-qubit $Z$-error marginals, $\alpha = \{\alpha_e\}$.

For each corresponding edge $\tilde{e}$ in the $X$-decoding problem, we then assign a reweighted edge
\begin{equation}
    w^X_{\tilde{e}}
    =
    \frac{\mathbb{P}(x_{\tilde{e}} = 1)}{1 - \mathbb{P}(x_{\tilde{e}} = 1)},
\end{equation}
where $\mathbb{P}(x_{\tilde{e}} = 1)$ is given by \autoref{eq:quantify_correl}. Finally, the worm decoder is applied to this reweighted $X$-decoding graph. The full procedure is summarized in Algorithm~\ref{alg:correl_worm_decoder}.

\begin{algorithm}[H]
\caption{Correlated worm decoder (Depolarizing noise)}
\label{alg:correl_worm_decoder}
\begin{algorithmic}[1]

\State \textbf{Input:} 
$G_X=(V_X,E_X,w^X)$, $G_Z=(V_Z,E_Z,w^Z)$; 
edge map $f: e \mapsto \tilde{e}$; 
syndromes $S_X,S_Z$; 
logical representatives $\mathcal{L}_X$; 
sample counts $N_Z,N_X$; 
autocorrelation time $t_{\text{auto}}$.

\State \textbf{Output:} Recovery $R_X$.

\State $M_0^Z \gets \Call{ReferenceMatching}{G_Z, S_Z}$
\State Reweight $w^z_e \mapsto \tilde{w}^Z_e$, to obtain $\tilde{G}_Z = (V_Z,E_Z,\tilde{w}^Z)$.
\State $\Sigma \gets 0^{|E_Z|}$, \quad $\alpha \gets 0^{|E_Z|}$

\For{$n = 1$ to $N_Z$}
    \For{$t = 1$ to $t_{\text{auto}}$}
        \State $\Sigma \gets \Call{WormUpdate}{\tilde{G}_Z,\Sigma}$
    \EndFor
    \State $\Lambda \gets \Sigma \oplus M_0^Z$
    
    \State $\alpha[e] \gets \alpha[e] + \Lambda[e]/N_Z \quad \forall e \in E_Z$
\EndFor

\For{$e \in E_Z$}
    \State Compute $P(x_{\tilde{e}} = 1) 
    = \frac{\alpha[e]}{2} 
      + (1-\alpha[e])\frac{p/3}{1-2p/3}$
    \State $w^X_{\tilde{e}} \gets 
    P(x_{\tilde{e}}=1)/(1-P(x_{\tilde{e}}=1))$
\EndFor

\State $R_X \gets \Call{WormDecoder}{G_X, S_X, t_{\text{auto}}, \mathcal{L}_X, N_X}$
\State \Return $R_X$

\end{algorithmic}
\end{algorithm}

\begin{figure}
\centering
\includegraphics[width=1.0\linewidth]{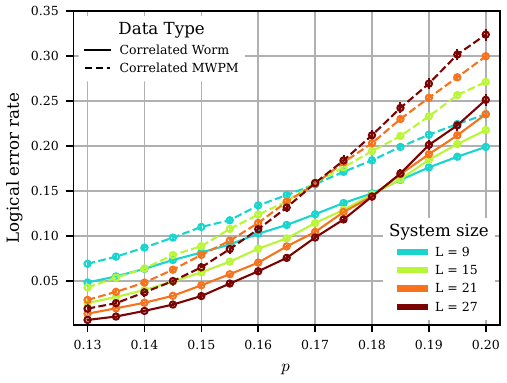}
\caption{Plot showing decoding success rates of the correlated worm decoder for the rotated surface code under depolarizing noise and perfect measurements, for a variety of system sizes and noise strengths $p$.}
    \label{fig:correl_threshold}
\end{figure}

\begin{figure}
    \centering
    \includegraphics[width=1.0\linewidth]{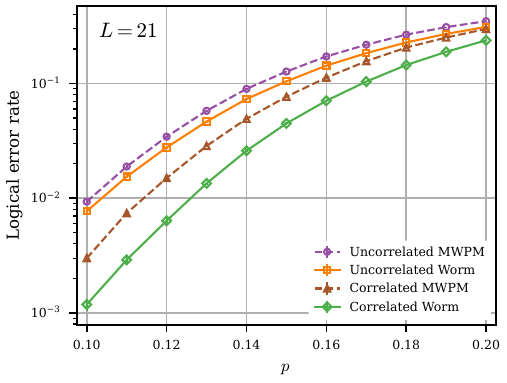}
    \caption{Plot showing the low noise regime behavior of the correlated worm decoder compared to other decoding methods. All data is for $L = 21$.}
    \label{fig:correl_low_p}
\end{figure}

We present numerical results on the performance of the correlated worm decoding scheme for the rotated surface code with perfect measurements. We observe a significant improvement in the decoding threshold compared to correlated MWPM, as well as a substantial enhancement in decoding performance in the low-noise regime. These results are shown in \autoref{fig:correl_threshold} and
\autoref{fig:correl_low_p}, respectively.

In \appref{App:correlated_parsing}, we introduce an iterative parsing scheme in which the correlated reweighting procedure outlined in \autoref{alg:correl_worm_decoder} is applied multiple times. In this approach, the decoder alternates between the $X$- and $Z$-decoding problems, using the worm algorithm at each stage to update edge weights based on newly estimated marginal information. We find that this iterative procedure yields a modest but consistent further improvement in decoding success rates.

Overall, these results demonstrate that exploiting the soft information produced by the worm algorithm yields clear benefits over existing correlated MWPM-based decoding methods. In particular, we observe both a noticeable increase in the threshold and a marked improvement in decoding performance at low physical error rates.

We now turn to the question of how this correlated worm decoding strategy can be generalized to other non-matchable decoding problems. We expect that similar gains can be achieved for a broad class of noise models, potentially leading to systematic improvements over graph-based correlated decoding schemes based on MWPM.

\subsection{Generalizing to all detector models}\label{sec:general_correlated}

We now discuss how the correlated decoding approach introduced for the surface code with depolarizing noise and perfect measurements can be extended to more general non-matchable noise models.

As noted previously, depolarizing noise is special in that it naturally decomposes into two disconnected—but correlated—decoding problems. For more general noise models, such as circuit-level noise, this separation does not occur. In these cases, mapping the DEM from a hypergraph to a correlated graph produces a \emph{single} connected decoding graph whose edges exhibit nontrivial
correlations.

The first step in this generalized setting is to project the hypergraph DEM onto a correlated decoding graph. This is achieved by identifying a basis of edges corresponding to the matchable error mechanisms (those that trigger exactly one or two detectors) of the original decoding
problem~\cite{Paler_2023}. For the surface code with depolarizing noise, these correspond to single-qubit \(X\) and \(Z\) error mechanisms. All hyperedges are then decomposed onto this edge basis, with their contributions incorporated by adjusting the resulting edge weights appropriately. In addition, it is necessary to retain information about which error mechanisms in the original
hypergraph contribute to each edge of the correlated decoding graph.

After this projection step, one obtains a correlated decoding graph in which all correlations are captured implicitly through the decomposition of the original hypergraph. At this stage, the worm algorithm can be applied directly to this graph, allowing sampling over error chains and the estimation of single-edge error marginals.

The next stage consists of a reweighting procedure. Using the empirically estimated edge marginals together with the stored decomposition data, the edge weights of the correlated decoding graph are updated so as to enforce the
correlations induced by non-matchable error mechanisms.

Finally, decoding is performed on this reweighted graph, again using the worm algorithm. As in the depolarizing noise case, this procedure can be straightforwardly extended into an iterative reweighting scheme, where sampling and reweighting are alternated for multiple rounds before a final decoding step is applied.

The construction presented here is schematic but concrete, and outlines a path toward a fully general correlated worm decoding framework. We believe that such a scheme can effectively exploit the soft information
produced by the worm algorithm. A detailed numerical study of scenarios relevant for practical decoding problems, and investigation of optimal projection and reweighting strategies is left for future work.

\section{Discussion and conclusions\label{sec:conclusion}}

In this paper, we have introduced a new algorithm, which efficiently approximates optimal decoding of matchable codes, and which we call the ``worm decoder''. It uses a Markov Chain Monte Carlo algorithm (``the worm'') to sample from the conditional distribution of errors given a syndrome.

The efficiency of this procedure is contingent on fast mixing time of the underlying worm process. We presented a rigorous guarantee for fast mixing conditional on a condition which we call `bounded defect susceptibility'. We connected this notion to the statistical physics of disorder operators in the corresponding statistical mechanics model, and argued fast mixing (and hence efficiency of the worm) for typical errors all the way up to the optimal decoding threshold, implying that the worm realizes this threshold up to finite sample-size errors. 
We also connected the average mixing time of the worm process to the location of the phase boundary of a specific kind of previously unstudied Griffiths phase in the statistical mechanics model. Studying its exact phase boundary remains important future work.

Finally, we also demonstrated the effectiveness of our decoder numerically by performing decoding of both the surface code with measurement errors, and a family of hyperbolic surface codes with constant rate, up to distances $d=23$ and $d=10$, respectively. 
We highlight that the worm algorithm does not only output a recovery operation, but also soft information that is useful for further processing. We demonstrated the usefulness of this soft information explicitly by using it to extend the applicability of the worm algorithm beyond just matchable codes. 
To this end we introduced an iterative reweighting scheme similar to correlated matching and showed significant performance improvement over both uncorrelated worm decoding, as well as correlated matching. 
This demonstrates the feasibility and usefulness of this ``correlated worm'' decoding in the simplest non-trivial setting. 
To move towards utility in realistic decoding scenarios, future work should explore more realistic noise models, and benchmark different reweighting schemes in this setting.

Our paper demonstrates that Markov Chain Monte Carlo (MCMC) methods can be highly competitive in achieving high decoding performance for certain noise models and codes. Crucial to this success is the fact that the worm algorithm works in an extended configuration space, enabling highly non-local moves.  
More generally, ``cluster updates'' have been highly successful in enabling simulation of constrained systems in classical statistical mechanics \cite{wang_swedsen1990review,krauth2003review,ninarello2017review}, but, to the best of our knowledge, have not been previously explored for decoding. 
Cluster updates yield big speedups when they exploit some underlying structure of the model that is being sampled \cite{swedsen_wang1987algorithm,wolff1989algorithm,krauth2003pocket,grigera2001swap,Sandvik_2006,placke2024fracton}. 
In the case presented here, it is the ``matchability'' of the problem that makes the worm useful. A natural question is whether one can design effective update rules, and hence MCMC decoders, for other structured decoding problems.

Another interesting, numerical finding of our work is the fact that the performance gap between optimal (maximum-likelihood) and most-likely-error (MLE) decoding seems to be very small for hyperbolic surface codes (see \autoref{fig:hypmwpm}). We argued that this is a direct consequence of the hyperbolicity of the underlying space, connecting it to ``\emph{$\delta$-hyperbolicity}'', which, roughly speaking, implies that geodesics and almost-geodesics are much less degenerate in hyperbolic space than in Euclidean spaces. 
One natural question is whether one can argue for this relation more rigorously, and, for example, derive explicit asymptotic bounds on the difference between optimal and MLE decoding.
Another interesting question is whether such an implication could generalize to other qLDPC codes. Asymptotically good qLDPC codes, for example, are based on high-dimensional expanders, and they have the property that the size of syndromes grows linearly with the size of (sufficiently small) errors. It would be intriguing if such generalized notions of hyperbolicity would also imply that most-likely-error decoding is close to optimal.

\vspace{1em}

\textbf{Acknowledgements}---We thank Thiago Bergamaschi, Oscar Higgot, David Huse, Harsh Patil, Vedika Khemani, Matt McEwen, and Roderich Moessner for useful discussions. We are particularly grateful to Andreas Galanis for valuable discussions about mixing-time guarantees for the worm process, and for providing the reduction from approximate sampling of even subgraphs to approximate counting of perfect matchings. We are also particularly grateful to Akshat Pandey for many helpful discussions about rare-region effects and the distribution of domain wall free energies in disordered spin models.\\
Z.T.\ acknowledges funding through Vedika Khemani who acknowledges support from the Office
of Naval Research Young Investigator Program (ONR
YIP) under Award Number N000142412098 and from
the Packard Foundation through a Packard Fellowship
in Science and Engineering.
B.P.\ acknowledges funding through a Leverhulme-Peierls Fellowship at the University of Oxford and the Alexander von Humboldt foundation through a Feodor-Lynen fellowship.\\
Part of this research was conducted while Z.T. and B.P were visiting the Kavli Institute for Theoretical Physics (KITP) supported by grant NSF PHY-2309135.
Part of this work was conducted while N.P.B.\ was visiting the Simons Institute for the Theory of Computing supported by DOE
QSA grant \#FP00010905.

\vspace{1em}

\textbf{Code and Data availability}---The simulation data and code necessary to reproduce the results of this work are available at
\url{https://github.com/zac-tobias/Optimal-Decoding-with-the-Worm-Data}.

\bibliography{references} %

@misc{fowler2013optimalcomplexitycorrectioncorrelated,
      title={Optimal complexity correction of correlated errors in the surface code}, 
      author={Austin G. Fowler},
      year={2013},
      eprint={1310.0863},
      archivePrefix={arXiv},
      primaryClass={quant-ph},
      url={https://arxiv.org/abs/1310.0863}, 
}

@article{terhal2015quantum,
  title={Quantum error correction for quantum memories},
  author={Terhal, Barbara M},
  journal={Reviews of Modern Physics},
  volume={87},
  number={2},
  pages={307--346},
  year={2015},
  publisher={APS}
}

@book{bridson2013metric,
  title={Metric spaces of non-positive curvature},
  author={Bridson, Martin R and Haefliger, Andr{\'e}},
  volume={319},
  year={2013},
  publisher={Springer Science \& Business Media}
}

@article{kitaev1997quantum,
  title={Quantum computations: algorithms and error correction},
  author={Kitaev, A Yu},
  journal={Russian Mathematical Surveys},
  volume={52},
  number={6},
  pages={1191},
  year={1997},
  publisher={IOP Publishing}
}

@article{wu2025minimum,
  title={Minimum-weight parity factor decoder for quantum error correction},
  author={Wu, Yue and Li, Binghong and Chang, Kathleen and Puri, Shruti and Zhong, Lin},
  journal={arXiv preprint arXiv:2508.04969},
  year={2025}
}

@article{beni2025tesseract,
  title={Tesseract: A search-based decoder for quantum error correction},
  author={Beni, Laleh Aghababaie and Higgott, Oscar and Shutty, Noah},
  journal={arXiv preprint arXiv:2503.10988},
  year={2025}
}

@article{bravyi2014efficient,
  title={Efficient algorithms for maximum likelihood decoding in the surface code},
  author={Bravyi, Sergey and Suchara, Martin and Vargo, Alexander},
  journal={arXiv preprint arXiv:1405.4883},
  year={2014}
}

@article{iyer2015hardness,
  title={Hardness of decoding quantum stabilizer codes},
  author={Iyer, Pavithran and Poulin, David},
  journal={IEEE Transactions on Information Theory},
  volume={61},
  number={9},
  pages={5209--5223},
  year={2015},
  publisher={IEEE}
}

@article{freedman2001projective,
  title={Projective plane and planar quantum codes},
  author={Freedman, Michael H and Meyer, David A},
  journal={Foundations of Computational Mathematics},
  volume={1},
  number={3},
  pages={325--332},
  year={2001},
  publisher={Springer}
}

@article{bravyi1998quantum,
  title={Quantum codes on a lattice with boundary},
  author={Bravyi, Sergey B and Kitaev, A Yu},
  journal={arXiv preprint quant-ph/9811052},
  year={1998}
}

@incollection{kitaev1997quantum2,
  title={Quantum error correction with imperfect gates},
  author={Kitaev, A Yu},
  booktitle={Quantum communication, computing, and measurement},
  pages={181--188},
  year={1997},
  publisher={Springer}
}

@inproceedings{aharonov1997fault,
  title={Fault-tolerant quantum computation with constant error},
  author={Aharonov, Dorit and Ben-Or, Michael},
  booktitle={Proceedings of the twenty-ninth annual ACM symposium on Theory of computing},
  pages={176--188},
  year={1997}
}

@article{Bombin_2012,
   title={Strong Resilience of Topological Codes to Depolarization},
   volume={2},
   ISSN={2160-3308},
   url={http://dx.doi.org/10.1103/PhysRevX.2.021004},
   DOI={10.1103/physrevx.2.021004},
   number={2},
   journal={Physical Review X},
   publisher={American Physical Society (APS)},
   author={Bombin, H. and Andrist, Ruben S. and Ohzeki, Masayuki and Katzgraber, Helmut G. and Martin-Delgado, M. A.},
   year={2012},
   month=apr }

@misc{chubb2021generaltensornetworkdecoding,
      title={General tensor network decoding of 2D Pauli codes}, 
      author={Christopher T. Chubb},
      year={2021},
      eprint={2101.04125},
      archivePrefix={arXiv},
      primaryClass={quant-ph},
      url={https://arxiv.org/abs/2101.04125}, 
}

@article{PRXQuantum.5.040303,
  title = {Tensor-Network Decoding Beyond 2D},
  author = {Piveteau, Christophe and Chubb, Christopher T. and Renes, Joseph M.},
  journal = {PRX Quantum},
  volume = {5},
  issue = {4},
  pages = {040303},
  numpages = {20},
  year = {2024},
  month = {Oct},
  publisher = {American Physical Society},
  doi = {10.1103/PRXQuantum.5.040303},
  url = {https://link.aps.org/doi/10.1103/PRXQuantum.5.040303}
}

@article{Ohno_2004,
   title={Phase structure of the random-plaquette gauge model: accuracy threshold for a toric quantum memory},
   volume={697},
   ISSN={0550-3213},
   url={http://dx.doi.org/10.1016/j.nuclphysb.2004.07.003},
   DOI={10.1016/j.nuclphysb.2004.07.003},
   number={3},
   journal={Nuclear Physics B},
   publisher={Elsevier BV},
   author={Ohno, Takuya and Arakawa, Gaku and Ichinose, Ikuo and Matsui, Tetsuo},
   year={2004},
   month=oct, pages={462–480} }

@article{Wang_2005,
   title={Worm algorithm for two-dimensional spin glasses},
   volume={72},
   ISSN={1550-2376},
   url={http://dx.doi.org/10.1103/PhysRevE.72.036706},
   DOI={10.1103/physreve.72.036706},
   number={3},
   journal={Physical Review E},
   publisher={American Physical Society (APS)},
   author={Wang, Jian-Sheng},
   year={2005},
   month=sep }

@article{Collevecchio_2016,
   title={The Worm Process for the Ising Model is Rapidly Mixing},
   volume={164},
   ISSN={1572-9613},
   url={http://dx.doi.org/10.1007/s10955-016-1572-2},
   DOI={10.1007/s10955-016-1572-2},
   number={5},
   journal={Journal of Statistical Physics},
   publisher={Springer Science and Business Media LLC},
   author={Collevecchio, Andrea and Garoni, Timothy M. and Hyndman, Timothy and Tokarev, Daniel},
   year={2016},
   month=jul, pages={1082–1102} }

@article{Prokof_ev_2001,
   title={Worm Algorithms for Classical Statistical Models},
   volume={87},
   ISSN={1079-7114},
   url={http://dx.doi.org/10.1103/PhysRevLett.87.160601},
   DOI={10.1103/physrevlett.87.160601},
   number={16},
   journal={Physical Review Letters},
   publisher={American Physical Society (APS)},
   author={Prokof’ev, Nikolay and Svistunov, Boris},
   year={2001},
   month=sep }

@article{Dennis_2002,
   title={Topological quantum memory},
   volume={43},
   ISSN={1089-7658},
   url={http://dx.doi.org/10.1063/1.1499754},
   DOI={10.1063/1.1499754},
   number={9},
   journal={Journal of Mathematical Physics},
   publisher={AIP Publishing},
   author={Dennis, Eric and Kitaev, Alexei and Landahl, Andrew and Preskill, John},
   year={2002},
   month=sep, pages={4452–4505} }

@article{Paler_2023,
   title={Pipelined correlated minimum weight perfect matching of the surface code},
   volume={7},
   ISSN={2521-327X},
   url={http://dx.doi.org/10.22331/q-2023-12-12-1205},
   DOI={10.22331/q-2023-12-12-1205},
   journal={Quantum},
   publisher={Verein zur Forderung des Open Access Publizierens in den Quantenwissenschaften},
   author={Paler, Alexandru and Fowler, Austin G.},
   year={2023},
   month=dec, pages={1205} }

@misc{chen2025scalableaccuracygainspostselection,
      title={Scalable accuracy gains from postselection in quantum error correcting codes}, 
      author={Hongkun Chen and Daohong Xu and Grace M. Sommers and David A. Huse and Jeff D. Thompson and Sarang Gopalakrishnan},
      year={2025},
      eprint={2510.05222},
      archivePrefix={arXiv},
      primaryClass={cond-mat.stat-mech},
      url={https://arxiv.org/abs/2510.05222}, 
}

@article{pattison2021improved,
  title={Improved quantum error correction using soft information},
  author={Pattison, Christopher A and Beverland, Michael E and da Silva, Marcus P and Delfosse, Nicolas},
  journal={arXiv preprint arXiv:2107.13589},
  year={2021}
}

@inproceedings{Delfosse_2014,
   title={A decoding algorithm for CSS codes using the X/Z correlations},
   url={http://dx.doi.org/10.1109/ISIT.2014.6874997},
   DOI={10.1109/isit.2014.6874997},
   booktitle={2014 IEEE International Symposium on Information Theory},
   publisher={IEEE},
   author={Delfosse, Nicolas and Tillich, Jean-Pierre},
   year={2014},
   month=jun, pages={1071–1075} }

@article{PROKOFEV1998253,
title = {“Worm” algorithm in quantum Monte Carlo simulations},
journal = {Physics Letters A},
volume = {238},
number = {4},
pages = {253-257},
year = {1998},
issn = {0375-9601},
doi = {https://doi.org/10.1016/S0375-9601(97)00957-2},
url = {https://www.sciencedirect.com/science/article/pii/S0375960197009572},
author = {N.V Prokof'ev and B.V Svistunov and I.S Tupitsyn}
}

@article{deMarti_iOlius_2024,
   title={Decoding algorithms for surface codes},
   volume={8},
   ISSN={2521-327X},
   url={http://dx.doi.org/10.22331/q-2024-10-10-1498},
   DOI={10.22331/q-2024-10-10-1498},
   journal={Quantum},
   publisher={Verein zur Forderung des Open Access Publizierens in den Quantenwissenschaften},
   author={deMarti iOlius, Antonio and Fuentes, Patricio and Orús, Román and Crespo, Pedro M. and Etxezarreta Martinez, Josu},
   year={2024},
   month=oct, pages={1498} }

@article{Edmonds_1965, title={Paths, Trees, and Flowers}, volume={17}, DOI={10.4153/CJM-1965-045-4}, journal={Canadian Journal of Mathematics}, author={Edmonds, Jack}, year={1965}, pages={449–467}}

@article{Wang_2003,
   title={Confinement-Higgs transition in a disordered gauge theory and the accuracy threshold for quantum memory},
   volume={303},
   ISSN={0003-4916},
   url={http://dx.doi.org/10.1016/S0003-4916(02)00019-2},
   DOI={10.1016/s0003-4916(02)00019-2},
   number={1},
   journal={Annals of Physics},
   publisher={Elsevier BV},
   author={Wang, Chenyang and Harrington, Jim and Preskill, John},
   year={2003},
   month=jan, pages={31–58} }

@article{Chubb_2021,
   title={Statistical mechanical models for quantum codes with correlated noise},
   volume={8},
   ISSN={2308-5835},
   url={http://dx.doi.org/10.4171/AIHPD/105},
   DOI={10.4171/aihpd/105},
   number={2},
   journal={Annales de l’Institut Henri Poincaré D, Combinatorics, Physics and their Interactions},
   publisher={European Mathematical Society - EMS - Publishing House GmbH},
   author={Chubb, Christopher T. and Flammia, Steven T.},
   year={2021},
   month=may, pages={269–321} }

@article{gidney2021stim,
  doi = {10.22331/q-2021-07-06-497},
  url = {https://doi.org/10.22331/q-2021-07-06-497},
  title = {Stim: a fast stabilizer circuit simulator},
  author = {Gidney, Craig},
  journal = {{Quantum}},
  issn = {2521-327X},
  publisher = {{Verein zur F{\"{o}}rderung des Open Access Publizierens
                in den Quantenwissenschaften}},
  volume = 5,
  pages = 497,
  month = jul,
  year = 2021
}

@article{old2023generalized,
  title={Generalized Belief Propagation Decoders for Quantum Error-Correcting Codes},
  author={Old, Julien and Rispler, Manuel},
  journal={Quantum},
  volume={7},
  pages={1037},
  year={2023}
}

@article{Piveteau_2024,
   title={Tensor-Network Decoding Beyond 2D},
   volume={5},
   ISSN={2691-3399},
   url={http://dx.doi.org/10.1103/PRXQuantum.5.040303},
   DOI={10.1103/prxquantum.5.040303},
   number={4},
   journal={PRX Quantum},
   publisher={American Physical Society (APS)},
   author={Piveteau, Christophe and Chubb, Christopher T. and Renes, Joseph M.},
   year={2024},
   month=oct }

@misc{gottesman1997stabilizercodesquantumerror,
      title={Stabilizer Codes and Quantum Error Correction}, 
      author={Daniel Gottesman},
      year={1997},
      eprint={quant-ph/9705052},
      archivePrefix={arXiv},
      primaryClass={quant-ph},
      url={https://arxiv.org/abs/quant-ph/9705052}, 
}

@misc{wichette2025partitionfunctionframeworkestimating,
      title={A partition function framework for estimating logical error curves in stabilizer codes}, 
      author={Leon Wichette and Hans Hohenfeld and Elie Mounzer and Linnea Grans-Samuelsson},
      year={2025},
      eprint={2505.15758},
      archivePrefix={arXiv},
      primaryClass={quant-ph},
      url={https://arxiv.org/abs/2505.15758}, 
}

@article{Wagner2022paulichannelscanbe,
  doi = {10.22331/q-2022-09-19-809},
  url = {https://doi.org/10.22331/q-2022-09-19-809},
  title = {Pauli channels can be estimated from syndrome measurements in quantum error correction},
  author = {Wagner, Thomas and Kampermann, Hermann and Bru{\ss{}}, Dagmar and Kliesch, Martin},
  journal = {{Quantum}},
  issn = {2521-327X},
  publisher = {{Verein zur F{\"{o}}rderung des Open Access Publizierens in den Quantenwissenschaften}},
  volume = {6},
  pages = {809},
  month = sep,
  year = {2022}
}

@article{Wagner_2021,
   title={Optimal noise estimation from syndrome statistics of quantum codes},
   volume={3},
   ISSN={2643-1564},
   url={http://dx.doi.org/10.1103/PhysRevResearch.3.013292},
   DOI={10.1103/physrevresearch.3.013292},
   number={1},
   journal={Physical Review Research},
   publisher={American Physical Society (APS)},
   author={Wagner, Thomas and Kampermann, Hermann and Bruß, Dagmar and Kliesch, Martin},
   year={2021},
   month=mar }

@article{Hastings2021dynamically,
  doi = {10.22331/q-2021-10-19-564},
  url = {https://doi.org/10.22331/q-2021-10-19-564},
  title = {Dynamically {G}enerated {L}ogical {Q}ubits},
  author = {Hastings, Matthew B. and Haah, Jeongwan},
  journal = {{Quantum}},
  issn = {2521-327X},
  publisher = {{Verein zur F{\"{o}}rderung des Open Access Publizierens in den Quantenwissenschaften}},
  volume = {5},
  pages = {564},
  month = oct,
  year = {2021}
}

@misc{kribs2006operatorquantumerrorcorrection,
      title={Operator quantum error correction}, 
      author={David W. Kribs and Raymond Laflamme and David Poulin and Maia Lesosky},
      year={2006},
      eprint={quant-ph/0504189},
      archivePrefix={arXiv},
      primaryClass={quant-ph},
      url={https://arxiv.org/abs/quant-ph/0504189}, 
}

@article{Poulin_2005,
   title={Stabilizer Formalism for Operator Quantum Error Correction},
   volume={95},
   ISSN={1079-7114},
   url={http://dx.doi.org/10.1103/PhysRevLett.95.230504},
   DOI={10.1103/physrevlett.95.230504},
   number={23},
   journal={Physical Review Letters},
   publisher={American Physical Society (APS)},
   author={Poulin, David},
   year={2005},
   month=dec }

@article{hutter2014efficient,
  title = {Efficient Markov chain Monte Carlo algorithm for the surface code},
  author = {Hutter, Adrian and Wootton, James R. and Loss, Daniel},
  journal = {Phys. Rev. A},
  volume = {89},
  issue = {2},
  pages = {022326},
  numpages = {10},
  year = {2014},
  month = {Feb},
  publisher = {American Physical Society},
  doi = {10.1103/PhysRevA.89.022326},
  url = {https://link.aps.org/doi/10.1103/PhysRevA.89.022326}
}

@article{wootton2021high,
  title = {High Threshold Error Correction for the Surface Code},
  author = {Wootton, James R. and Loss, Daniel},
  journal = {Phys. Rev. Lett.},
  volume = {109},
  issue = {16},
  pages = {160503},
  numpages = {5},
  year = {2012},
  month = {Oct},
  publisher = {American Physical Society},
  doi = {10.1103/PhysRevLett.109.160503},
  url = {https://link.aps.org/doi/10.1103/PhysRevLett.109.160503}
}

@misc{lee2025efficient,
Author = {Seok-Hyung Lee and Lucas English and Stephen D. Bartlett},
Title = {Efficient Post-Selection for General Quantum LDPC Codes},
Year = {2025},
Eprint = {arXiv:2510.05795},
}

@article{Sandvik_2006,
   title={Correlations and confinement in nonplanar two-dimensional dimer models},
   volume={73},
   ISSN={1550-235X},
   url={http://dx.doi.org/10.1103/PhysRevB.73.144504},
   DOI={10.1103/physrevb.73.144504},
   number={14},
   journal={Physical Review B},
   publisher={American Physical Society (APS)},
   author={Sandvik, Anders W. and Moessner, R.},
   year={2006},
   month=apr }

@article{10.1063/1.1703953,
    author = {Kasteleyn, P. W.},
    title = {Dimer Statistics and Phase Transitions},
    journal = {Journal of Mathematical Physics},
    volume = {4},
    number = {2},
    pages = {287-293},
    year = {1963},
    month = {02},
    issn = {0022-2488},
    doi = {10.1063/1.1703953},
    url = {https://doi.org/10.1063/1.1703953}
}

@article{10.1063/1.1704825,
    author = {Fisher, Michael E.},
    title = {On the Dimer Solution of Planar Ising Models},
    journal = {Journal of Mathematical Physics},
    volume = {7},
    number = {10},
    pages = {1776-1781},
    year = {1966},
    month = {10},
    issn = {0022-2488},
    doi = {10.1063/1.1704825},
    url = {https://doi.org/10.1063/1.1704825}
}

@misc{wang2024dgrtacklingdriftedcorrelated,
      title={DGR: Tackling Drifted and Correlated Noise in Quantum Error Correction via Decoding Graph Re-weighting}, 
      author={Hanrui Wang and Pengyu Liu and Yilian Liu and Jiaqi Gu and Jonathan Baker and Frederic T. Chong and Song Han},
      year={2024},
      eprint={2311.16214},
      archivePrefix={arXiv},
      primaryClass={quant-ph},
      url={https://arxiv.org/abs/2311.16214}, 
}

@article{jerrum_sinclair_1989_permanent,
author = {Jerrum, Mark and Sinclair, Alistair},
title = {Approximating the Permanent},
journal = {SIAM Journal on Computing},
volume = {18},
number = {6},
pages = {1149-1178},
year = {1989},
doi = {10.1137/0218077},
URL = { https://doi.org/10.1137/0218077 },
eprint = { https://doi.org/10.1137/0218077 }
}

@BOOK{jerrum2003book,
  title     = "Counting, sampling and integrating: Algorithms and complexity",
  author    = "Jerrum, Mark",
  publisher = "Birkhauser Verlag AG",
  series    = "Lectures in Mathematics. ETH Z{\"u}rich",
  edition   =  2003,
  month     =  jan,
  year      =  2003,
  address   = "Basel, Switzerland"
}

@article{jerrum_sinclair_vigoda_2004_permanent,
author = {Jerrum, Mark and Sinclair, Alistair and Vigoda, Eric},
title = {A polynomial-time approximation algorithm for the permanent of a matrix with nonnegative entries},
year = {2004},
issue_date = {July 2004},
publisher = {Association for Computing Machinery},
address = {New York, NY, USA},
volume = {51},
number = {4},
issn = {0004-5411},
url = {https://doi.org/10.1145/1008731.1008738},
doi = {10.1145/1008731.1008738},
journal = {J. ACM},
month = jul,
pages = {671–697},
numpages = {27}
}

@article{wolff1989algorithm,
  title = {Collective Monte Carlo Updating for Spin Systems},
  author = {Wolff, Ulli},
  journal = {Phys. Rev. Lett.},
  volume = {62},
  issue = {4},
  pages = {361--364},
  numpages = {0},
  year = {1989},
  month = {Jan},
  publisher = {American Physical Society},
  doi = {10.1103/PhysRevLett.62.361},
  url = {https://link.aps.org/doi/10.1103/PhysRevLett.62.361}
}

@article{swedsen_wang1987algorithm,
  title = {Nonuniversal critical dynamics in Monte Carlo simulations},
  author = {Swendsen, Robert H. and Wang, Jian-Sheng},
  journal = {Phys. Rev. Lett.},
  volume = {58},
  issue = {2},
  pages = {86--88},
  numpages = {0},
  year = {1987},
  month = {Jan},
  publisher = {American Physical Society},
  doi = {10.1103/PhysRevLett.58.86},
  url = {https://link.aps.org/doi/10.1103/PhysRevLett.58.86}
}

@article{wang_swedsen1990review,
title = {Cluster Monte Carlo algorithms},
journal = {Physica A: Statistical Mechanics and its Applications},
volume = {167},
number = {3},
pages = {565-579},
year = {1990},
issn = {0378-4371},
doi = {https://doi.org/10.1016/0378-4371(90)90275-W},
url = {https://www.sciencedirect.com/science/article/pii/037843719090275W},
author = {Jian-Sheng Wang and Robert H. Swendsen}
}

@article{krauth2003pocket,
  title = {Pocket Monte Carlo algorithm for classical doped dimer models},
  author = {Krauth, Werner and Moessner, R.},
  journal = {Phys. Rev. B},
  volume = {67},
  issue = {6},
  pages = {064503},
  numpages = {6},
  year = {2003},
  month = {Feb},
  publisher = {American Physical Society},
  doi = {10.1103/PhysRevB.67.064503},
  url = {https://link.aps.org/doi/10.1103/PhysRevB.67.064503}
}

@misc{krauth2003review,
Author = {Werner Krauth},
Title = {Cluster Monte Carlo algorithms},
Year = {2003},
Eprint = {arXiv:cond-mat/0311623},
}

@article{grigera2001swap,
  title = {Fast Monte Carlo algorithm for supercooled soft spheres},
  author = {Grigera, Tom\'as S. and Parisi, Giorgio},
  journal = {Phys. Rev. E},
  volume = {63},
  issue = {4},
  pages = {045102},
  numpages = {4},
  year = {2001},
  month = {Mar},
  publisher = {American Physical Society},
  doi = {10.1103/PhysRevE.63.045102},
  url = {https://link.aps.org/doi/10.1103/PhysRevE.63.045102}
}

@article{ninarello2017review,
  title = {Models and Algorithms for the Next Generation of Glass Transition Studies},
  author = {Ninarello, Andrea and Berthier, Ludovic and Coslovich, Daniele},
  journal = {Phys. Rev. X},
  volume = {7},
  issue = {2},
  pages = {021039},
  numpages = {22},
  year = {2017},
  month = {Jun},
  publisher = {American Physical Society},
  doi = {10.1103/PhysRevX.7.021039},
  url = {https://link.aps.org/doi/10.1103/PhysRevX.7.021039}
}

@article{placke2024fracton,
  title = {Ising fracton spin liquid on the honeycomb lattice},
  author = {Placke, Benedikt and Benton, Owen and Moessner, Roderich},
  journal = {Phys. Rev. B},
  volume = {110},
  issue = {2},
  pages = {L020401},
  numpages = {5},
  year = {2024},
  month = {Jul},
  publisher = {American Physical Society},
  doi = {10.1103/PhysRevB.110.L020401},
  url = {https://link.aps.org/doi/10.1103/PhysRevB.110.L020401}
}

@article{syljuaasen2002qmc,
  title = {Quantum Monte Carlo with directed loops},
  author = {Sylju\aa{}sen, Olav F. and Sandvik, Anders W.},
  journal = {Phys. Rev. E},
  volume = {66},
  issue = {4},
  pages = {046701},
  numpages = {28},
  year = {2002},
  month = {Oct},
  publisher = {American Physical Society},
  doi = {10.1103/PhysRevE.66.046701},
  url = {https://link.aps.org/doi/10.1103/PhysRevE.66.046701}
}

@misc{sandvik2003deconfined,
Author = {Anders W. Sandvik},
Title = {Deconfinement and criticality in extended two-dimensional dimer models},
Year = {2003},
Eprint = {arXiv:cond-mat/0312097},
}

@inproceedings{sinclair1992improved,
author = {Sinclair, Alistair},
title = {Improved Bounds for Mixing Rates of Marked Chains and Multicommodity Flow},
year = {1992},
isbn = {3540552847},
publisher = {Springer-Verlag},
address = {Berlin, Heidelberg},
booktitle = {Proceedings of the 1st Latin American Symposium on Theoretical Informatics},
pages = {474–487},
numpages = {14},
series = {LATIN '92}
}

@misc{chen2025faster,
Author = {Xiaoyu Chen and Weiming Feng and Zhe Ju and Tianshun Miao and Yitong Yin and Xinyuan Zhang},
Title = {Faster Mixing of the Jerrum-Sinclair Chain},
Year = {2025},
Eprint = {arXiv:2504.02740},
}

@InProceedings{stefankovi2018torpid,
author="{\v{S}}tefankovi{\v{c}}, Daniel
and Vigoda, Eric
and Wilmes, John",
editor="Bender, Michael A.
and Farach-Colton, Mart{\'i}n
and Mosteiro, Miguel A.",
title="On Counting Perfect Matchings in General Graphs",
booktitle="LATIN 2018: Theoretical Informatics",
year="2018",
publisher="Springer International Publishing",
address="Cham",
pages="873--885",
isbn="978-3-319-77404-6"
}

@article{Barahona1982complexity_spin_glass,
doi = {10.1088/0305-4470/15/10/028},
url = {https://doi.org/10.1088/0305-4470/15/10/028},
year = {1982},
month = {oct},
publisher = {},
volume = {15},
number = {10},
pages = {3241},
author = {F Barahona},
title = {On the computational complexity of Ising spin glass models},
journal = {Journal of Physics A: Mathematical and General}
}

@inproceedings{Broder1986worm,
  title={How hard is it to marry at random? (On the approximation of the permanent)},
  author={Andrei Z. Broder},
  booktitle={Symposium on the Theory of Computing},
  year={1986},
  url={https://api.semanticscholar.org/CorpusID:10144946}
}

@inproceedings{guo2019zeros,
author = {Guo, Heng and Liao, Chao and Lu, Pinyan and Zhang, Chihao},
title = {Zeros of holant problems: locations and algorithms},
year = {2019},
publisher = {Society for Industrial and Applied Mathematics},
address = {USA},
booktitle = {Proceedings of the Thirtieth Annual ACM-SIAM Symposium on Discrete Algorithms},
pages = {2262–2278},
numpages = {17},
location = {San Diego, California},
series = {SODA '19}
}

@article{fradkin2017disorder_ops,
	author = {Fradkin, Eduardo},
	date = {2017/05/01},
	doi = {10.1007/s10955-017-1737-7},
	id = {Fradkin2017},
	isbn = {1572-9613},
	journal = {Journal of Statistical Physics},
	number = {3},
	pages = {427--461},
	title = {Disorder Operators and Their Descendants},
	url = {https://doi.org/10.1007/s10955-017-1737-7},
	volume = {167},
	year = {2017}}

@article{griffiths1969effects,
  title = {Nonanalytic Behavior Above the Critical Point in a Random Ising Ferromagnet},
  author = {Griffiths, Robert B.},
  journal = {Phys. Rev. Lett.},
  volume = {23},
  issue = {1},
  pages = {17--19},
  numpages = {0},
  year = {1969},
  month = {Jul},
  publisher = {American Physical Society},
  doi = {10.1103/PhysRevLett.23.17},
  url = {https://link.aps.org/doi/10.1103/PhysRevLett.23.17}
}

@article{gruzberg2001rbim,
  title = {Random-bond Ising model in two dimensions: The Nishimori line and supersymmetry},
  author = {Gruzberg, Ilya A. and Read, N. and Ludwig, Andreas W. W.},
  journal = {Phys. Rev. B},
  volume = {63},
  issue = {10},
  pages = {104422},
  numpages = {27},
  year = {2001},
  month = {Feb},
  publisher = {American Physical Society},
  doi = {10.1103/PhysRevB.63.104422},
  url = {https://link.aps.org/doi/10.1103/PhysRevB.63.104422}
}

@article{mildenberger2006griffiths,
  title = {Griffiths phase in the thermal quantum Hall effect},
  author = {Mildenberger, A. and Evers, F. and Narayanan, R. and Mirlin, A. D. and Damle, K.},
  journal = {Phys. Rev. B},
  volume = {73},
  issue = {12},
  pages = {121301},
  numpages = {4},
  year = {2006},
  month = {Mar},
  publisher = {American Physical Society},
  doi = {10.1103/PhysRevB.73.121301},
  url = {https://link.aps.org/doi/10.1103/PhysRevB.73.121301}
}

@article{google2025below_threshold,
	author = {Acharya, Rajeev and Abanin, Dmitry A. and Aghababaie-Beni, Laleh and Aleiner, Igor and Andersen, Trond I. and Ansmann, Markus and Arute, Frank and Arya, Kunal and Asfaw, Abraham and Astrakhantsev, Nikita and Atalaya, Juan and Babbush, Ryan and Bacon, Dave and Ballard, Brian and Bardin, Joseph C. and Bausch, Johannes and Bengtsson, Andreas and Bilmes, Alexander and Blackwell, Sam and Boixo, Sergio and Bortoli, Gina and Bourassa, Alexandre and Bovaird, Jenna and Brill, Leon and Broughton, Michael and Browne, David A. and Buchea, Brett and Buckley, Bob B. and Buell, David A. and Burger, Tim and Burkett, Brian and Bushnell, Nicholas and Cabrera, Anthony and Campero, Juan and Chang, Hung-Shen and Chen, Yu and Chen, Zijun and Chiaro, Ben and Chik, Desmond and Chou, Charina and Claes, Jahan and Cleland, Agnetta Y. and Cogan, Josh and Collins, Roberto and Conner, Paul and Courtney, William and Crook, Alexander L. and Curtin, Ben and Das, Sayan and Davies, Alex and De Lorenzo, Laura and Debroy, Dripto M. and Demura, Sean and Devoret, Michel and Di Paolo, Agustin and Donohoe, Paul and Drozdov, Ilya and Dunsworth, Andrew and Earle, Clint and Edlich, Thomas and Eickbusch, Alec and Elbag, Aviv Moshe and Elzouka, Mahmoud and Erickson, Catherine and Faoro, Lara and Farhi, Edward and Ferreira, Vinicius S. and Burgos, Leslie Flores and Forati, Ebrahim and Fowler, Austin G. and Foxen, Brooks and Ganjam, Suhas and Garcia, Gonzalo and Gasca, Robert and Genois, {\'E}lie and Giang, William and Gidney, Craig and Gilboa, Dar and Gosula, Raja and Dau, Alejandro Grajales and Graumann, Dietrich and Greene, Alex and Gross, Jonathan A. and Habegger, Steve and Hall, John and Hamilton, Michael C. and Hansen, Monica and Harrigan, Matthew P. and Harrington, Sean D. and Heras, Francisco J. H. and Heslin, Stephen and Heu, Paula and Higgott, Oscar and Hill, Gordon and Hilton, Jeremy and Holland, George and Hong, Sabrina and Huang, Hsin-Yuan and Huff, Ashley and Huggins, William J. and Ioffe, Lev B. and Isakov, Sergei V. and Iveland, Justin and Jeffrey, Evan and Jiang, Zhang and Jones, Cody and Jordan, Stephen and Joshi, Chaitali and Juhas, Pavol and Kafri, Dvir and Kang, Hui and Karamlou, Amir H. and Kechedzhi, Kostyantyn and Kelly, Julian and Khaire, Trupti and Khattar, Tanuj and Khezri, Mostafa and Kim, Seon and Klimov, Paul V. and Klots, Andrey R. and Kobrin, Bryce and Kohli, Pushmeet and Korotkov, Alexander N. and Kostritsa, Fedor and Kothari, Robin and Kozlovskii, Borislav and Kreikebaum, John Mark and Kurilovich, Vladislav D. and Lacroix, Nathan and Landhuis, David and Lange-Dei, Tiano and Langley, Brandon W. and Laptev, Pavel and Lau, Kim-Ming and Le Guevel, Lo{\"\i}ck and Ledford, Justin and Lee, Joonho and Lee, Kenny and Lensky, Yuri D. and Leon, Shannon and Lester, Brian J. and Li, Wing Yan and Li, Yin and Lill, Alexander T. and Liu, Wayne and Livingston, William P. and Locharla, Aditya and Lucero, Erik and Lundahl, Daniel and Lunt, Aaron and Madhuk, Sid and Malone, Fionn D. and Maloney, Ashley and Mandr{\`a}, Salvatore and Manyika, James and Martin, Leigh S. and Martin, Orion and Martin, Steven and Maxfield, Cameron and McClean, Jarrod R. and McEwen, Matt and Meeks, Seneca and Megrant, Anthony and Mi, Xiao and Miao, Kevin C. and Mieszala, Amanda and Molavi, Reza and Molina, Sebastian and Montazeri, Shirin and Morvan, Alexis and Movassagh, Ramis and Mruczkiewicz, Wojciech and Naaman, Ofer and Neeley, Matthew and Neill, Charles and Nersisyan, Ani and Neven, Hartmut and Newman, Michael and Ng, Jiun How and Nguyen, Anthony and Nguyen, Murray and Ni, Chia-Hung and Niu, Murphy Yuezhen and O'Brien, Thomas E. and Oliver, William D. and Opremcak, Alex and Ottosson, Kristoffer and Petukhov, Andre and Pizzuto, Alex and Platt, John and Potter, Rebecca and Pritchard, Orion and Pryadko, Leonid P. and Quintana, Chris and Ramachandran, Ganesh and Reagor, Matthew J. and Redding, John and Rhodes, David M. and Roberts, Gabrielle and Rosenberg, Eliott and Rosenfeld, Emma and Roushan, Pedram and Rubin, Nicholas C. and Saei, Negar and Sank, Daniel and Sankaragomathi, Kannan and Satzinger, Kevin J. and Schurkus, Henry F. and Schuster, Christopher and Senior, Andrew W. and Shearn, Michael J. and Shorter, Aaron and Shutty, Noah and Shvarts, Vladimir and Singh, Shraddha and Sivak, Volodymyr and Skruzny, Jindra and Small, Spencer and Smelyanskiy, Vadim and Smith, W. Clarke and Somma, Rolando D. and Springer, Sofia and Sterling, George and Strain, Doug and Suchard, Jordan and Szasz, Aaron and Sztein, Alex and Thor, Douglas and Torres, Alfredo and Torunbalci, M. Mert and Vaishnav, Abeer and Vargas, Justin and Vdovichev, Sergey and Vidal, Guifre and Villalonga, Benjamin and Heidweiller, Catherine Vollgraff and Waltman, Steven and Wang, Shannon X. and Ware, Brayden and Weber, Kate and Weidel, Travis and White, Theodore and Wong, Kristi and Woo, Bryan W. K. and Xing, Cheng and Yao, Z. Jamie and Yeh, Ping and Ying, Bicheng and Yoo, Juhwan and Yosri, Noureldin and Young, Grayson and Zalcman, Adam and Zhang, Yaxing and Zhu, Ningfeng and Zobrist, Nicholas and Google Quantum AI and Collaborators},
	date = {2025/02/01},
	doi = {10.1038/s41586-024-08449-y},
	id = {Acharya2025},
	isbn = {1476-4687},
	journal = {Nature},
	number = {8052},
	pages = {920--926},
	title = {Quantum error correction below the surface code threshold},
	url = {https://doi.org/10.1038/s41586-024-08449-y},
	volume = {638},
	year = {2025}}

@ARTICLE{breuckmann2016constructions,
  author={Breuckmann, Nikolas P. and Terhal, Barbara M.},
  journal={IEEE Transactions on Information Theory}, 
  title={Constructions and Noise Threshold of Hyperbolic Surface Codes}, 
  year={2016},
  volume={62},
  number={6},
  pages={3731-3744},
  doi={10.1109/TIT.2016.2555700}}

@book{levin2017markov,
  title={Markov chains and mixing times},
  author={Levin, David A and Peres, Yuval},
  volume={107},
  year={2017},
  publisher={American Mathematical Soc.}
}

@article{schweinsberg2002relaxation_time,
author = {Schweinsberg, Jason},
title = {An O(n2) bound for the relaxation time of a Markov chain on cladograms},
journal = {Random Structures \& Algorithms},
volume = {20},
number = {1},
pages = {59-70},
doi = {https://doi.org/10.1002/rsa.1029},
url = {https://onlinelibrary.wiley.com/doi/abs/10.1002/rsa.1029},
eprint = {https://onlinelibrary.wiley.com/doi/pdf/10.1002/rsa.1029},
year = {2002}
}

@book{grimmett1999percolation,
  title={Percolation},
  author={Grimmett, G.},
  isbn={9783540649021},
  lccn={99021041},
  series={Die Grundlehren der mathematischen Wissenschaften in Einzeldarstellungen},
  url={https://books.google.co.uk/books?id=lTSzVXTOCpIC},
  year={1999},
  publisher={Springer}
}

@book{DemboZeitouni1998LargeDeviations,
  author    = {Dembo, Amir and Zeitouni, Ofer},
  title     = {Large Deviations Techniques and Applications},
  series    = {Stochastic Modelling and Applied Probability},
  volume    = {38},
  publisher = {Springer},
  address   = {Berlin},
  edition   = {2},
  year      = {2009},
  doi       = {10.1007/978-3-642-03311-7}
}

@misc{pandey2026low,
Author = {Akshat Pandey and Aditya Mahadevan and A. Alan Middleton and Daniel S. Fisher},
Title = {Low-temperature transition of 2d random-bond Ising model and quantum infinite randomness},
Year = {2026},
Eprint = {arXiv:2603.02308},
}

@article{varadhan2008ldp,
 author = {Varadhan, S. R. S.},
 title = {Large deviations},
 fjournal = {The Annals of Probability},
 journal = {Ann. Probab.},
 issn = {0091-1798},
 volume = {36},
 number = {2},
 pages = {397--419},
 year = {2008},
 doi = {10.1214/07-AOP348},
 zbMATH = {5264694},
 Zbl = {1146.60003}
}

@Article{burenev2025ldp_intro,
	title={{An introduction to large deviations with applications in physics}},
	author={Ivan N. Burenev and Daniël W. H. Cloete and Vansh Kharbanda and Hugo Touchette},
	journal={SciPost Phys. Lect. Notes},
	pages={104},
	year={2025},
	publisher={SciPost},
	doi={10.21468/SciPostPhysLectNotes.104},
	url={https://scipost.org/10.21468/SciPostPhysLectNotes.104},
}

@misc{temkin2025charge_informed_qec,
Author = {Vlad Temkin and Zack Weinstein and Ruihua Fan and Daniel Podolsky and Ehud Altman},
Title = {Charge-Informed Quantum Error Correction},
Year = {2025},
Eprint = {arXiv:2512.22119},
}

@misc{aitchison2025spacetime_spins,
Author = {Cory T. Aitchison and Benjamin Béri},
Title = {Spacetime Spins: Statistical mechanics for error correction with stabilizer circuits},
Year = {2025},
Eprint = {arXiv:2512.21991},
}

@article{leon2004hoeffding,
author = {Carlos A. Le{\'o}n and Fran{\c{c}}ois Perron},
title = {{Optimal Hoeffding bounds for discrete reversible Markov chains}},
volume = {14},
journal = {The Annals of Applied Probability},
number = {2},
publisher = {Institute of Mathematical Statistics},
pages = {958 -- 970},
year = {2004},
doi = {10.1214/105051604000000170},
URL = {https://doi.org/10.1214/105051604000000170}
}

@article{delfosse2020erasure_ml,
  title = {Linear-time maximum likelihood decoding of surface codes over the quantum erasure channel},
  author = {Delfosse, Nicolas and Z\'emor, Gilles},
  journal = {Phys. Rev. Res.},
  volume = {2},
  issue = {3},
  pages = {033042},
  numpages = {5},
  year = {2020},
  month = {Jul},
  publisher = {American Physical Society},
  doi = {10.1103/PhysRevResearch.2.033042},
  url = {https://link.aps.org/doi/10.1103/PhysRevResearch.2.033042}
}

\appendix

\section{Variations of the Worm Algorithm}\label{App:worm_variants}

The worm algorithm has been studied extensively in the statistical physics literature, resulting in several closely related variants. We briefly describe two such variants below, both of which can be incorporated into the worm decoder framework. In \autoref{alg:worm_decoder}, the \texttt{WormUpdate} step may be implemented using any of these algorithms.

All of these algorithms are based on the same underlying principle: they operate on an enlarged configuration space rather than restricting updates to closed cycles alone. This enlargement enables non-local updates of cycle configurations and is the primary reason for their empirical efficiency.

Despite differences in their update rules, all of these variants define Markov chain Monte Carlo processes whose marginal distribution on closed cycles is identical. As a result, each variant can be used to sample error chains for matchable decoding problems.

In the following, we focus on two specific variants: the directed worm algorithm and the directed dimer worm algorithm.

\subsection{Directed worm algorithm}\label{app:directed_worm}
The directed worm algorithm is a slight variation of the symmetric worm
algorithm introduced in \autoref{alg:wormprocess}. In this variant, one virtual defect
is kept fixed while the other is allowed to move, whereas in the symmetric worm
algorithm both defects are mobile and one is chosen at random at each update
step \cite{Wang_2005}. In our numerical implementation of the worm decoder, presented in \autoref{sec:numerical_results}, we employ this directed variant of the worm algorithm.

Because the two virtual defects play distinct roles in the update steps, the space of configurations explored by the directed worm algorithm differs slightly from that of the symmetric version. Specifically, the configuration space is given by

\begin{equation}
    W^* = \mathcal{C}_0 \cup \mathcal{C}_2^*,
\end{equation}

where \(\mathcal{C}_0\) is the space of cycles defined in \autoref{eq:cycle_defintion}, and \(\mathcal{C}_2^*\) denotes the set of error chains with two open ends, together with a specified choice of \emph{head} and \emph{tail}, corresponding respectively to the mobile and fixed virtual defects. Equivalently, each element of $\mathcal{C}_2$ admits an orientation so $\mathcal{C}_2^* \cong \mathcal{C}_2 \times \{0,1\}$.

The directed worm algorithm is formally described in \autoref{alg:wormprocess_directed}.

When the distinction between head and tail is ignored, the stationary measure of this Markov chain coincides exactly with the Prokof'ev--Svistunov measure introduced in \autoref{eq:PS_measure}.

We now turn to a further variant of the worm approach, which maps the ensemble of cycles onto an ensemble of dimer configurations of a dressed graph.

\begin{algorithm}[H]
\caption{The Directed Worm Algorithm}
\label{alg:wormprocess_directed}
\begin{algorithmic}[1]
\State \textbf{Input:} Weighted graph \(\tilde{G} = (V,E,\tilde{w})\); current loop configuration \(\Sigma\).
\State \textbf{Output:} Updated loop configuration \(\Sigma'\).

\State Choose \(i_0 \in V\) uniformly at random.
\State Set \(i \gets i_0\) and \(\Sigma' \gets \Sigma\).

\Repeat
    \State Choose \(j \in N(i)\) uniformly at random.
    \State Accept with probability
    \[
    a(i\!\to\! j)=\min\!\left(1,\,\frac{d(i)}{d(j)}\,\tilde{w}_{ij}^{\,1-2\Sigma'_{ij}}\right).
    \]
    \If{accepted}
        \State \(\Sigma'_{ij} \gets 1-\Sigma'_{ij}\), \quad \(i \gets j\).
    \EndIf
\Until{\(i = i_0\)}

\State \Return \(\Sigma'\)
\end{algorithmic}
\end{algorithm}

\subsection{Dimer worm algorithm}

We discuss another variant of the worm algorithm, which can also be implemented within the worm decoder. This version of the worm algorithm acts on the space of \emph{dimer configurations}, also known as \emph{perfect matchings} in the graph-theoretic sense. We caution that this use of the term `matching' is distinct from its use in minimum-weight perfect matching (MWPM), where a `matching' refers to a set of error chains consistent with an observed syndrome---despite the fact that MWPM is itself implemented via a perfect matching algorithm applied to the graph of triggered detectors.

Given a graph \( G = (V, E) \), a \emph{dimer} is an edge \( e \in E \) that occupies (covers) the two vertices incident to it. A \emph{dimer configuration} on \( G \) is a subset of edges \( M \subset E \) such that no two edges in \( M \) share a common vertex, and every vertex in \( V \) is incident to exactly one edge in \( M \). 
Each dimer configurations therefore corresponds to a permutation of the vertices.

In a manner entirely analogous to the enlarged state space encountered in cycle-based variants of the worm algorithm, the worm algorithm on dimer configurations also introduces \emph{virtual defects} (monomers). These virtual defects correspond to vertices that are not part of any dimer. The MCMC algorithm introduced in \autoref{alg:wormprocess_dimer} therefore operates on an enlarged state space that includes such configurations.

The relationship between cycles and dimers arises due to the introduction of a \emph{decorated graph}. To this end, one replaces all vertices of the original graph with \emph{gadgets}. It follows that there is an induced $2^{\abs{V}}$-to-one mapping between valid dimer configurations on the decorated graph and cycles on the original graph. We discuss the decoration procedure in more detail in the following.

\subsubsection{Introducing the decorated graph}

By replacing the vertices of a weighted graph \( G = (V, E, w) \) with appropriate gadgets, one can construct a new weighted graph. Here we introduce an adaptation of the Kasteleyn construction \cite{10.1063/1.1703953,10.1063/1.1704825}, following the procedure described below.

Consider a vertex \( v \in V \) of degree (coordination number) \( d_v \) in the original graph. We replace this vertex by a local gadget using the following prescription:

\begin{enumerate}
  \item Construct a regular \( d_v \)-gon and attach an outward-pointing triangle to each of its edges.
  \item Add an external edge to the tip of each triangle.
  \item Assign the original edge weights to the external edges, and set all internal edge weights to \( w = 1 \).
\end{enumerate}

This construction is illustrated in \autoref{fig:dressed_lattice} for the cases \( d_v = 4 \) and \( d_v = 6 \).

\begin{figure}[htbp]
\centering
\includegraphics[width=1.0\linewidth]{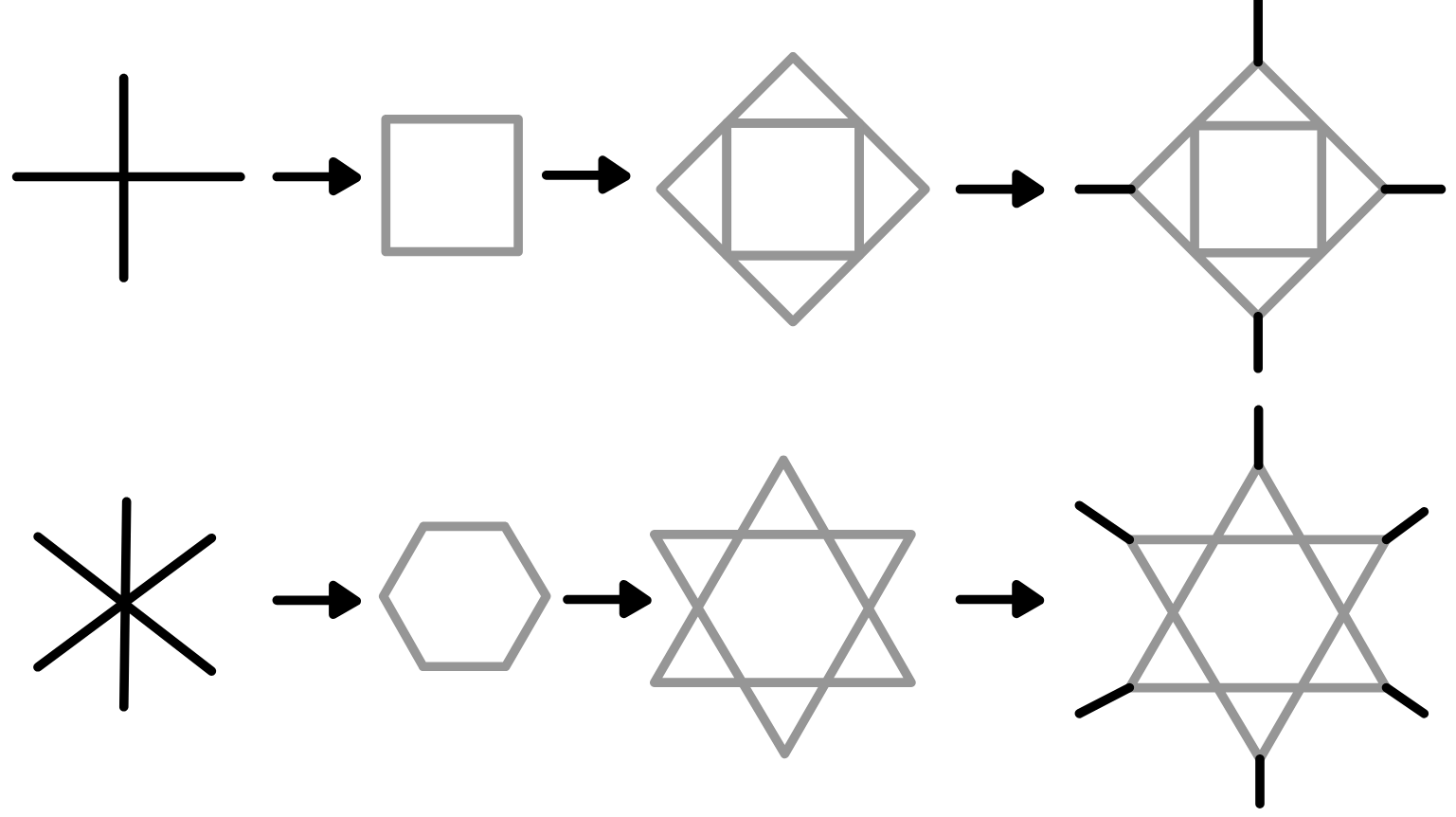}
\caption{Illustration of the gadget construction for vertices with coordination numbers \( d_v = 4 \) and \( d_v = 6 \). The internal edges are shown in gray and have weight $w = 1$.}
\label{fig:dressed_lattice}
\end{figure}

This construction induces a one-to-$2^{\abs{V}}$ mapping between loop configurations on the original graph and dimer configurations on the dressed lattice. In particular, if a dimer occupies an external edge of a gadget, then the corresponding loop configuration on the original graph includes the associated edge. Examples of this mapping for the case \( d_v = 6 \) are shown in \autoref{fig:dimer_mapping}.

\begin{figure}[htbp]
\centering
\includegraphics[width=1.0\linewidth]{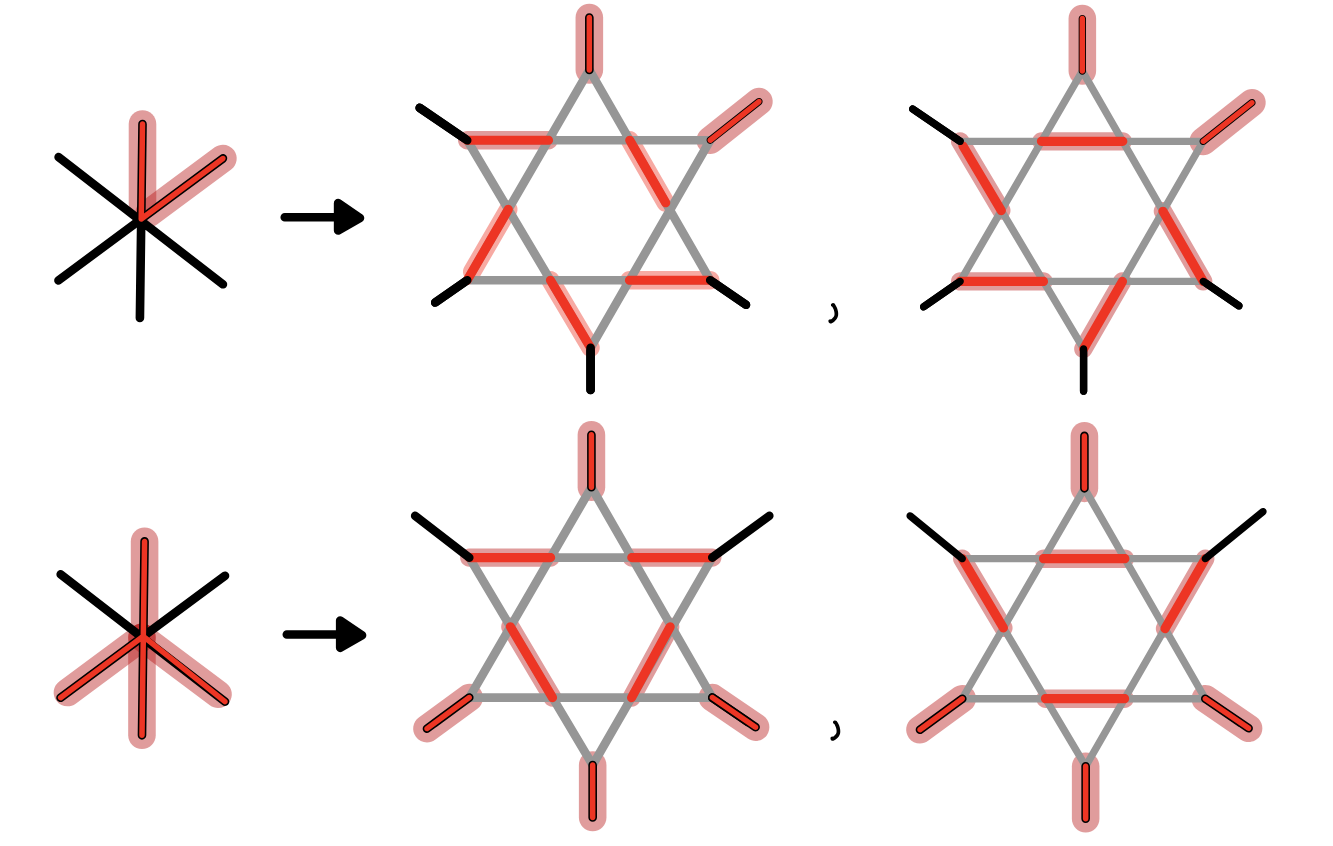}
\caption{Examples illustrating the one-to-two mapping between local edge configurations on a vertex and dimer configurations on the dressed lattice.}
\label{fig:dimer_mapping}
\end{figure}

One can further verify that configurations containing non-cyclic (open) loops on the original graph do not admit a corresponding mapping to valid dimer configurations on the dressed lattice. Additionally, consider the Gibbs distribution over dimer configurations,
\begin{equation}\label{eq:dimer_Gibbs}
    \mathbb{P}_{\text{dimer}}(M) \propto \prod_{e \in M} w_e .
\end{equation}
This distribution coincides with the original Gibbs distribution defined over cycle configurations on the original graph. This follows from the fact that the weights of the external edges of the dressed lattice are unchanged, while all internal edges are assigned unit weight. Consequently, sampling from the Gibbs distribution of dimer configurations on the dressed lattice induces samples
from the corresponding cycle ensemble on the original weighted graph.

This observation naturally motivates the use of dimer-based worm algorithms, which provide a faithful means of sampling from Gibbs distributions over loop configurations on weighted graphs. As a result, such algorithms can be employed to perform optimal decoding for matchable codes.

\subsubsection{The worm algorithm for dimers}
We now introduce a Markov Chain Monte Carlo (MCMC) algorithm \cite{Sandvik_2006}, which when marginalized over perfect matchings, can be used to faithfully sample over the Gibbs distribution given in \autoref{eq:dimer_Gibbs}.

Conceptually, this algorithm shares many similarities with the other variants of the worm algorithm introduced previously. The basic move consists of breaking a dimer into two virtual defects and let (one of) the defects to perform a stochastic random walk until it returns to the other defect and a new valid dimer configuration is formed.

We present the directed dimer worm algorithm in \autoref{alg:wormprocess_dimer}.
Given a configuration of dimers (potentially also with a pair of virtual defects), if a site $v$ is part of a dimer, we denote the site matched to $v$ by $\Sigma_D[v]$.

\begin{algorithm}[H]
\caption{The Directed Dimer Worm Algorithm}
\label{alg:wormprocess_dimer}
\begin{algorithmic}[1]
\State \textbf{Input:} Weighted decorated graph \(\tilde{G}_D = (V,E,\tilde{w})\); current dimer configuration \(\Sigma_D\).
\State \textbf{Output:} Updated dimer configuration \(\Sigma'_D\).

\State Choose a vertex \(i_0 \in V\) uniformly at random.
\State Set \(j \gets \Sigma_D[i_0]\).
\State Set \(\Sigma'_D \gets \Sigma_D\).

\While{\(j \neq i_0\)}
    \State Choose a neighbor \(i_1^\ast \in N(j)\) with probability
    \Statex \hspace{1.2em} proportional to \(\tilde{w}_{j i_1^\ast}\).
    \State Let \(i_2^\ast \gets \Sigma'_D[i_1^\ast]\).
    \State Remove the dimer \((i_1^\ast, i_2^\ast)\) from \(\Sigma'_D\).
    \State Add the dimer \((i_1^\ast, j)\) to \(\Sigma'_D\).
    \State Set \(j \gets i_2^\ast\).
\EndWhile

\State \Return \(\Sigma'_D\)
\end{algorithmic}
\end{algorithm}
An apparent advantage of this algorithm is that it is rejection-free, in contrast to both the symmetric and directed worm algorithms. However, a virtual defect can still become effectively ``stuck'' locally within a gadget of the decorated lattice. This behavior is directly analogous to a rejected move in cycle-based worm algorithms and can lead to similar dynamical slowdowns.

As with the other worm variants, this algorithm can be used in place of line~9 in \autoref{alg:worm_decoder}. This is achieved by mapping a cycle configuration \(\Sigma\) on the original decoding graph to a corresponding dimer configuration~\(\Sigma_D\) on the appropriately dressed graph. A new, independent dimer configuration is then generated using \autoref{alg:wormprocess_dimer}. Finally, to obtain a sample from the distribution of cycles, this updated dimer configuration is mapped back to a cycle on the original graph.

\section{Iterative parsing for correlated decoding}
\label{App:correlated_parsing}

Here, we introduce an iterative parsing procedure for correlated decoding of the surface code with depolarizing noise. The central idea is that correlations between the $X$- and $Z$-decoding problems can be captured more accurately by performing multiple rounds of edge reweighting before the final decoding step.

Concretely, the procedure alternates between the two decoding problems. Initially, the worm algorithm is applied to the $Z$-decoding graph, with edge weights given by $w = p_z/(1-p_z)$, where $p_z = 2p/3$. Using the worm algorithm, empirical edge marginals are estimated and used to reweigh the edges of the $X$-decoding graph. The worm algorithm is subsequently applied to the reweighted $X$-decoding graph, and the resulting marginals are used to update the edge weights of the $Z$-decoding graph. This alternating reweighting procedure may be repeated multiple times before performing the final decoding step.

Numerical results for this iterative parsing scheme are shown in \autoref{fig:parsing_results_depol}. We observe a modest but consistent improvement in decoding success rates when increasing the number of iterations for the surface code with depolarizing noise and perfect measurements.

\begin{figure}[htbp]
    \centering
    \includegraphics[width=\linewidth]{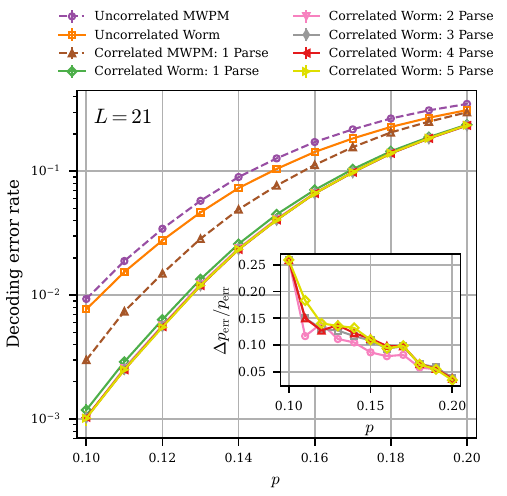}
   \caption{Convergence of the correlated worm decoder under multiple parsing iterations. Results are shown for the rotated surface code with \(L = 21\) under i.i.d.\ depolarizing noise of strength \(p\). The inset shows the relative improvement in decoding success rate when increasing the number of parses from one to multiple.}
    \label{fig:parsing_results_depol}
\end{figure}

An analogous construction can be applied to more general correlated decoding problems (which do not split into two separate matchable DEMs). In this case, multiple rounds of estimating edge marginals, are performed prior to decoding, with each round followed by a reweighting step. This iterative procedure progressively refines the effective correlations between edges before the final decoding step.

\section{MWPM vs.\ MLD for hyperbolic codes and \(\delta\)-hyperbolicity}\label{app:hyperbolicity}

We can understand the similarity in decoding performance between MWPM and MLD for hyperbolic surface codes through a geometric property known as \emph{\(\delta\)-hyperbolicity}.

To understand this property, let us for simplicity consider a connected tree-graph.
If we pick any three vertices and connect them pairwise by geodesics (shortest paths), they will form a ``tripod''-shape where the three paths meet at a common point (unless all three points lie on a single geodesic).
The \(\delta\)-hyperbolicity property generalizes this to more general metric spaces where this property approximately holds.

To this end, let $(X, d)$ be a metric space and let ${x, y, z \in X}$. 
A \emph{geodesic triangle} $\Delta(x,y,z)$ is the union of three geodesic segments $[x,y] \cup [y,z] \cup [z,x]$.
Given $\delta \geq 0$, the triangle $\Delta(x,y,z)$ is called \emph{$\delta$-slim} if each of its sides is contained within the $\delta$-neighborhood of the other two sides. 
More precisely, considering the geodesic $[x,y]$, then for every point $p \in [x,y]$, there exists a point in $[y,z] \cup [z,x]$ whose distance from $p$ is less than $\delta$, i.e.
\[
  [x,y] \subseteq B_\delta\bigl([y,z]\bigr) \cup B_\delta\bigl([z,x]\bigr),
\]
where $B_\delta(S) = \{ p \in X : d(p, S) < \delta \}$ denotes the open $\delta$-neighborhood of a set $S$.
The space $(X, d)$ is said to be \emph{$\delta$-hyperbolic} if every geodesic triangle in $X$ is $\delta$-slim.
Note that for our construction of hyperbolic codes we consider surfaces with Gaussian curvature $\kappa = -1$, which are $\delta$-hyperbolic with $\delta = \ln(1+\sqrt{2})$.

Let $\gamma$ be a path in $X$. 
If $[p, q]$ is a geodesic segment connecting the endpoints of $\gamma$, then for every $x \in [p, q]$
\begin{align}
\label{eq:quasi_geodesic}
d(x, \operatorname{im}(\gamma)) \leq \delta\, |\log_2 \operatorname{length}(\gamma)| + 1.
\end{align}
This can be proved by subdividing the area between the geodesic $[p,q]$ and the curve $\gamma$ into triangles and repeatedly applying the definition of $\delta$-hyperbolicity, see \cite[Chapter III.H, Proposition I.6]{bridson2013metric}.

Intuitively, \autoref{eq:quasi_geodesic} tells us that taking detours is expensive in hyperbolic spaces, as for a path to deviate from a geodesic necessarily requires an exponential increase in the length of the path.
This is in stark contrast to Euclidean space:
we can approximate a geodesic (straight line) by a path $\gamma$ that ``zig-zags'' in two perpendicular directions.
We can then rearrange the segments of this path to deviate from the straight line by $\propto \operatorname{length}(\gamma)$.

Hence, we have that in the hyperbolic setting there are far fewer equivalent paths and the minimum weight error dominates.
More concretely, we have that the probability of an error class $[E]$ given a syndrome $s$ is
\begin{align*}
\mathbb{P}( [E] \mid s) &\propto \sum_{E'\in [ E ]} \left( \frac{p}{1-p} \right)^{|E'|}.
\end{align*}
Now, the number of cycles of length $\ell$ passing through a given edge in the hyperbolic lattice grows only polynomially in $\ell$. 
This follows from \autoref{eq:quasi_geodesic}: 
any cycle of length $\ell$ is contained within a $O(\delta \log \ell)$-neighborhood of any geodesic connecting two of its points, so that cycles are confined to tube-like regions whose volume grows polynomially. 
This is in contrast to Euclidean lattices, where the number of such cycles grows exponentially as $\sim \mu^\ell$ for some connectivity constant $\mu > 1$, see \autoref{fig:euclidean_paths}.

\begin{figure}
  \centering
    \includegraphics[width=0.5\linewidth]{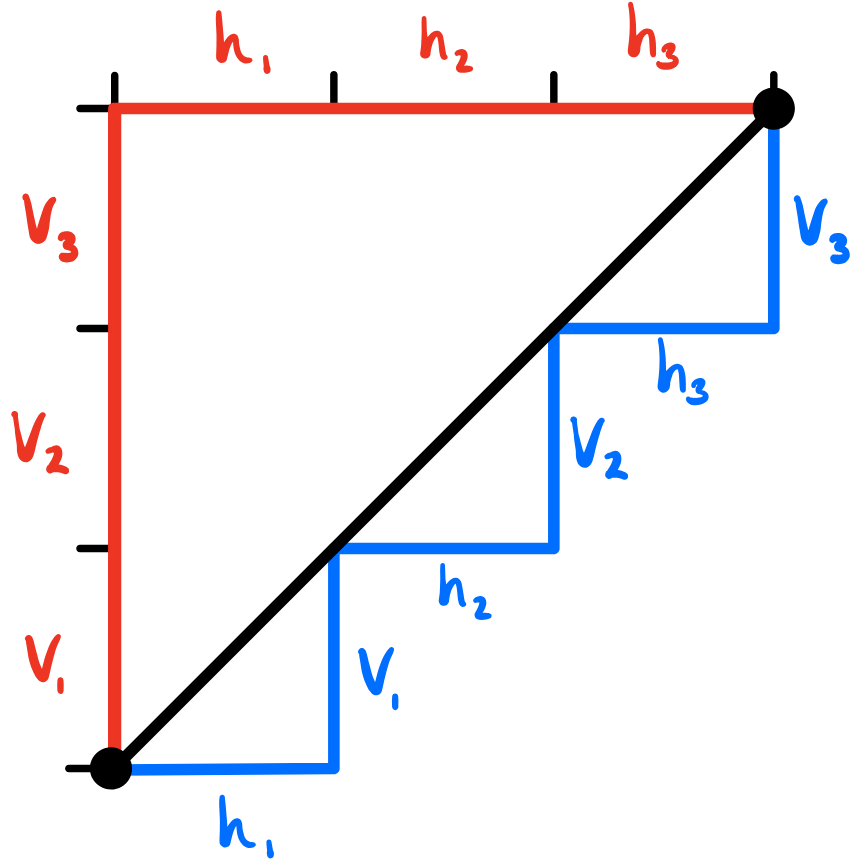}
   \caption{On the square grid there can exist exponentially many shortest paths between two points. Take for example two points that are separated by taking three steps in the vertical direction and then three steps in the horizontal direction. However, as translations commute in Euclidean space, any permutation of these steps gives a shortest path.}
    \label{fig:euclidean_paths}
\end{figure}
As a consequence, in a hyperbolic space we have that
\begin{align*}
    \mathbb{P}([E] \mid s) \propto \left(\frac{p}{1-p}\right)^{|E^*|} \cdot \mathrm{poly}(\abs{V}),
\end{align*}
where $E^*$ denotes the minimum-weight representative in class $[E]$, and the polynomial prefactor accounts for the sub-exponential number of near-minimum-weight configurations. 
Since this prefactor is similar across logical classes, the class maximizing $\mathbb{P}(\cdot \mid s)$ is generically the one containing the minimum-weight error, which is the class selected by MWPM.

\section{From approximate sampling of even subgraphs to approximate counting of perfect matchings}\label{app:matching_reduction}

Remember the even subgraph measure defined in the main text
\begin{equation}
    \lambda_{w, M} = w^{\abs{A \oplus M}}
\end{equation}
where $A\in \mathcal C_0$ are even subgraphs on a graph $G=(V, E)$, $w\in (0, 1)$ and $M\subseteq E$.
In this appendix, we explain that any polynomial-time random approximate sampling scheme for the above measure that works on any graph~$G$, for arbitrary $M$ and arbitrary weights $w$, would yield a fully polynomial time random approximation scheme (FPRAS) for counting the number of perfect matchings in general graphs. 

In the following, we will take $w=1/2$ and $M = E$ but the argument generalizes with minor modifications to the case where sampling is possible for any $w\in (0, 1)$. Now, consider the existence of an FPRAS for the partition function
\begin{align}
    \lambda_{1/2, E}(\mathcal C_0) &= \sum_{A\in \mathcal C_0} \left(\frac{1}{2}\right)^{\abs{A\oplus M}} \\
    &\propto \sum_{A\in \mathcal C_0} 2^{\abs{A}} =: \Tilde\lambda(G) \label{eq:app:even_subg_Z}
\end{align}
on general graphs $G=(V, E)$. In this case, one can show that this yields a FPRAS for counting perfect matchings on general graphs. 

To start, consider the following reduction to graphs of degree three.

\begin{fact}[Lemma 28 in \cite{guo2019zeros}]
Suppose there is an FPRAS for \texttt{\# Perfect-Matchings} on 3-regular graphs. Then there is an FPRAS for \texttt{\#Perfect-Matchings} on general graphs.
\end{fact}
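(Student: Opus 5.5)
The plan is to reduce a general graph $G$ to a $3$-regular graph $G'$ in polynomial time, with a polynomial-time computable constant $c>0$ such that $\#\mathrm{PM}(G') = c\cdot\#\mathrm{PM}(G)$. An FPRAS on $G'$ then yields an FPRAS on $G$, because a multiplicative factor does not affect relative error. The reduction has three stages: lower every degree to at most three, eliminate degrees one and two, and check that the counts are preserved.

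\emph{Step 1: lowering degrees.} First discard isolated vertices; if there are any, $\#\mathrm{PM}(G)=0$. Next, for each vertex $v$ of degree $d\ge 4$, split it. Split the neighbourhood of $v$ into two nonempty parts $N_1,N_2$, and replace $v$ by a path $v'\!-\!x\!-\!v''$, where $v'$ is joined to $N_1$ and $v''$ to $N_2$. In any perfect matching the middle vertex $x$ is matched to exactly one of $v',v''$, and the other one is matched outside the path. This gives a bijection between perfect matchings of the two graphs, sending an outside edge at $v'$ or $v''$ to the corresponding edge at $v$. Choosing $|N_1|,|N_2|\ge 2$ makes $v',v''$ strictly lower in degree than $v$, while $x$ has degree $2$. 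Iterating, we reach maximum degree $3$ after polynomially many splits, with the matching count unchanged.

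\emph{Step 2: removing degree-$1$ and degree-$2$ vertices.} A degree-$1$ vertex forces its unique edge into every perfect matching, so we delete it together with its neighbour and repeat. This never raises degrees and preserves the count. Degree-$2$ vertices are then padded to degree $3$ by attaching gadgets. The gadget must never use its attaching edges in a perfect matching, so that it contributes only a fixed factor. A single-vertex attachment fails for parity reasons. If the gadget $H$ has every vertex of degree $3$ except one vertex of degree $2$, then the handshake lemma forces $|V(H)|$ to be odd, so the attaching edge would always be forced. So instead I would pair up the degree-$2$ vertices $u,v$; the global handshake parity makes their number even. Each pair is attached to a fixed gadget $H$ with even $|V(H)|$, two ports $a,b$ of degree $2$ and all other vertices of degree $3$, through edges $ua$ and $vb$.

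\emph{Step 3: gadget verification (the main obstacle).} If exactly one port edge is used, the rest of $H$ has an odd number of vertices, so no perfect matching exists. If no port edge is used, $H$ contributes $c_0=\#\mathrm{PM}(H)>0$ matchings. If both are used, it contributes $c_2=\#\mathrm{PM}(H-a-b)$. I want $c_2=0$. To get this, I would take $H$ bipartite and balanced with $a,b$ on the same side, so that $H-a-b$ is unbalanced, while $H$ still has a perfect matching. A naive attempt fails: balance plus degree counting forces the ports to have equal total deficiency on the two sides. I would therefore try first to use four ports, two on each side, attaching two pairs of degree-$2$ vertices at once, and take $H$ as a $3$-regular bipartite graph with a perfect matching from which a suitable set of edges is deleted. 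If that becomes messy, the fallback is to tolerate $c_2\neq 0$. Then the gadget behaves like an extra edge $uv$ of weight $c_2/c_0$, and I would remove that effect with a second gadget whose ratio $c_2/c_0$ differs, combining the two via interpolation over polynomially many instances. With $t$ gadgets the final identity is $\#\mathrm{PM}(G') = c_0^{t}\,\#\mathrm{PM}(G)$, which completes the reduction.
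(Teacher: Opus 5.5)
Your Steps 1 and 2 are correct. Splitting $v\mapsto v'\!-\!x\!-\!v''$ and removing forced edges at degree-$1$ vertices both preserve the count exactly. One small fix: the handshake lemma only makes the number of degree-$3$ vertices even. To pair up the degree-$2$ vertices you also need $|V|$ even, which you may assume, since otherwise the answer is $0$.

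The paper gives no proof of this fact; it only cites Lemma~28 of Guo et al. So your argument has to stand on its own, and it has a genuine gap at Step~3. You never produce a gadget with $c_0>0$ and $c_2=0$, and the bipartite route cannot produce one, whether with four ports or any other number. The argument is as follows.
\begin{itemize}
\item Balance plus degree counting force the same number $p\ge 1$ of degree-$2$ ports on each side.
\item Fix a perfect matching $M$ of $H$. Let $S\subseteq L$ be the smallest set containing the $L$-ports such that $x\in S$ and $y\sim M(x)$ imply $y\in S$.
\item If some $M(b_j)\in S$, the alternating path $a_i,M(a_i),\dots,M(b_j),b_j$ gives a perfect matching of $H-a_i-b_j$.
\item If not, every $M(x)$ with $x\in S$ has degree $3$. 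Then at least $2|S|$ distinct non-matching edges end in $S$, yet the vertices of $S$ are incident to only $2|S|-p$ non-matching edges, a contradiction.
\end{itemize}
Hence some cross term $\#\mathrm{PM}(H-a_i-b_j)$ is always positive, and the gadget acts as a weighted edge between two attached vertices of $G$.

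Your fallback does not fix this. Recovering the constant term of a polynomial from $(1\pm\epsilon)$-approximate evaluations involves cancellations, so interpolation is not approximation-preserving. Also, your closing identity $\#\mathrm{PM}(G')=c_0^{t}\,\#\mathrm{PM}(G)$ holds only when $c_2=0$.

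The missing idea is to drop bipartiteness and place a Tutte barrier at the ports. Let $C$ be $K_4$ with one edge subdivided by a vertex $m$. Then $C$ has $5$ vertices, $m$ has degree $2$, all other vertices have degree $3$, and $C-m$ has exactly two perfect matchings.
\begin{itemize}
\item Take two copies $C_1,C_2$ and two new vertices $a,b$, with edges $m_1a$, $ab$, $bm_2$. The resulting $H$ has $12$ vertices, ports $a,b$ of degree $2$, and all other degrees $3$.
\item $H-a-b=C_1\sqcup C_2$ has two odd components, so $c_2=0$.
\item In $H$ the edges $am_1$ and $bm_2$ are forced, so $c_0=2\cdot 2=4$.
\item In $G'$ the attaching edges $ua$ and $vb$ can never be used, since $C_1$ or $C_2$ would be left with an odd number of vertices and no other exit.
\end{itemize}
With this gadget, your Steps 1--3 give $\#\mathrm{PM}(G')=4^{t}\,\#\mathrm{PM}(G)$ for a $3$-regular $G'$, which completes the reduction.
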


Using this, one can then show that indeed approximate computation of the partition function in \autoref{eq:app:even_subg_Z} enables approximate counting of perfect matchings.

\begin{proposition}[Andreas Galanis, private communication]\label{prop:approx_counting_reduction}
Suppose there is an FPRAS for computing $\Tilde\lambda$ [\autoref{eq:app:even_subg_Z}] on graphs G of maximum degree~$\leq 3$. Then there is an FPRAS for \texttt{\#Perfect-Matchings} on 3-regular graphs.
\end{proposition}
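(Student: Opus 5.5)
The plan is to exploit the fact that, in a $3$-regular graph $G=(V,E)$ with $n=\abs{V}$, perfect matchings are in bijection with \emph{maximum} even subgraphs. Every vertex of an even subgraph $A\in\mathcal C_0$ has degree $0$ or $2$ in $A$. Hence $\abs{A}\le n$, with equality iff $A$ is a $2$-factor. Since the complement of a perfect matching in a cubic graph is a $2$-factor and vice versa, the $2$-factors are in bijection with perfect matchings. The $2$-factors therefore contribute exactly $2^{n}\cdot\#\mathrm{PM}(G)$ to $\Tilde\lambda(G)$.

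The remaining work is to suppress the non-maximal even subgraphs by amplifying the weight per edge, while keeping maximum degree $\le 3$ so that the assumed FPRAS applies.

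\textbf{Amplification by subdivision.} Let $G_k$ be obtained by replacing each edge of $G$ by a path of length $k$. The new vertices have degree $2$, so $G_k$ has maximum degree $3$ and size $O(nk)$. Even subgraphs of $G_k$ are in bijection with even subgraphs of $G$: an internal degree-$2$ vertex forces an all-or-nothing choice on each path. An edge of $A$ thus contributes $2^{k}$, and
\begin{equation}
    \Tilde\lambda(G_k)=\sum_{A\in\mathcal C_0(G)} 2^{k\abs{A}} = 2^{kn}\,\#\mathrm{PM}(G) + R_k ,
\end{equation}
where every term in $R_k$ has $\abs{A}\le n-1$. The cycle space of $G$ has dimension $\abs{E}-\abs{V}+c\le n/2+1$, so
\begin{equation}
    R_k\le 2^{n/2+1}\,2^{k(n-1)} .
\end{equation}

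\textbf{Choosing $k$ and handling additive vs.\ multiplicative error.} Take $k=\lceil n/2+1+\log_2(1/\epsilon)\rceil$, which is polynomial in $n$ and $\log(1/\epsilon)$. With this choice $2^{-kn}\Tilde\lambda(G_k)$ equals $\#\mathrm{PM}(G)$ up to an additive error at most $\epsilon$.

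First decide in polynomial time whether $G$ has a perfect matching, using Edmonds' algorithm. If it has none, output $0$. Otherwise $\#\mathrm{PM}(G)\ge1$, so the additive $\epsilon$ becomes a multiplicative $(1\pm\epsilon)$ factor. Running the assumed FPRAS on $G_k$ with accuracy $\epsilon$ and rescaling by $2^{-kn}$ then yields a $(1\pm O(\epsilon))$-approximation of $\#\mathrm{PM}(G)$ with the same success probability, in time polynomial in $n$ and $1/\epsilon$.

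The main obstacle I expect is the bookkeeping in this last step: making sure the remainder bound is tight enough that a polynomial $k$ suffices, and that the additive-to-multiplicative conversion is legitimate. Both reduce to the crude cycle-space count and the exact zero test, so I expect them to go through. Combined with the Fact (Lemma~28 of \cite{guo2019zeros}), this gives the general-graph statement.
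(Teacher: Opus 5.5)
Your proposal is correct and is essentially the paper's argument: subdivide each edge into a long path so that the $2$-factors (complements of perfect matchings) dominate $\tilde\lambda$. Your $\epsilon$-dependent length $k$ and the Edmonds zero test make explicit the additive-to-multiplicative step that the paper leaves implicit after its sandwich $\#M \le \tilde\lambda(G')/2^{2\abs{V}^2} < \#M+1$ for paths of fixed length $2\abs{V}$.

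There is one small slip. The cycle space has dimension $\abs{E}-\abs{V}+c = n/2+c$, which exceeds $n/2+1$ when $G$ is disconnected, since $c$ can be as large as $n/4$. Bound the number of even subgraphs instead by $2^{3n/4}$, or crudely by $2^{\abs{E}}=2^{3n/2}$ as the paper does, and enlarge $k$ by $O(n)$; the rest of your argument is unchanged.
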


\begin{proof}

Let $G=(V, E)$ be a 3-regular graph with n vertices, and let $\#M$ denote the number of perfect matchings in $G$, and assume without loss of generality that $\# M \geq 1$. 

Consider the graph $G'=(V', E')$ obtained from $G$ by replacing every edge $e\in E$ with a path $P_e$ of $2*\abs{V}$ vertices.
We will show that 
\begin{equation}
    \#M \leq \Tilde \lambda(G') / 2^{2\abs{V}^2} < 1+\#M,
\end{equation} 
so an approximation for $\Tilde \lambda(G')$ allows us to approx \#M.

Indeed any even subgraph $A'$ of $G'$ has the following property: for every edge e of $G$, either none of the edges of the path $P_e$ are included in $A'$, or all of the edges in $P_e$ are included in $A'$. Therefore there is a simple 1-1 correspondence between even subgraphs $A$ of $G$ and even subgraphs $A'$ of $G'$: to get from $A$ to $A'$, for every edge $e$ in $A$, add all edges of $P_e$ into $A'$. The weight of $A$ is $2^|A|$, whereas the weight of $S'$ is $2^{2\abs{V} \abs{A}}$.

Since $G$ is 3-regular, every even subgraph has at most $\abs{V}$ edges; in fact, for an even subgraph $A$ with $\abs{V}$ edges, every vertex in $(V,A)$ must have degree 2, and therefore the set $E\setminus A$ is a perfect matching of $G$. So, $\#M$ is actually equal to number of even subgraphs $A$ with $\abs{A} = \abs{V}$.

Using the mapping from $A$ to $A'$, we conclude that the contribution to $\Tilde\lambda(G')$ of even subgraphs $A$ with $\abs{A} = \abs{V}$ is $\#M \times 2^{2\abs{V}^2}$, whereas the total contribution to $\Tilde\lambda(G')$ of even subgraphs $A$ with $\abs{A} < \abs{V}$ is at most $2^{3\abs{V}/2} \times 2^{2\abs{V}(\abs{V}-1)} \leq 2^{2\abs{V}^2-1}$. Here, $2^{3\abs{V}/2}$ is a (loose) bound on the number of edge subsets $A$.
It follows that 
\begin{equation}
    \#M\, 2^{2\abs{V}^2} \leq \Tilde\lambda(G') < (\#M+1)2^{2\abs{V}^2},
\end{equation}
establishing the claim.

\end{proof}

\begin{widetext}

\section{Mixing time guarantee for the worm process}\label{app:mixing_time_proof}

In this appendix, we provide the formal proof of our mixing time guarantee in theorem \autoref{thm:fast_mixing}.
For completeness, we begin by stating the concrete setup, providing a brief introduction to the formal definitions of mixing time and canonical paths. We then provide the proof, which is a slight generalization of the proof for the worm process for the Ising model presented in Ref. \cite{Collevecchio_2016}.

\subsection{Setup}

Consider a graph, $G = (V, E)$ with maximum degree $\Delta$. Then we will consider the measure
\begin{equation}\label{eq:lambda_measure_app}
    \lambda(A) = w^{\abs{A \oplus M}}
\end{equation}
where $M\subset E$ is some fixed subset of edges, $\oplus$ denotes the symmetric difference of sets, and $w \in (0, 1)$.

The \emph{worm process} is defined on the set of even subgraphs $\mathcal C_0$ and near-even subgraphs $\mathcal C_2$, the latter being the set of subgraphs where exactly two vertices have odd degree. On the space $\mathcal W := \mathcal C_0 \cup \mathcal C_2$, consider the so-called Proko\'ev-Svistunov (PS) measure
\begin{subequations}\label{eq:PS_measure_app}
\begin{equation}
    \pi(A) := \frac{ \Psi(A) \lambda(A)}{\abs{V} \lambda(\mathcal C_0) + 2 \lambda(\mathcal C_2)}
\end{equation}
where
\begin{equation}
    \Psi(A) := \begin{cases}
        \abs{V} & \text{if}~A\in \mathcal C_0 \\
        2 & \text{if}~A\in \mathcal C_2.
    \end{cases}
\end{equation}
\end{subequations}

Now, we define worm process as the Markov chain obtained by Metropolizing the following proposals with respect to the PS measure
\begin{itemize}
    \item if $A \in \mathcal C_0$
    \begin{enumerate}
        \item choose $v\in V$ uniformly at random (u.a.r.)
        \item choose $u\sim v$ u.a.r.
        \item propose $A \to A \oplus uv$
    \end{enumerate}
    \item if $A \in \mathcal C_2$
    \begin{enumerate}
        \item choose $v \in \partial A$ u.a.r.
        \item choose $u\sim v$ u.a.r.
        \item propose $A \to A \oplus uv$
    \end{enumerate}
\end{itemize}
here, $\partial A$ denotes the set of odd vertices in $A$, and $u\sim v$ if $uv\in E$.
One can verify that this is indeed a well-defined process on $\mathcal W$.

In the following, we consider the \emph{lazy} worm process, which is defined as above, but does nothing at every step with probability one-half (this is a technical step to avoid potential issues of periodicity).
The transition matrix of the lazy worm process is then given by

\begin{equation}\label{eq:transition_matrix}
P_w(A, A \oplus uv) := \begin{cases}
    w^{\id[uv \notin (A \oplus M)]} \frac{1}{2\abs{V}}\left( \frac{1}{d(u)} + \frac{1}{d(v)} \right) & A \in \mathcal C_0 \\
    w^{\id[uv \notin (A \oplus M)]} \frac{1}{4}\left( \frac{1}{d(u)} + \frac{1}{d(v)} \right) & A\oplus uv \in \mathcal C_0 \\
    \min \left(\frac{d(u)}{d(v)} w^{\id[uv\notin (A\oplus M)] - \id[uv \in (A\oplus M)]}, 1\right) \frac{1}{4 d(u)} & A \in \mathcal C_2 \wedge A\oplus uv \in \mathcal C_2 \wedge u \in \partial A
\end{cases}. 
\end{equation}
where $\id[X]$ denotes the indicator function for the event $X$, and $d(u)$ is the degree of the vertex $u$ in $G$.

\subsection{Mixing time and canonical paths}

We start by reviewing some general concepts related to the mixing time of Markov Chains.

\subsubsection{Mixing time}

We want to prove that the worm process is fast mixing. To formalize this, for a Markov chain on state space $\Omega \simeq \mathbb{F}_2^n$, with transition matrix $P : \Omega \times \Omega \to [0, 1]$, and unique steady state $\pi$, we define the \emph{mixing time}
\begin{equation}\label{eq:tmix_def_app}
    \tmix(x_0, \epsilon) = \min\{t \in \N : \abs{P^t(x_0) - \pi}_{\rm TV} \leq \epsilon\}
\end{equation}
where $x_0 \in \Omega$ is some initial state, $\epsilon$ is some target accuracy, and $\pi$ denotes the steady-state of the chain $P$.

As indicated, the mixing time in general depends on both the initial state $x_0$ and the target accuracy $\epsilon$. However, there exists a general bound on that dependence of the form \cite[Theorem 12.4]{levin2017markov}
\begin{equation}\label{eq:rel_tmix_def}
    \tmix(x_0, \epsilon) \leq \left[ \log(\pi(x_0)^{-1}) + \log(\epsilon^{-1}) \right] \trel
\end{equation}
where $\trel$ is the \emph{relaxation time}, which is independent of both the initial state and the exact accuracy threshold. Intuitively, it is the time scale on which the chain geometrically contracts towards its stationary distribution. 
We will bound $\trel$ for the worm process below.
We say that a Markov chain is \emph{fast mixing} if $\trel = \textrm{poly}(n)$.

\subsubsection{Canonical Paths}

To show fast mixing for the chain of interest, we will use the method of \emph{canonical paths} \cite{jerrum2003book}. 
On a high level, the idea is to show the absence of bottlenecks in the transition graph of the Markov chain. This can be done by defining (choosing carefully) a set of paths (on the transition graph) between pairs of configurations $I, F \in \Omega$, and showing that no single transition in the chain is used `too much'. 

Formally, let 
\begin{equation}
    \transset := \{(x, y) \in \Omega \times \Omega~;~P(x, y) > 0\}
\end{equation}
the \emph{set of transitions} in the chain $P$. The graph $\transgraph = (\Omega, \transset)$, with vertex set $\Omega$ and edge set $\transset$, is called the transition graph of the chain $P$.

Now, for some non-empty subset of states $\mathscr S \subseteq \Omega$, define a set of \emph{canonical paths} 
\begin{equation}\label{eq:canonical_paths_def}
    \Gamma = \{\gamma_{I,F} : (I, F) \in \Omega \times \mathscr S\}
\end{equation}
where each $\gamma_{I, F}$ is a path from $I$ to $F$ on $\transgraph$. Note that there are many possible choices of a set of canonical paths for a given chain. Remarkably, one can obtain an upper bound on the relaxation time from any such set, in terms of its \emph{congestion}\footnote{Note that we here do not include the length of the paths $\gamma_{I, F}$ in the definition of congestion, but will have the length of the longest path as a separate factor in the resulting mixing time bound in theorem \autoref{thm:canonical_paths}. This is in contrast to some other definitions (see e.g. \cite{jerrum2003book}) but leads to the same bound on the mixing time.}
\begin{equation}\label{eq:congestion_def}
    \varphi(\Gamma) := \max_{t:=(u, v) \in \transset} \biggl\{ 
        \underbrace{\frac{1}{\pi(u)P(u, v)}}_{(\textrm{capacity of $t$})^{-1}} 
        ~
        \underbrace{\sum_{t \in \gamma_{I, F}} \pi(I) \pi(F)}_{\textrm{flow through $t$}} \biggr\}.
\end{equation}
In particular, we will use the following result.

\begin{theorem}[Schweinsberg (2002) \cite{schweinsberg2002relaxation_time}]\label{thm:canonical_paths}

Consider an irreducible and lazy Markov chain with transition matrix $P$, state space $\Omega$, and let $P$ be reversible w.r.t. the distribution $\pi$.
Let $\mathscr S \subseteq \Omega$ be nonempty, let $\Gamma$ be a set of canonical paths from $\mathscr S$ to $\Omega$ [\autoref{eq:canonical_paths_def}]. Let $\varphi(\Gamma)$ be the congestion of $\Gamma$ [\autoref{eq:congestion_def}], and let $\ell_{\rm max}$ be the maximal length of any path in $\Gamma$.
Then the relaxation time of $P$ is bounded by
\begin{equation}
    \trel \leq 4\, \ell_{\rm max} \, \varphi(\Gamma).
\end{equation}
\end{theorem}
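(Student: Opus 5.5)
The plan is to bound the spectral gap via a Poincaré inequality, i.e.\ the standard canonical-path argument, modified so that paths only need to end in $\mathscr S$. Because $P$ is lazy, all eigenvalues lie in $[0,1]$, so $\trel = 1/\mathrm{gap}$ with $\mathrm{gap} = \inf_f \mathcal E(f)/\mathrm{Var}_\pi(f)$. Here the infimum runs over non-constant $f:\Omega\to\mathbb R$, and
\[
\mathcal E(f) = \tfrac12 \sum_{x,y} \pi(x)P(x,y)\bigl(f(x)-f(y)\bigr)^2
\]
is the Dirichlet form. It therefore suffices to show $\mathrm{Var}_\pi(f) \le C\,\ell_{\rm max}\,\varphi(\Gamma)\,\mathcal E(f)$ with the appropriate constant $C$.

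\textbf{Step 1: reduce the variance to pairs $(I,F)$ with $F\in\mathscr S$.} For every constant $c$ we have $\mathrm{Var}_\pi(f) \le \sum_I \pi(I)(f(I)-c)^2$. I will take $c=f(F)$ for each $F\in\mathscr S$ and average with weights $\pi(F)$. This gives
\[
\pi(\mathscr S)\,\mathrm{Var}_\pi(f) \le \sum_{I\in\Omega}\sum_{F\in\mathscr S}\pi(I)\pi(F)\bigl(f(I)-f(F)\bigr)^2 .
\]
The crucial feature is that only pairs for which a canonical path $\gamma_{I,F}$ exists now appear.

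\textbf{Step 2: telescope along paths and exchange sums.} Write $f(I)-f(F)$ as a sum of increments $\nabla_t f := f(v)-f(u)$ over the transitions $t=(u,v)\in\gamma_{I,F}$. Cauchy--Schwarz then gives $(f(I)-f(F))^2 \le |\gamma_{I,F}|\sum_{t\in\gamma_{I,F}}(\nabla_t f)^2 \le \ell_{\rm max}\sum_{t\in\gamma_{I,F}}(\nabla_t f)^2$. Next I exchange the order of summation, so that each transition $t$ is weighted by its flow $\sum_{\gamma_{I,F}\ni t}\pi(I)\pi(F)$. By the definition of the congestion [\autoref{eq:congestion_def}], this flow is at most $\varphi(\Gamma)\,\pi(u)P(u,v)$. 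I therefore obtain
\[
\pi(\mathscr S)\,\mathrm{Var}_\pi(f)\le \ell_{\rm max}\,\varphi(\Gamma)\sum_{(u,v)\in\transset}\pi(u)P(u,v)(\nabla_t f)^2 = 2\,\ell_{\rm max}\,\varphi(\Gamma)\,\mathcal E(f).
\]
Reversibility enters here, because it makes the sum over ordered transitions equal to twice the symmetric Dirichlet form. Irreducibility together with the existence of paths from every state guarantees that the right-hand side is finite and positive.

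\textbf{Step 3: constants and the main obstacle.} Combining the steps gives $\trel \le 2\ell_{\rm max}\varphi(\Gamma)/\pi(\mathscr S)$. The remaining and, I expect, the only delicate point is to reconcile this normalisation with the stated constant $4$ and with the definition of $\varphi$, which carries no $\pi(\mathscr S)$. I will first try to follow Schweinsberg's refinement. This uses the triangle-type inequality $(a-b)^2\le 2(a-c)^2+2(c-b)^2$ with an intermediate point $c\in\mathscr S$ to symmetrise, which trades the $1/\pi(\mathscr S)$ for an extra factor of $2$. Failing that, I will check whether in the intended application, $\mathscr S=\mathcal C_0$ under the PS measure, one has $\pi(\mathcal C_0)\ge 1/2$, or whether the missing factor is absorbed into the congestion. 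Concretely, I expect the statement to be equivalent to routing the normalised flow $\pi(I)\pi(F)/\pi(\mathscr S)$, and I would verify which convention Ref.~\cite{schweinsberg2002relaxation_time} uses before fixing the constant. The rest of the argument is routine.
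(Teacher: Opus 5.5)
The paper gives no proof of this statement; it imports it from Schweinsberg. Your Steps 1 and 2 are correct and yield $\trel\le 2\ell_{\rm max}\varphi(\Gamma)/\pi(\mathscr S)$. Two minor points:
\begin{itemize}
\item The identity $\sum_{(u,v)\in\transset}\pi(u)P(u,v)(f(v)-f(u))^2=2\mathcal E(f)$ holds by your definition of $\mathcal E$. Reversibility is really needed elsewhere, for $\trel=1/\mathrm{gap}$ with $\mathrm{gap}=\inf_f\mathcal E(f)/\mathrm{Var}_\pi(f)$.
\item If some $\gamma_{I,F}$ repeats a transition, Cauchy--Schwarz counts it with multiplicity. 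Loop-erase the paths first; this increases neither $\ell_{\rm max}$ nor $\varphi$.
\end{itemize}

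\textbf{The gap is Step 3.} No argument can remove the factor $1/\pi(\mathscr S)$, because with $\varphi$ exactly as in \autoref{eq:congestion_def} the statement is false. Here is a counterexample.
\begin{itemize}
\item Take $\Omega=\{a,s,b\}$ with $P(a,s)=P(b,s)=\epsilon\le 1/2$, $P(s,a)=P(s,b)=1/4$, and all remaining mass on the diagonal.
\item This chain is lazy, irreducible and reversible, with $\pi(s)=2\epsilon/(1+2\epsilon)$ and $\pi(a)=\pi(b)=1/(2+4\epsilon)$.
\item For $\mathscr S=\{s\}$, the only nontrivial paths are the single steps $a\to s$ and $b\to s$. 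So $\ell_{\rm max}=1$ and $\varphi(\Gamma)=\pi(s)/\epsilon=2/(1+2\epsilon)<2$.
\item But $f=\mathbf{1}_a-\mathbf{1}_b$ satisfies $Pf=(1-\epsilon)f$. Hence $\trel\ge 1/\epsilon>8>4\ell_{\rm max}\varphi(\Gamma)$ once $\epsilon<1/8$.
\end{itemize}

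\textbf{Why your proposed fixes cannot work.}
\begin{itemize}
\item The symmetrisation cannot trade $1/\pi(\mathscr S)$ for a constant. That argument writes $\mathrm{Var}_\pi(f)=\frac12\sum_{x,y}\pi(x)\pi(y)(f(x)-f(y))^2$, inserts $z\in\mathscr S$ via $(a-b)^2\le 2(a-c)^2+2(c-b)^2$, and averages with weights $\pi(z)/\pi(\mathscr S)$. This is the standard source of the constant $4$, and it keeps the factor $1/\pi(\mathscr S)$. It is in fact a factor $2$ weaker than your Step 1.
\item The application-specific route also fails. Under the PS measure, $\pi(\mathcal C_0)=[1+2\lambda(\mathcal C_2)/(\abs{V}\lambda(\mathcal C_0))]^{-1}$, which is small precisely when $\lambda(\mathcal C_2)/\lambda(\mathcal C_0)\gg\abs{V}$. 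In any case, a general Markov-chain lemma cannot depend on it.
\end{itemize}

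\textbf{The correct resolution} is your last option: the congestion must route the normalised flow $\pi(I)\pi(F)/\pi(\mathscr S)$. This is also what the paper actually bounds when it applies the theorem, since the left-hand side of \autoref{eq:eta_bound} carries the factor $1/\pi(\mathcal C_0)$. So the missing factor in \autoref{eq:congestion_def} is an omission in the paper's definition.

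With the normalised definition, your Steps 1 and 2 already form a complete proof, with constant $2$ instead of $4$. In the example above the normalised congestion is $1/\epsilon$, so your bound $2/\epsilon$ is within a factor $2$ of $\trel=1/\epsilon$. You should state this explicitly rather than leave the normalisation open.
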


With this result, in the canonical paths framework getting a good upper bound on the relaxation time becomes a routing problem, that is choosing a `good' set of canonical paths with low congestion. We will define such a set explicitly for the worm process below.

\subsection{Proof of the mixing time guarantee}

We will bound the mixing time in terms of the \emph{defect susceptibilities} as defined in property \autoref{prop:bounded_sus}. In particular, we will show the following result.

\begin{theorem}\label{thm:fast_mixing_long}
    Consider the worm process for an even subgraph model on $G=(V, E)$ with $(\chi_2, \chi_4)$-bounded defect susceptibility (property \autoref{prop:bounded_sus}).
    There exists a set of canonical paths $\Gamma$ such that $\ell_{\rm max} \leq \abs{E}$, and the congestion is bounded by
    \begin{equation}
        \varphi(\Gamma) \leq 2 \Delta w^{-1} \left[(1 + \chi_2)(2+\abs{V}) + 2\chi_4 \right].
    \end{equation}
    where $\Delta$ is the largest degree of any vertex in $G$. This implies that the relaxation time of the lazy worm is bounded by
    \begin{equation}
        \trel \leq 8 \Delta w^{-1} \abs{E} \left[(1 + \chi_2)(2 + \abs{V}) + 2\chi_4 \right].
    \end{equation}
\end{theorem}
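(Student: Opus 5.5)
We follow the canonical-path argument of Ref.~\cite{Collevecchio_2016} for the Ising worm and take $\mathscr S = \mathcal C_0$ in Theorem~\ref{thm:canonical_paths}. The statement then reduces to two claims about an explicit path set $\Gamma$: every path has length at most $\abs{E}$, and the congestion obeys the stated bound. The bound on $\trel$ follows immediately by multiplying with $4\ell_{\rm max}$.

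\textbf{Construction of the paths.} Let $I\in\mathcal W$ and $F\in\mathcal C_0$. Then $\partial(I\oplus F)=\partial I$, so $I\oplus F$ is either an even subgraph or a near-even subgraph whose odd vertices are $\partial I$. We fix, once and for all, a canonical decomposition of $I\oplus F$ into edge-disjoint pieces:
\begin{itemize}
\item if $I\in\mathcal C_2$, one path joining the two points of $\partial I$;
\item a sequence of cycles, each with a distinguished starting vertex and an orientation, ordered with respect to some fixed order on $E$.
\end{itemize}
The path $\gamma_{I,F}$ is built in two stages. First, if $I\in\mathcal C_2$, the worm head walks along the open path and flips its edges one by one until the two defects annihilate. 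Second, each cycle is traversed in order: the first edge flip creates two defects at the starting vertex, and the last flip closes them. Each edge of $I\oplus F$ is flipped exactly once, so $\ell_{\rm max}\le\abs{E}$. Every intermediate state lies in $\mathcal W$ and every step is an allowed transition of the chain in \autoref{eq:transition_matrix}.

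\textbf{The encoding map.} For a transition $t=(A,A')$ we bound the flow using an injection-like encoding $(I,F)\mapsto U:=I\oplus F\oplus A$. Because $A$ agrees with $F$ on the processed edges and with $I$ on the rest, $U$ equals $I$ on processed edges and $F$ on unprocessed ones. Counting defects, $\partial U=\partial A\,\triangle\,\partial I$, so $U\in\mathcal C_0\cup\mathcal C_2\cup\mathcal C_4$. Given $t$ and $U$, the pair $(I,F)$ is recovered up to a bounded amount of extra data. The key identity is the weight inequality
\[
\lambda(I)\lambda(F)\le w^{-1}\,\lambda(A)\lambda(U).
\]
This follows edge by edge from the observation that $\{I\oplus M,F\oplus M\}$ and $\{A\oplus M,U\oplus M\}$ agree as multisets on every edge except possibly the edge currently being flipped. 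Crucially, this identity is unaffected by the shift $M$, so the Ising argument ($M=\emptyset$) carries over verbatim.

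\textbf{Bounding the congestion.} We divide the flow by $\pi(A)P(A,A')$. By \autoref{eq:transition_matrix}, $P(A,A')\ge w/(4\Delta)$, or $\ge w/(2\abs{V}\Delta)$ when $A\in\mathcal C_0$; the $\Psi$ factors in \eqref{eq:PS_measure_app} compensate the $\abs{V}$ and $2$ prefactors. Summing $\pi(I)\pi(F)$ over all pairs routed through $t$, the weight inequality gives
\[
\lesssim \frac{\Delta\,w^{-1}}{Z}\sum_U \Psi(I)\Psi(F)\,\lambda(U),
\qquad Z=\abs{V}\lambda(\mathcal C_0)+2\lambda(\mathcal C_2).
\]
We split the sum according to whether $U$ lies in $\mathcal C_0$, $\mathcal C_2$, or $\mathcal C_4$.
\begin{itemize}
\item The $\mathcal C_0$ and $\mathcal C_2$ terms produce $\lambda(\mathcal C_0)+\lambda(\mathcal C_2)\le(1+\chi_2)\lambda(\mathcal C_0)$. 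These are multiplied by factors $\abs{V}$ or $2$ coming from $\Psi$ and from the choice of which defect moved.
\item The $\mathcal C_4$ term gives $\lambda(\mathcal C_4)\le \chi_4\lambda(\mathcal C_0)$.
\end{itemize}
Dividing by $Z\ge \abs{V}\lambda(\mathcal C_0)$ should then yield $2\Delta w^{-1}[(1+\chi_2)(2+\abs{V})+2\chi_4]$.

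\textbf{Main obstacle.} The hard part is the bookkeeping in this last step. For each transition type (create, move, annihilate) and each class of $I$, we must check that the data needed to invert $(t,U)\mapsto(I,F)$ costs at most a factor $2$. In particular, we must show that the canonical ordering and the choice of cycle starting points are recoverable from $U\oplus A$ alone. We would follow the case analysis of Ref.~\cite{Collevecchio_2016} line by line, replacing each Ising susceptibility ratio by $\chi_2$ or $\chi_4$.
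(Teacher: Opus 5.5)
Your plan follows the paper's proof closely: the same canonical paths (open path first, then the cycles in a fixed order, each edge of $I\oplus F$ flipped once, so $\ell_{\rm max}\le\abs{E}$), an injective encoding of the pairs routed through $t$ into $\mathcal C_0\cup\mathcal C_2\cup\mathcal C_4$, and a split of the congestion sum by defect number. The one real difference is the encoding.

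You use the pre-flip state, $U=I\oplus F\oplus A$. The paper, following Collevecchio et al., uses $\eta_t=I\oplus F\oplus(A\cup e)$. It therefore needs Lemma~\ref{lem:lemma24modified}, i.e. $\Psi(A)\lambda(A)P(A,A\oplus e)\ge w\,w^{\abs{(A\cup e)\oplus M}}/(2\Delta)$, where the factor $w$ comes from $\abs{(A\oplus M)\cup e}\le\abs{(A\cup e)\oplus M}+1$. With your encoding that lemma can be replaced by the crude bound $\Psi(A)P(A,A')\ge w/(2\Delta)$, read off directly from \eqref{eq:transition_matrix}. Both routes lose exactly one $w^{-1}$. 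Neither loss is intrinsic: encode with whichever of $A,A'$ has the larger $\abs{\cdot\oplus M}$ (this is $A\cup A'$ when $M=\emptyset$), use only the lemma's first inequality, and the $w^{-1}$ disappears.

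As written, however, your accounting does not produce the stated constant, and the ``main obstacle'' you name is not one. There are three points to fix.

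\textbf{The weight relation is an identity.} Your inequality $\lambda(I)\lambda(F)\le w^{-1}\lambda(A)\lambda(U)$, combined with $\Psi(A)P(A,A')\ge w/(2\Delta)$, gives $2\Delta w^{-2}$. Your display with $\Delta w^{-1}/Z$ silently drops one of the two factors. The fix: since $A$ is the state \emph{before} $e$ is flipped, $A_e=I_e$ and $U_e=F_e$. So on every edge, including $e$, the pair $(A_x,U_x)$ is a permutation of $(I_x,F_x)$. Equivalently $I\cap F\subseteq A\subseteq I\cup F$, and Claims~\ref{claim_one} and~\ref{claim_two} give $\lambda(I)\lambda(F)=\lambda(A)\lambda(U)$ exactly.

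\textbf{The encoding is injective with no extra data.} A factor $2$ here would spoil the bound. Since $U\oplus A=I\oplus F$, it determines $\partial I$, hence the whole canonical decomposition and flip order. The edge $e$ then locates the step along $\gamma_{I,F}$. Letting $P$ be the set of edges flipped before $e$, you recover $I=A\oplus P$ and $F=A\oplus\bigl((I\oplus F)\setminus P\bigr)$.

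\textbf{Split by the class of $I$, not of $U$.} For $I\in\mathcal C_0$ you have $\partial U=\partial A$, so $U\in\mathcal C_0\cup\mathcal C_2$, with weight $\Psi(I)=\abs{V}$. The case $U\in\mathcal C_4$ can only occur for $I\in\mathcal C_2$, with weight $2$. Summing with injectivity then gives
$\varphi(\Gamma)\le 2\Delta w^{-1}\lambda(\mathcal C_0)^{-1}\bigl[\abs{V}\lambda(\mathcal C_0\cup\mathcal C_2)+2\lambda(\mathcal C_0\cup\mathcal C_2\cup\mathcal C_4)\bigr]\le 2\Delta w^{-1}\bigl[(1+\chi_2)(2+\abs{V})+2\chi_4\bigr]$.
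Splitting by $U$ with the worst-case weight $\abs{V}$ would instead give $\abs{V}\chi_4$ in place of $2\chi_4$. There is also no separate factor from ``which defect moved''; the $2$ comes only from $\Psi(I)$.

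With these three corrections your variant proves the theorem exactly as stated.
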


To prove the above, we first show a small number of helpful lemmas. The first one directly follows from the form of the transition matrix in \autoref{eq:transition_matrix}.

\clearpage

\begin{lemma}\label{lem:lemma24modified}
    Let $G = (V, E)$ be a finite graph, with maximum degree $\Delta$, and $w\in(0, 1)$. Then, for all $A\in \mathcal W= \mathcal C_0 \cup \mathcal C_2$ and all $e= \{u,v\} \in E$ 
    \begin{equation}
        \psi(A) \lambda(A) P(A, A\oplus e) \geq \frac{w^{(A \oplus M) \cup e}}{2\Delta}
        \geq \frac{w^{(A \cup e) \oplus M}}{2 \Delta w^{-1}}
    \end{equation}
\end{lemma}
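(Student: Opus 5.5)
We bound $\psi(A)\lambda(A)P(A,A\oplus e)$ separately in each of the three cases of the transition matrix \autoref{eq:transition_matrix}, and then convert the resulting exponent into the two forms claimed. Here $\psi$ denotes the function $\Psi$ of \autoref{eq:PS_measure_app}, and exponents of sets mean their cardinality. The first observation is the weight identity
\begin{equation*}
\lambda(A)\, w^{\id[e\notin A\oplus M]} = w^{\abs{A\oplus M} + \id[e\notin A\oplus M]} = w^{\abs{(A\oplus M)\cup e}} .
\end{equation*}
It holds because adding $e$ to the set $A\oplus M$ increases its size by one exactly when $e\notin A\oplus M$.

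\textbf{Case $A\in\mathcal C_0$.} Here $\psi(A)=\abs{V}$ and $P=w^{\id[e\notin A\oplus M]}\frac{1}{2\abs{V}}(\frac{1}{d(u)}+\frac{1}{d(v)})$. The factor $\abs{V}$ cancels, and $\frac{1}{d(u)}+\frac{1}{d(v)}\geq \frac{2}{\Delta}\geq\frac{1}{\Delta}$. Together with the identity this gives a bound of at least $w^{\abs{(A\oplus M)\cup e}}/(2\Delta)$ (in fact better by a factor of two).

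\textbf{Case $A\in\mathcal C_2$ with $A\oplus e\in\mathcal C_0$.} Now $\psi(A)=2$ and $P=w^{\id[e\notin A\oplus M]}\frac14(\frac{1}{d(u)}+\frac{1}{d(v)})$. The product is $w^{\abs{(A\oplus M)\cup e}}\cdot\frac12(\frac{1}{d(u)}+\frac{1}{d(v)})$, which is again at least $w^{\abs{(A\oplus M)\cup e}}/\Delta$.

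\textbf{Case $A$ and $A\oplus e$ both in $\mathcal C_2$.} This is the only case needing care, because the Metropolis minimum appears. Exactly one endpoint, say $u$, lies in $\partial A$. We have $\psi(A)=2$, so the product equals $2\lambda(A)\min(\cdot,1)\frac{1}{4d(u)}$. We split on whether the minimum is $1$.
\begin{itemize}[label={}]
\item If the minimum is $1$, the product is $\lambda(A)/(2d(u))\geq \lambda(A)/(2\Delta)$. Since $\lambda(A)=w^{\abs{A\oplus M}}\geq w^{\abs{(A\oplus M)\cup e}}$, the bound follows.
\item Otherwise the product is $\frac{1}{2d(v)}\lambda(A)\,w^{\id[e\notin A\oplus M]-\id[e\in A\oplus M]}$. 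Note that $\lambda(A)$ times this power of $w$ equals $\lambda(A\oplus e)=w^{\abs{(A\oplus M)\oplus e}}$. Since $\abs{(A\oplus M)\oplus e}\leq\abs{(A\oplus M)\cup e}$ and $w<1$, the product is at least $w^{\abs{(A\oplus M)\cup e}}/(2\Delta)$.
\end{itemize}
Both subcases can also be seen at once through the symmetry $\min(a,b)\ge$ the smaller of the two weights $\lambda(A),\lambda(A\oplus e)$, each of which dominates $w^{\abs{(A\oplus M)\cup e}}$. For the unmetropolized proposal in the reverse direction one might want the $\frac{1}{4d(v)}$ normalization, but it only costs $\Delta$. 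This case is the step I expect to be the only subtle one: one must check that after the minimum, one obtains the weight of whichever of $A$, $A\oplus e$ is heavier in $\abs{\cdot\oplus M}$, and still stays above $w^{\abs{(A\oplus M)\cup e}}$.

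Finally, the second inequality in the lemma is purely set-theoretic. We use
\begin{equation*}
\abs{(A\oplus M)\cup e}\leq \abs{(A\cup e)\oplus M}+1 .
\end{equation*}
If $e\notin A$, the two sets $(A\oplus M)\cup e$ and $(A\cup e)\oplus M=(A\oplus M)\oplus e$ differ only in $e$. If $e\in A$, they differ at most by $e$. Since $w<1$, this gives $w^{\abs{(A\oplus M)\cup e}}\geq w\cdot w^{\abs{(A\cup e)\oplus M}}$. Dividing by $2\Delta$ yields the stated form $w^{\abs{(A\cup e)\oplus M}}/(2\Delta w^{-1})$.
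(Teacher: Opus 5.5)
Your proof is correct and follows the paper's argument essentially step for step. It uses the same identity $\abs{A\oplus M}+\id[e\notin A\oplus M]=\abs{(A\oplus M)\cup e}$, the same three cases of the transition matrix, the same split on the Metropolis minimum in the $\mathcal C_2\to\mathcal C_2$ case, and the same final inequality $\abs{(A\oplus M)\cup e}\le\abs{(A\cup e)\oplus M}+1$. You verify that last inequality explicitly, where the paper only cites a case analysis. Like the paper, you implicitly assume $A\oplus e\in\mathcal W$, i.e.\ $P(A,A\oplus e)>0$. That is the only situation in which the lemma is used, and read literally the bound would fail for $A\in\mathcal C_2$, $A\oplus e\in\mathcal C_4$.
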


\begin{proof}
We begin by observing that
\[
|A\oplus M|+\mathbf 1(e\notin A\oplus M)=|(A\oplus M)\cup e|.
\]

\medskip
\noindent\textbf{Case 1: $A\in\mathcal C_0$.}
\begin{align*}
\psi(A)\lambda_w(A)P_w(A,A\triangle e)
&=
\Bigl(\abs{V}\,w^{|A\oplus M|}\Bigr)\cdot
\frac{1}{2\abs{V}}\,w^{\mathbf 1(e\notin A\oplus M)}
\Bigl(\frac1{d(u)}+\frac1{d(v)}\Bigr)\\
&=
w^{|(A\oplus M)\cup e|}
\cdot
\frac12\Bigl(\frac1{d(u)}+\frac1{d(v)}\Bigr)\\
&\ge
\frac{1}{\Delta}\,w^{|(A\oplus M)\cup e|}
\ \ge\
\frac{1}{2\Delta}\,w^{|(A\oplus M)\cup e|},
\end{align*}
where we used that $d(u),d(v)\le \Delta$.

\medskip
\noindent\textbf{Case 2: $A\in\mathcal C_2$ and $A\triangle e\in\mathcal C_0$.}
\begin{align*}
\psi(A)\lambda_w(A)P_x(A,A\triangle e)
&=
\bigl(2\,w^{|A\oplus M|}\bigr)\cdot
\frac14\,w^{\mathbf 1(e\notin A\oplus M)}
\Bigl(\frac1{d(u)}+\frac1{d(v)}\Bigr)\\
&=
\frac12\,w^{|(A\oplus M)\cup e|}
\Bigl(\frac1{d(u)}+\frac1{d(v)}\Bigr)\\
&\ge
\frac{1}{\Delta}\,w^{|(A\oplus M)\cup e|}
\ \ge\
\frac{1}{2\Delta }\,w^{|(A\oplus M)\cup e|}.
\end{align*}

\medskip
\noindent\textbf{Case 3: $A\in\mathcal C_2$ and $A\triangle e\in\mathcal C_2$.}
\[
\psi(A)\lambda_w(A)P_w(A,A\triangle e)
=
2\,w^{|A\oplus M|}\cdot
\frac{1}{4d(u)}
\min\!\Bigl(1,\ \frac{d(u)}{d(v)}\,w^{\mathbf 1(e\notin A\oplus M)-\mathbf 1(e\in A\oplus M)}\Bigr).
\]

If the minimum is attained by $1$, then
\[
\psi(A)\lambda_w(A)P_w(A,A\triangle e)
=
\frac{1}{2d(u)}\,w^{|A\oplus M|}
\ \ge\
\frac{1}{2\Delta}\,w^{|(A\oplus M)\cup e|},
\]
where we used that $|A\oplus M|\le |(A\oplus M)\cup e|$ and that $w\in(0,1)$.

If the minimum is attained by the second term, then
\[
\psi(A)\lambda_w(A)P_w(A,A\triangle e)
=
\frac{1}{2d(v)}\,w^{|(A\oplus M)\cup e|-\mathbf 1(e\in A\oplus M)}
\ \ge\
\frac{1}{2\Delta}\,w^{|(A\oplus M)\cup e|},
\]
where we again used $w\in(0,1)$ together with the inequality
\[
|(A\oplus M)\cup e|-\mathbf 1(e\in A\oplus M)\le |(A\oplus M)\cup e|.
\]

\medskip
\noindent\textbf{Putting it together}\par
Combining the three cases, we obtain the uniform bound
\[
\psi(A)\lambda_w(A)P_w(A,A\triangle e)
\ \ge\
\frac{1}{2\Delta}\,w^{|(A\oplus M)\cup e|}.
\]

It remains to relate $|(A\oplus M)\cup e|$ to $|(A\cup e)\oplus M|$.
A direct case analysis (depending on whether $e\in A$ and whether $e\in M$) yields
\[
|(A\oplus M)\cup e|\le |(A\cup e)\oplus M|+1.
\]
Since $w\in(0,1)$, this implies
\[
w^{|(A\oplus M)\cup e|}
\ \ge\
w^{|(A\cup e)\oplus M|+1}
=
w\cdot w^{|(A\cup e)\oplus M|}.
\]
Therefore for all cases we can conclude that,
\[
\psi(A)\lambda_w(A)P_w(A,A\triangle e)
\ \ge\
\frac{1}{2\Delta\,w^{-1}}\;w^{|(A\cup e)\oplus M|},
\]
which is the desired bound.

\end{proof}

In addition to the above lemma, we will use the following few set-theoretic facts.

\begin{claim}\label{claim_one}
    Let $I, F, A$ be finite sets such that $I \cap F \subseteq A \subseteq I \cup F$. Then for any finite set $M$, and sets $I’ = I \oplus M$, etc., we also have $I’ \cap F’ \subseteq A’ \subseteq I’ \cup F’$
\end{claim}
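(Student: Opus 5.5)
The claim is a pointwise statement about elements, so I will argue coordinatewise rather than manipulating set identities. The hypothesis $I\cap F\subseteq A\subseteq I\cup F$ says, for each element $x$ of the ambient union, that $x\in I$ and $x\in F$ imply $x\in A$, and that $x\in A$ implies $x\in I$ or $x\in F$. Equivalently, on every coordinate where $I$ and $F$ agree, $A$ agrees with them as well; where $I$ and $F$ disagree, $A$ is unconstrained. Writing $\mathbf 1_X(x)$ for the indicator, the hypothesis becomes: for every $x$ with $\mathbf 1_I(x)=\mathbf 1_F(x)$ we have $\mathbf 1_A(x)=\mathbf 1_I(x)$. Conversely, this pointwise condition gives back both inclusions. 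If $x\in I\cap F$, the two indicators agree with value $1$, so $x\in A$. If $x\notin I\cup F$, they agree with value $0$, so $x\notin A$, which is the contrapositive of $A\subseteq I\cup F$.

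With this reformulation the claim is immediate, because symmetric difference with $M$ acts on indicators by $\mathbf 1_{X\oplus M}(x)=\mathbf 1_X(x)\oplus\mathbf 1_M(x)$. This is the same bijection applied to each of $I,F,A$ at a fixed coordinate. Fix $x$. If $\mathbf 1_{I'}(x)=\mathbf 1_{F'}(x)$, then adding $\mathbf 1_M(x)$ to both sides gives $\mathbf 1_I(x)=\mathbf 1_F(x)$. The hypothesis then yields $\mathbf 1_A(x)=\mathbf 1_I(x)$, and adding $\mathbf 1_M(x)$ back gives $\mathbf 1_{A'}(x)=\mathbf 1_{I'}(x)$. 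Hence $A',I',F'$ satisfy the pointwise condition, and by the equivalence above $I'\cap F'\subseteq A'\subseteq I'\cup F'$.

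The steps in order are: (i) state and verify the pointwise equivalence, with both directions; (ii) note that $\oplus M$ is a coordinatewise involution; (iii) conclude. The only place requiring care is step (i), specifically the converse direction showing that the pointwise condition recovers the upper inclusion $A'\subseteq I'\cup F'$. Beyond that, everything is a finite check. One could alternatively split $x$ into the cases $x\in M$ and $x\notin M$: for $x\notin M$ nothing changes, and for $x\in M$ all memberships flip, which exchanges the roles of "$x\in I\cap F$" and "$x\notin I\cup F$". This yields the same conclusion and can serve as a sanity check.
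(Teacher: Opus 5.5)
Your proof is correct and takes essentially the same approach as the paper. Both arguments work pointwise on membership bits: the hypothesis forces $a=i$ wherever $i=f$, and taking $\oplus M$ XORs every bit with the same $m$, which preserves that agreement condition. Your closing ``sanity check'' (splitting on $x\in M$ versus $x\notin M$) is exactly the paper's proof, which notes that $m=1$ swaps the allowed patterns $(i,f,a)=(0,0,0)$ and $(1,1,1)$.
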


\begin{proof}
We prove the claim pointwise. Fix an element $x$. let $i,f,m,a \in \{0,1\}$ denote membership of $x$ in the sets $I, F, M, A$, respectively.
The assumption is equivalent to
\begin{align}
    (i, f) = (0,0) ~ &\Rightarrow ~ a = 0 \\
    (i, f) = (1,1) ~ &\Rightarrow ~ a = 1
\end{align}    
Now taking symmetric difference with $m$ modifies these numbers by
$i’ = i \oplus m$, $f’ = f \oplus m$, and $a’= a \oplus m$.
which swaps the allowed cases $(i,f,a) = (0,0,0)$ and $(1,1,1)$ when $m=1$ and does nothing when $m=0$.
\end{proof}

\begin{claim}\label{claim_two}
    Let $I, F, B$ be finite sets satisfying $I\cap F \subseteq B \subseteq I \cup F$, and let $U = I \oplus F \oplus B$. Then
    \begin{equation}
        \abs{I} + \abs{F} = \abs{U} + \abs{B}
    \end{equation}
\end{claim}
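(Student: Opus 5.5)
We argue pointwise, exactly as in the proof of Claim~\ref{claim_one}, since both sides of the identity are sums over elements of indicator functions. Fix an element $x$ and let $i,f,b \in \{0,1\}$ denote its membership in $I$, $F$, $B$. Membership in $U = I\oplus F\oplus B$ is then $u = i\oplus f\oplus b$. It suffices to show that $i + f = u + b$ as integers for every $x$; summing over all $x$ in the (finite) union $I\cup F$, which contains $B$ and hence also $U$, then gives $\abs{I}+\abs{F}=\abs{U}+\abs{B}$.

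The hypothesis $I\cap F\subseteq B\subseteq I\cup F$ translates, as before, into the two constraints $(i,f)=(0,0)\Rightarrow b=0$ and $(i,f)=(1,1)\Rightarrow b=1$. We check the three resulting cases.
\begin{itemize}
\item If $(i,f)=(0,0)$, then $b=0$ and $u=0$, so both sides equal $0$.
\item If $(i,f)=(1,1)$, then $b=1$ and $u=1\oplus1\oplus1=1$, so both sides equal $2$.
\item If $i\neq f$, then $b$ is unconstrained, and $u = 1\oplus b = 1-b$, so $u+b = 1 = i+f$.
\end{itemize}
This exhausts all possibilities, so the pointwise identity holds and the claim follows.

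There is no real obstacle here. The only step needing care is to note that in the mixed case $i\neq f$, the XOR $1\oplus b$ coincides with the integer $1-b$, which is what makes the identity hold with integer addition rather than mod~$2$. In the later congestion bound, this claim is presumably applied with the primed sets of Claim~\ref{claim_one}, i.e.\ with $I\oplus M$, $F\oplus M$ and $B\oplus M$. That application is legitimate because Claim~\ref{claim_one} preserves the sandwich hypothesis.
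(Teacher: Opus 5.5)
Your proof is correct, and it takes a slightly different route from the paper's. The paper argues structurally. It asserts that the sandwich hypothesis gives the two set identities $U\cap B = I\cap F$ and $U\cup B = I\cup F$, and then applies inclusion--exclusion twice: $\abs{I}+\abs{F} = \abs{I\cup F}+\abs{I\cap F} = \abs{U\cup B}+\abs{U\cap B} = \abs{U}+\abs{B}$. You instead check the integer identity $i+f=u+b$ element by element and then sum. The paper's version isolates the conceptual reason for the claim, namely that the pair $(U,B)$ has the same union and intersection as $(I,F)$, but it does not justify those two identities. Your three cases are exactly what verifies them: an element lies in neither, both, or exactly one of $U,B$ according as it lies in neither, both, or exactly one of $I,F$. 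So your argument is self-contained and matches the style of the proof of Claim~\ref{claim_one}. You also correctly note that $U,B\subseteq I\cup F$, so summing over $I\cup F$ gives the full cardinalities, and that the claim is later applied to the $M$-shifted sets through Claim~\ref{claim_one}.
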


\begin{proof}
    By assumption it follows that $U \cap B = I \cap F$ and $U \cup B = I \cup F$. Then, by the inclusion-exclusion principle
    \begin{equation}
        \abs{I} + \abs{F} = \abs{I \cup F} + \abs{I \cap F} = \abs{U \cup B} + \abs{U \cap B} = \abs{U} + \abs{B}.
    \end{equation}
\end{proof}

We now prove our main result.

\begin{proof}[Proof of Theorem \ref{thm:fast_mixing_long}]
$ $\newline %

\par\noindent\textbf{The canonical paths}\\[0.5em]
We begin by specifying our choice of canonical paths. We choose the same paths as in \cite{Collevecchio_2016}.

Fix a lexicographical ordering of the vertices of $G$, and denote this by $[n]$. This labeling induces a lexicographical total order on the edges, which in turn induces a lexicographical total order on the set of subgraphs of $G$. For each cycle, we can further fix an orientation, by demanding that from the lowest labeled vertex, we move to the lowest labeled among its neighbors.

In order to move from any state of the worm process $I \in \mathcal W$ to any $F \in \mathcal C_0$, it suffices to update, exactly once, each edge in the symmetric difference $I \oplus F$. Choosing a set of paths doing such a `minimal' update will guarantee that $\ell_{\rm max} \leq \abs{E} \leq \Delta \cdot \abs{V}$ (remember that $\Delta$ is the maximum vertex degree in $G$).

We can decompose the set of edges that need to be processed into components, which we can process one at a time. To this end, note that $\partial (I \oplus F) = \partial I$ (since $F \in \mathcal C_0$), and hence $I\oplus F$ has either none or two odd vertices. 
Further note that if it has two odd vertices (i.e. if $I \in \mathcal C_2$), then the two odd vertices lie in the same connected component of $I \oplus F$, and call these vertices $u$ and $v$. Now, let $B_0$ be the shortest path between $u$ and $v$ in $I \oplus F$. Then, $I \oplus F \setminus B_0 \in \mathcal C_0$, and this set can be decomposed into a disjoint union of cycles in $G$. Decompose $I \oplus F \setminus B_0$, order the cycles using the vertex ordering, and denote the cycles by $B_1, B_2, \dots B_k$. In summary, we obtain a decomposition
\begin{equation}
    I \oplus F = \bigcup_{i = 0}^k B_i
\end{equation}
where $B_0$ is a (possibly empty) path between two vertices, and the $B_i$ for $i > 0$ are cycles in $G$.

Now, define the canonical path $\gamma_{I, F}$ as follows. The initial state is $I$. 
First, process the path $B_0$. Note that if $I \in \mathcal C_0$, $B_0$ is empty which is equivalent to skipping this first step. If $B_0 \neq \emptyset$ begin either at $u$ or $v$ (whichever has the lowest label) and unwind the path $B_0$ (by adding the `next' edge of the path). This inverts the occupation of all edges in $B_0$ relative to $I$, that is the state after this unwinding is $I \oplus B_0$. 
Next, unwind the cycles $B_i$ in order, in each cycle beginning at the lowest labeled edge and proceeding according to its orientation. 

It is easy to see that the set $\Gamma = \{\gamma_{I, F}\,;~ I, F \in \mathcal W \times \mathcal C_0 \}$ is indeed a valid set of canonical paths.

\vspace{1em}
\par\noindent\textbf{The injective map}\\[0.5em]
We proceed by upper-bounding the congestion for the set of canonical paths specified above. Consider the set of paths from $\mathcal C_k$ to $\mathcal C_0$ that use a given transition $t$
\begin{equation}\label{eq:set_of_paths}
    \mathscr P_{t, k} := \{(I, F) \in \mathcal C_k \times \mathcal C_0 \,;~ t \in \gamma_{I, F} \in \Gamma \}
\end{equation}
as well as the total set of paths using a given transition, $\mathscr P_t := \mathscr P_{t, 0} \cup \mathscr P_{t, 2}$.

We follow standard arguments, and introduce an injective map 
\begin{equation}\label{eq:eta_def}
    \eta_t: \mathscr P_t \to \mathcal C_0 \cup \mathcal C_2 \cup \mathcal C_4 \,;
        ~ \eta_{t = (A, A')}(I, F) := I \oplus F \oplus (A \cup A')
\end{equation}

To establish injectivity, we show that $\eta_t(I,F)$ allows the reconstruction of the pair $(I,F)$. Note that $A' = A \oplus e$ for some $e \in E$, and as a result $\eta_t(I,F) = I \oplus F \oplus (A \cup e)$. It follows that we can recover $I \oplus F$ directly via $I \oplus F = \eta_t(I,F) \oplus (A \cup e)$. From $I \oplus F$, we can fully reconstruct the transition path $\gamma_{I,F}$ using the canonical path construction previously outlined. Since each edge in $I \oplus F$ is flipped exactly once along $\gamma_{I,F}$, the transition $t = (A, A \oplus e)$ uniquely identifies a position along $\gamma_{I,F}$, namely the step at which edge $e$ is flipped. We can then unwind the remaining edges in $I \oplus F$ as specified by $\gamma_{I,F}$ to recover $F$. Finally, we recover $I$ via $I = \eta_t(I,F) \oplus (A \cup e) \oplus F$. Since $(I,F)$ is uniquely determined by $\eta_t(I,F)$, we conclude that $\eta_t$ is injective.

We will now show the following inequality, which will allow us to bound the congestion
\begin{equation}\label{eq:eta_bound}
    \frac{1}{\pi(\mathcal C_0)} \frac{\pi(I) \pi(F)}{\pi(A) P(A, A')} \leq
        \frac{2 \Delta w^{-1}}{\lambda(\mathcal C_0)} \psi(I) \lambda(\eta_t(I, F))
\end{equation}
Note that this inequality is slightly modified compared to \cite{Collevecchio_2016}, since we will be using Lemma \autoref{lem:lemma24modified}, our modification of Lemma 2.4 of \cite{Collevecchio_2016}.

Let $t = (A, A\oplus e)$ for some $A \in \mathcal W$ and some $e \in E$. Then, 
\begin{subequations}
\begin{align}
    \frac{1}{\pi(\mathcal C_0)} \frac{\pi(I) \pi(F)}{\pi(A) P(A, A')} 
        &= \frac{1}{\abs{V} \lambda(\mathcal C_0)} \frac{\Psi(I) \lambda(I) \abs{V} \lambda(F)}{\Psi(A) \lambda(A) P(A, A')} \\
        &\leq \frac{1}{\lambda(\mathcal C_0)} \Psi(I) \lambda(I) \lambda(F) \frac{2 \Delta w^{-1}}{w^{(A\cup e)\oplus M}} \\
        & = \frac{2 \Delta w^{-1}}{\lambda(\mathcal C_0)} \Psi(I) w^{\abs{I \oplus M} + \abs{F \oplus M} - \abs{(A\cup e) \oplus M}} \\
        & = \frac{2 \Delta w^{-1}}{\lambda(\mathcal C_0)} \Psi(I) w^{\abs{\eta_t(I, F)\oplus M}} \\
        & = \frac{2 \Delta w^{-1}}{\lambda(\mathcal C_0)} \Psi(I) \lambda(\eta_t(I, F))
\end{align}
\end{subequations}
In the first line, we have simply substituted the definition of the PS measure [\autoref{eq:PS_measure_app}].
In the second line we have used lemma \autoref{lem:lemma24modified}.
In the third line, we have substituted the definition of the measure $\lambda$ in \autoref{eq:lambda_measure_app}. 
In the fourth line, we have used a set-theoretic identity between $I$, $F$, $A \cup e$, and $M$ which we will explain below. 
The final line again is just using the definition of $\lambda$.

Let us derive the fact that
\begin{equation}
    \abs{I\oplus M} + \abs{F\oplus M} - \abs{(A \cup e) \oplus M} = \abs{\eta_t(I, F) \oplus M}
\end{equation}
which we used above. 
Note that by definition, the path $\gamma_{I, F}$ modifies edges in $I \oplus F$. Therefore, no edges in $I \cap F$ are updated when traversing $\gamma_{I, F}$. Conversely, the only edges that are ever updated when traversing $\gamma_{I, F}$ are in $I \cup F$. It follows that $A \cup e$ satisfies the constraint $I \cap F \subseteq A \cup e \subseteq I \cup F$. By claim \autoref{claim_one}, this means that also $I' \cap F' \subseteq (A \cup e)' \subseteq I' \cup F'$, where the $'$ denotes taking the symmetric difference with $M$. Using claim \autoref{claim_two} then yields
\begin{subequations}
\begin{align}
    \abs{I \oplus M} + \abs{F \oplus M} - \abs{(A \cup e) \oplus M}
        &= \abs{(I \oplus M) \oplus (F \oplus M) (\oplus (A \cup e) \oplus M)}\\
        &= \abs{F \oplus M \oplus (A \cup e) \oplus M}\\
        &= \abs{\eta_t(I, F) \oplus M}
\end{align}
\end{subequations}
which establishes \autoref{eq:eta_bound}.

\vspace{1em}
\par\noindent\textbf{The congestion bound}\\[0.5em]
We now use the injective map $\eta_t$, together with \autoref{eq:eta_bound} and our assumption of property \autoref{prop:bounded_sus}, to derive a congestion bound.

Let $t = (A, A\oplus e)$ be a maximally congested transition. Then
\begin{align}
    \varphi(\Gamma) &\leq \sum_{(I,F) \in \mathscr P_t} \frac{2 \Delta w^{-1}}{\lambda (\mathcal C_0)} \Psi(I) \lambda(\eta_t(I, F))\\
        &= \frac{2 \Delta w^{-1}}{\lambda(\mathcal C_0)} 
            \left[ \sum_{(I, F) \in \mathscr P_{t, 0}} \abs{V}\lambda(\eta_t(I, F)) +
                \sum_{(I, F) \in \mathscr P_{t, 2}} 2 \lambda(\eta_t(I, F))
            \right] \\
        & = \frac{2 \Delta w^{-1}}{\lambda(\mathcal C_0)}  \left[
            \abs{V} \lambda(\eta_t(\mathscr P_{t, 0})) 
            + 2 \lambda(\eta_t(\mathscr P_{t, 2}))
        \right],
\end{align}
where in the second line, we used that $\eta_t$ is an injection.

Now, as noted above $\eta_t(\mathscr P_{t, 0}) \subseteq \mathcal W$ and $\eta_t(\mathscr P_{t, 2}) \subseteq \mathcal C_0 \cup \mathcal C_2 \cup \mathcal C_4$, so by a union bound
\begin{align}
    \varphi(\Gamma) 
    &\leq \frac{2 \Delta w^{-1}}{\lambda(\mathcal C_0)} \left[
            \abs{V} \lambda\left( \mathcal C_0 \cup \mathcal C_2 \right) +
            2 \lambda\left( \mathcal C_0 \cup \mathcal C_2 \cup \mathcal C_4 \right)
            \right] \\
    &\leq 2 \Delta w^{-1} \left[ 
            (\abs{V}+2) + 
            (\abs{V}+2)\frac{\lambda(\mathcal C_2)}{\lambda(\mathcal C_0)} + 
            2 \frac{\lambda(\mathcal C_4)}{\lambda(\mathcal C_0)} 
        \right] \\
    &\leq 2 \Delta w^{-1} \left[(1 + \chi_2)(2+\abs{V}) + 2\chi_4 \right]
\end{align}

where the last line follows by property \autoref{prop:bounded_sus}.

Together with the upper bound on the length of any path, the mixing-time bound then follows from Theorem \autoref{thm:canonical_paths}.

\end{proof}

\end{widetext}

\end{document}